\newcommand*{\fermion}{ \includegraphics[height=0.5\baselineskip]{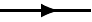} }
\newcommand*{\photon}{ \includegraphics[height=0.5\baselineskip]{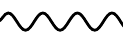} }
\newcommand*{\scalar}{ \includegraphics[height=0.5\baselineskip]{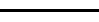} }
\newcommand*{\dashed}{ \includegraphics[height=0.5\baselineskip]{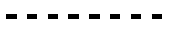} }
\newcommand*{\dotted}{ \includegraphics[height=0.5\baselineskip]{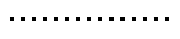} }
\newcommand{\dslash}{\mathbin{
  \mathchoice{/\mkern-6mu/}
    {/\mkern-6mu/}
    {/\mkern-5mu/}
    {/\mkern-5mu/}}}
\newcommand{\con}[2]{#1\!\dslash\!#2}
\newcommand{\cnh}[2]{#1/#2}
\newcommand{\nquad}{\!\!\!\!}
\newcommand{\nqquad}{\!\!\!\!\!\!\!\!}
\newcommand*{\defeq}{\mathrel{\vcenter{\baselineskip0.5ex \lineskiplimit0pt
                     \hbox{\scriptsize.}\hbox{\scriptsize.}}}%
                     =}
\newcommand*{\bdefeq}{= %
		\mathrel{\vcenter{\baselineskip0.5ex \lineskiplimit0pt
                     \hbox{\scriptsize.}\hbox{\scriptsize.}}}}
\DeclareMathOperator{\adj}{adj}
\DeclareMathOperator{\id}{id}
\DeclareMathOperator{\Li}{Li}
\DeclareMathOperator{\sgn}{sgn} 
\DeclareMathOperator{\tr}{tr}
\newcolumntype{L}[1]{>{\raggedright\let\newline\\\arraybackslash\hspace{0pt}}m{#1}}
\newcolumntype{C}[1]{>{\centering\let\newline\\\arraybackslash\hspace{0pt}}m{#1}}
\newcolumntype{R}[1]{>{\raggedleft\let\newline\\\arraybackslash\hspace{0pt}}m{#1}}
\newsavebox{\@brx}
\newcommand{\llangle}[1][]{\savebox{\@brx}{\(\m@th{#1\langle}\)}%
	\mathopen{\copy\@brx\kern-0.5\wd\@brx\usebox{\@brx}}}
\newcommand{\rrangle}[1][]{\savebox{\@brx}{\(\m@th{#1\rangle}\)}%
	\mathclose{\copy\@brx\kern-0.5\wd\@brx\usebox{\@brx}}}
\newcommand{\al}[2][]{\begin{align#1}#2\end{align#1}}
\newcommand{\sa}{\mathsf{a}}
\newcommand{\ssb}{\mathsf{b}}
\newcommand{\ssc}{\mathsf{c}}
\newcommand{\sd}{\mathsf{d}}
\newcommand{\su}{\mathsf{u}}
\newcommand{\sv}{\mathsf{v}}
\newcommand{\sw}{\mathsf{w}}
\newcommand{\sx}{\mathsf{x}}
\newcommand{\sy}{\mathsf{y}}
\newcommand{\sA}{\mathsf{A}}
\newcommand{\sP}{\mathsf{P}}
\newcommand{\cB}{\mathcal{B}}
\newcommand{\cC}{\mathcal{C}}
\newcommand{\cD}{\mathcal{D}}
\newcommand{\cE}{\mathcal{E}}
\newcommand{\cO}{\mathcal{O}}
\newcommand{\cP}{\mathcal{P}}
\newcommand{\cT}{\mathcal{T}}
\newcommand{\rd}{\mathrm{d}}
\newcommand{\ro}{\mathrm{o}}
\newcommand{\NN}{\mathbb{N}}
\newcommand{\ZZ}{\mathbb{Z}}
\newcommand{\RR}{\mathbb{R}}
\newtheorem{theorem}{Theorem}[section]
\newtheorem{example}[theorem]{Example}
\newtheorem{lemma}[theorem]{Lemma}
\newtheorem{definition}[theorem]{Definition}
\newtheorem{proposition}[theorem]{Proposition}
\newtheorem{remark}[theorem]{Remark}
\newcommand{\bp}[3][\Gamma]{ \beta_{#1}^{(#2|#3)} }
\newcommand{\cp}[3][\Gamma]{ \chi_{#1}^{(#2|#3)} }
\newcommand{\CP}[3][\Gamma]{ \mathrm{X}_{#1}^{#2, #3} }
\newcommand{\kp}[1][\Gamma]{ \Psi_{#1} }
\newcommand{\ssp}[1][\Gamma]{ \Phi_{#1} }
\newcommand{\vssp}[1][\Gamma]{ \varphi_{#1} }
\newcommand{\chGamma}[1][\gamma]{ \cnh{\Gamma}{#1} }
\newcommand{\reffig}[1]{fig. \ref{#1}}
\newcommand{\refeq}[1]{eq. (\ref{#1})}
\let\svthefootnote\thefootnote
\newcommand\blankfootnote[1]{%
  \let\thefootnote\relax\footnotetext{#1}%
  \let\thefootnote\svthefootnote%
}
\begin{document}

	\title{\textbf{Dodgson polynomial identities}}
	\date{}
	\author{Marcel Golz}
	\affil{Institut f\"ur Physik, Humboldt-Universit\"at zu Berlin\\ Newtonstra\ss e 15, D-12489 Berlin, Germany}
	\maketitle
	\thispagestyle{empty}

	\begin{abstract}
		Dodgson polynomials appear in Schwinger parametric Feynman integrals and are closely related to the well known Kirchhoff (or first Symanzik) polynomial. In this article a new combinatorial interpretation and a generalisation of Dodgson polynomials are provided. This leads to two new identities that relate large sums of products of Dodgson polynomials to a much simpler expression involving powers of the Kirchhoff polynomial. These identities can be applied to the parametric integrand for quantum electrodynamics, simplifying it significantly. This is worked out here in detail on the example of superficially renormalised photon propagator Feynman graphs, but works much more generally.
	\end{abstract}
	\vspace{5mm}
	
	\setcounter{tocdepth}{2}
	\tableofcontents

	\blankfootnote{\texttt{mgolz@physik.hu-berlin.de}}

\newpage
\section{Introduction}

	\subsection{Background}
		
		Perturbative quantum field theory is the standard framework used by particle physicists to predict and explain high-energy experiments, e.g. at modern colliders like the LHC. This necessitates the computation of a large number of complicated integrals. These Feynman integrals grow quickly in number and complexity, so on the one hand one wants to find methods to compute them as efficiently as possible, and on the other hand one looks for hidden structures that reduce the amount of necessary computations.
		
		To that end, the Schwinger parametric representation of Feynman integrals has proved to be very useful in recent years. It was already known in the early days of quantum field theory \cite{ Nambu_1957_ParaRep, Nakanishi1957, Nakanishi1961, Symanzik1958, Kinoshita1962, BergereZuber_1974_ParaRenorm, Bergere1976 }, but fell somewhat out of favour, since it was not as suitable for direct integration as other versions of Feynman integrals. This problem was rectified when, building on the connections to algebraic geometry found in \cite{BlochEsnaultKreimer_2006_Motives}, an algorithm for the systematic integration of parametric Feynman integrals was developed \cite{Brown_2009_Massless, Brown_2010_Periods} and subsequently implemented in computer algebra \cite{Panzer_2014_Algorithms}.
		
		The renewed interest in parametric Feynman integrals has already yielded many interesting results \cite{BrownSchnetz_2012_K3, BrownYeats_2011_SpanningForest, BrownSchnetzYeats_2014_C2, KreimerSarsSuijlekom_2013_QuantGauge, BrownKreimer_2013_AnglesScales, KompanietsPanzer_2016, BBKP_ParaAnn_2018, Bloch_Kreimer_2015 }. However, those have all been confined to scalar quantum field theories. In that case the (unrenormalised) integral is simply of the form
		\al{
			\phi_{\Gamma} = \int_{\RR_+^{|E_{\Gamma}|}} \rd\alpha_1\dotsm \rd\alpha_{E_{\Gamma}} \frac{\exp\Big( -\frac{\ssp}{\kp} \Big)}{\kp^2},
		}
		where $\Gamma$ is a Feynman graph and $\kp$, $\ssp$ are two homogenous polynomials that will be discussed extensively below.

		In gauge theories this becomes much more complicated and until recently the integrand could only be expressed in terms of complicated derivatives of the scalar integrand \cite{CvitanovicKinoshita_1974_FRPS, KreimerSarsSuijlekom_2013_QuantGauge}. For quantum electrodynamics the combinatorics of these derivatives have been analysed in \cite{Golz_2017_CyclePol} and it was found that they can be expressed explicitly in terms of graph polynomials similar to $\kp$ and $\ssp$. The other complication of QED, the tensor structure consisting of products of Dirac matrices, was dealt with in \cite{Golz_2017_Traces}. Combining these results yields an (unrenormalised, massless) parametric Feynman integral for QED that is of the form
		\al{
			\phi_{\Gamma} = \int_{\RR_+^{|E_{\Gamma}|}} \rd\alpha_1\dotsm \rd\alpha_{E_{\Gamma}} \frac{\exp\Big( -\frac{\ssp}{\kp} \Big)}{ \kp^{2+h_1(\Gamma)} } 
																					\sum_{l=0}^{h_1(\Gamma)} \frac{ I_{\Gamma}^{(l)} }{\kp^l} \label{eq_int_unren}
		}		
		where each $I_{\Gamma}^{(l)}$ is essentially (cf. \refeq{eq_sumdetail} and sec. \ref{sec_application}) just a sum over certain subsets of chord diagrams $D$,
		\al{
			\sum_D (-2)^{\tilde c(D)} X_D,
		}
		where $X_D$ is a product of the polynomials from \cite{Golz_2017_CyclePol} and $\tilde c(D)$ is an integer determined by the combinatorial properties of $D$. In our main results, theorems \ref{theo_main} and \ref{theo_main_2}, we prove that the sums in $I_{\Gamma}^{(0)}$ and $I_{\Gamma}^{(1)}$ are equal to a simpler sum of the form
		\al{
			2^{-k}\sum_{l=1}^{h_1(\Gamma)} (-\kp)^{h_1(\Gamma)-l+k} (l+1)!\ Z_{\Gamma}^k\big|_l\qquad\qquad\text{for } k=0,1,	\label{eq_result_intro}
		}
		where $Z_{\Gamma}^k\big|_l$ is defined in sec. \ref{sec_defpp}. This leads to cancellations of Kirchhoff polynomials $\kp$ in \refeq{eq_int_unren}, significantly simplifying the integrand. On the concrete example of a massless photon propagator graph in Feynman gauge we show that the cases of $k=0,1$ suffice to express the superficially renormalised integral with a simple entirely scalar integrand.
		
		\paragraph{Generalisations and extensions.}
			The results of this article are not just applicable to this rather specific photon propagator. In a general gauge one gets another sum and our results apply to each summand (see \refeq{eq_gengauge}). The same holds for the inclusion of masses, if one assumes quite reasonably that all fermion masses are identical. The parametric renormalisation of massive integrals is much more cumbersome than the simple renormalisation procedure that we employ in section \ref{subsec_structure}, but in principle not a problem \cite{BrownKreimer_2013_AnglesScales}. 
			
			For a fermion propagator and a vertex with one external momentum set to zero the differences are basically just a few different factors in the computations of section \ref{subsec_structure} (e.g. the fermion propagator would be proportional to $\slashed q$ rather than $q^2q^{\mu\nu}-q^{\mu}q^{\nu}$). The step to the full vertex function is more complicated and needs more attention in future work. Much of this article and especially section \ref{sec_inc} is based on the assumption that the polynomial $\ssp$ factorises into $q^2\vssp$ with a $q$-independent $\vssp$. It will need to be seen how much the results have to be modified if one has instead a polynomial $\ssp = q_1^2 \vssp[\Gamma,1] + q_2^2 \vssp[\Gamma,2] + (q_1+q_2)^2 \vssp[\Gamma,3]$. 

			Finally, in order to include subdivergences one also needs to understand the higher order terms $I_{\Gamma}^{(k)}$ with $k\geq 2$. In \refeq{eq_result_intro} we already suggest what this should look like, although it is not yet entirely clear how the $Z_{\Gamma}^k\big|_l$ have to be defined for $k\geq 2$.\\[3mm]


	\subsection{Graph polynomials}
		
	A graph $G$ is an ordered pair $(V_G, E_G)$ of the set of \textit{vertices} $V_G = \{ v_1, \dotsc, v_{|V_G|}\}$ and the set of \textit{edges} $E_G = \{ e_1, \dotsc, e_{|E_G|}\}$, together with a map $\partial: E_G \to V_G \times V_G$. We assume that $G$ is connected and assign to each edge $e\in E_G$ a direction by specifying an ordered pair $\partial(e) = ( \partial_-(e), \partial_+(e) )$, where the vertex $\partial_-(e)\in V_G$ is called start or initial vertex while $\partial_+(e)\in V_G$ is called target or final vertex. In a common abuse of notation subgraphs $g\subseteq G$ are identified with their edge set $E_g\subseteq E_G$. In the rare cases in which the edge set does not uniquely identify the subgraph, i.e. when $g$ contains isolated vertices without incident edges, it will be mentioned explicitly. The number of independent cycles (loops, in physics nomenclature) is denoted $h_1(G)$, which is the first Betti number of the graph.
	
	Graph polynomials are polynomial valued invariants of a graph. The polynomials that we are interested in all have in common that their variables are the Schwinger parameters $\alpha = (\alpha_e)_{e \in E_G}$ assigned to the edges of a graph (which distinguishes them from other famous graph polynomials like the Tutte polynomial \cite{Tutte_1954, Tutte_2004} and its various specialisations like the chromatic polynomial \cite{Birkhoff_1912, Whitney_1932_coloring}). In the following we briefly introduce and review some properties of six such graph polynomials that appear in Feynman integrals.

		\subsubsection{Kirchhoff and Symanzik}
				
		A tree $T$ is a graph that is connected and simply connected, i.e. it has no cycles. A disjoint union of trees $F=\sqcup_{i=1}^kT_i$ is called a $k$-forest, such that a tree is a $1$-forest. If all vertices of $G$ are contained  in such a subgraph $T$ or $F$, then it is called a spanning tree or spanning forest of $G$ and we denote with $\cT_G^{[k]}$ the set of all such spanning $k$-forests.
	
	The Kirchhoff polynomial, which is especially in the physics literature also often called the first Symanzik polynomial, is then defined as
	\al{
		\kp[G](\alpha) \defeq  \sum_{T\in \cT_G^{[1]}} \prod_{e\notin T} \alpha_e.
	}
	It has been known for a very long time and was first introduced by Kirchhoff in his study of electrical circuits \cite{Kirchhoff_1847_Kirchhoff}. In the 1950s it was then rediscovered in quantum field theory \cite{Symanzik1958}. We will often make use of the abbreviation
	\al{
		\alpha_S \defeq \prod_{e \in S}\alpha_e 
	}
	for any edge subset $S\subset E_G$, such that $\kp[G] = \sum_T \alpha_{E_G \setminus T}$. The Kirchhoff polynomial is homogeneous of degree $h_1(G)$ in $\alpha$ and linear in each $\alpha_e$. Moreover, it also satisfies the famous contraction-deletion relation\footnote{Note that we use the double-slash to denote contraction of an edge subset, as opposed to contraction of a subgraph $\chGamma$ in the Hopf algebra of Feynman graphs. The two notions differ if the subgraph in question is a propagator Feynman graph, but in this article we will not encounter this problem.}
	\al{
		\kp[G] = \kp[G\dslash e] + \alpha_e\kp[G\setminus e].
	}
	This means in particular that $\kp[G\dslash e] = \left.\kp[G]\right|_{\alpha_e=0}$ and $\kp[G\setminus e] = \partial_e \kp[G]$, where $\partial_e$ denotes the partial derivative w.r.t. $\alpha_e$. The definition of the Kirchhoff polynomial is commonly generalised to disjoint unions of graphs $G=\sqcup_i G_i$ (i.e. graphs with multiple connected components) via
	\al{
		\kp[G] \defeq \prod_i \kp[G_i]. \label{eq_kp_disj}
	}
	Note that due to this definition one needs to exclude bridges from the contraction-deletion relation, since $\partial_e \kp[G]=0$ for any bridge $e$, whereas this definition gives $\kp[G\setminus e]$ as the product of the polynomials of its two connected components.\\

	Many properties of graphs can be captured by matrices and we discuss here some of the well known relations between graphs, matrices and the Kirchhoff polynomial. The incidence matrix $I$ is an $|E_G| \times |V_G|$ matrix whose entries are defined as
	\al{
		I_{ev} \defeq \begin{cases} \hspace{3mm} \pm1 &\text{ if $v = \partial_{\pm}(e)$},\\
						    \hspace{3mm} 0 &\text{ if $e$ is not incident to $v$}.  \end{cases}
	}
	The Laplacian matrix $L$ is defined as the difference of the degree and adjacency matrices of a graph. Since we will not need either of those two going forward we instead use a well known identity to define the Laplacian as the product of the incidence matrix and its transpose,
	\begin{align}
		L \defeq I^TI. 
	\end{align}
	Instead of the full matrices we will actually always need the smaller matrices in which one column (of $I$) or one column and one row (of $L$) corresponding to an arbitrarily chosen vertex $v_0$ of $G$ are deleted. From now on we use $I'$ and $L'$ for these $|E_G| \times |V_G|-1$ and $|V_G|-1 \times |V_G|-1$ matrices, called reduced incidence and reduced Laplacian matrix.
	
	Finally, let $A$ be the diagonal $|E_G|\times |E_G|$ matrix with entries $A_{ij} \defeq \delta_{ij} \alpha_{e_i}$. With this setup the well known \textit{Matrix-Tree-Theorem} \cite{Chaiken_1978} tells us that
	\al{
		\kp[G] = \alpha_{E_G} \det( {I'}^TA^{-1}I' ). 
	}
	Note that here we have the inverse $A^{-1}$, with entries $A_{ij}^{-1} = \delta_{ij} \alpha_{e_i}^{-1}$. We call the matrix in that determinant the weighted reduced Laplacian and denote it with $\tilde L' = {I'}^TA^{-1}{I'}$.
	
	\begin{remark}
		The polynomial $\kp[G]^* = \det( {I'}^TAI' )$ is sometimes called dual Kirchhoff polynomial. If $G$ is planar then it is the Kirchhoff polynomial of its planar dual graph $G^*$.
	\end{remark}

	\noindent Often $I'$ and $A$ are arranged in a block matrix
	\al{
		M \defeq  \begin{pmatrix} A & I' \hphantom{x}\\[2mm] -{I'}^T & 0 \hphantom{x} \end{pmatrix}. \label{eq_graphmatrix}
	}
	This is called the expanded Laplacian or graph matrix of $G$ \cite{Brown_2010_Periods, BlochEsnaultKreimer_2006_Motives}, and with the block matrix determinant identity
	\al{
		\det \begin{pmatrix} S & T\\[1mm] U & V \end{pmatrix} = \det(S)\det(V - US^{-1}T) \label{eq_blockID}
	}
	one sees that
	\al{
		\det(M) = \det(A)\det( {I'}^TA^{-1}I' ) = \kp[G].  \label{eq_detgmkp}
	}
	
	 \begin{example}\label{ex_kp1}
	 	Let $G$ be the wheel with three spokes depicted on the left of \reffig{01_WheelFeynman}. It has 16 spanning trees and the Kirchhoff polynomial is
		\al{\nonumber 
			\kp[G] &= \alpha_1\alpha_2\alpha_4 + \alpha_1\alpha_2\alpha_5 + \alpha_1\alpha_3\alpha_4 + \alpha_1\alpha_3\alpha_5
				+ \alpha_1\alpha_2\alpha_6 + \alpha_1\alpha_3\alpha_6 + \alpha_1\alpha_4\alpha_6 + \alpha_1\alpha_5\alpha_6\\[2mm]
				&+ \alpha_2\alpha_3\alpha_4 + \alpha_2\alpha_3\alpha_5 + \alpha_2\alpha_4\alpha_5 + \alpha_3\alpha_4\alpha_5
				+ \alpha_2\alpha_3\alpha_6 + \alpha_2\alpha_4\alpha_6 + \alpha_3\alpha_5\alpha_6  + \alpha_4\alpha_5\alpha_6.
			}
	 \end{example}

	\begin{figure}[h]
		\begin{center}
				\includegraphics[width=0.3\textwidth]{./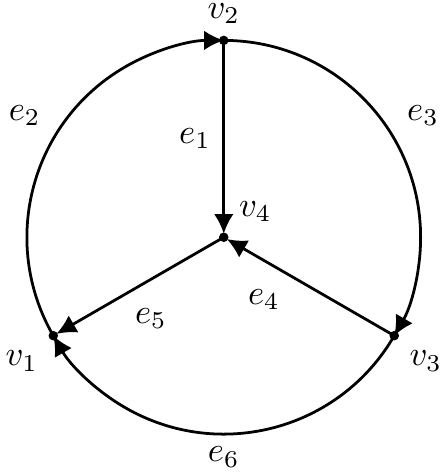}\hspace{15mm}
				\includegraphics[width=0.52\textwidth]{./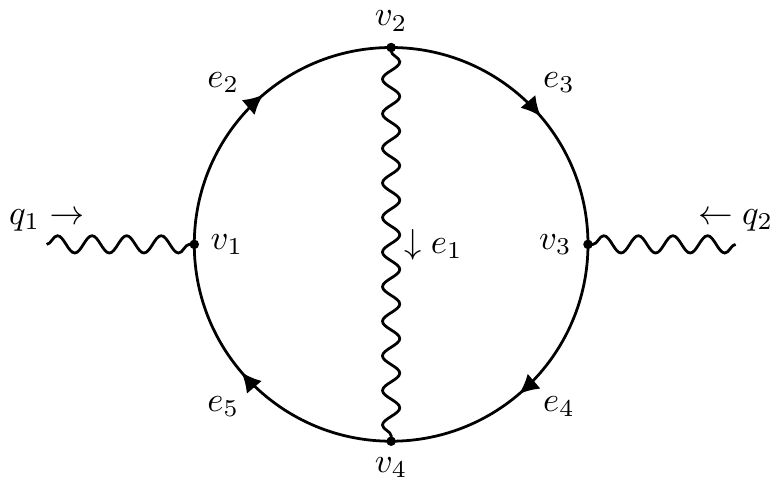}
		\end{center}
		\caption{The wheel with three spokes $G=W\!S_3=K_4$ and a QED Feynman graph $\Gamma$ that corresponds to the wheel with edge $e_6$ cut to become an external photon edge.}
		\label{01_WheelFeynman}
	\end{figure}
	
	Unlike the Kirchhoff polynomial, the second Symanzik polynomial is not defined for generic graphs but only for Feynman graphs, which carry additional information. Feynman graphs can have different types of edges, like photons ($\photon$) and fermions ($\fermion$) in the graph on the r.h.s. of \reffig{01_WheelFeynman}. Moreover, Feynman graphs have so-called external edges, i.e. half edges incident to only a single vertex. One associates to the external edges external momenta $q_i$, that are real euclidean vectors $q_i\in \RR^4$ for this article, but may also be minkowskian, $D$-dimensional, or complex, depending on context. To distinguish them from generic graphs $G$ we use $\Gamma$ for Feynman graphs. The second Symanzik polynomial is then defined as \cite{Symanzik1958}
	\al{
		\ssp(\alpha, q) \defeq \sum_{(T_1,T_2) \in \cT_{\Gamma}^{[2]}} \nquad s(q,T_1,T_2) \prod_{e\notin T_1\cup T_2} \alpha_e 
	}
	where one sums over spanning $2$-forests. The function $s$ is the square of the momentum flow between the two trees, i.e. the sum of all external momenta entering either tree (which is the same for both trees due to momentum conservation). 
	
	If, as in \reffig{01_WheelFeynman} for example, there are only two (non-zero) external momenta, such that $q_1 = -q_2 \equiv q$ by momentum conservation, then the second Symanzik polynomial factorises and we write
	\al{
		\ssp = q^2\vssp.\label{eq_ssp_fac}
	}
	We focus on this case for this article. Note that $\vssp$ is also a Kirchhoff polynomial, namely that of the graph $\Gamma^{\bullet}$, which results from adding the external edge between the two external vertices and then contracting it.
	
	The second Symanzik polynomial can also be expressed in terms of matrices. When deriving parametric Feynman integrals it appears in the form of the inverse Laplacian $\tilde {L'}^{-1}$ multiplied from both sides with vectors collecting all external momenta. Using cofactors to invert the matrix and expanding the matrix products as sums this yields
	\al{
		\ssp = \alpha_{E_{\Gamma}} \!\! \sum_{v_1,v_2 \in V_{\Gamma}'}\nquad q_{v_1}\!\! \cdot q_{v_2} (-1)^{v_1+v_2}\det( \tilde {L'}^{\{v_1\}}_{\{v_2\}} ),\label{eq_ssp_matrix}
	}
	where $V_{\Gamma}' = V_{\Gamma} \setminus \{v_0\}$ is the set of all vertices except the one whose row and column was removed from all matrices to get their reduced versions.	

	\begin{example}
		Let $\Gamma$ be the Feynman graph on the r.h.s of \reffig{01_WheelFeynman}. Its Kirchhoff polynomial is just
		\al{\nonumber
			\kp &= \alpha_1\alpha_2 + \alpha_1\alpha_3 + \alpha_1\alpha_4 + \alpha_1\alpha_5
							+\alpha_2\alpha_3 + \alpha_2\alpha_4 + \alpha_3\alpha_5 + \alpha_4\alpha_5\\
						&=  (\alpha_2+\alpha_5)(\alpha_3+\alpha_4) + \alpha_1(\alpha_2+\alpha_3+\alpha_4+\alpha_5),
		}
		which is the derivative w.r.t. $\alpha_6$ of the Kirchhoff polynomial from example \ref{ex_kp1}.
		
		There are a total of 10 spanning 2-forests, but not all of them contribute to the second Symanzik polynomial. Consider the spanning 2-forest with $T_1 = \{e_2,e_5\}$ and $T_2$ just the isolated vertex $v_3$ without edges. The external momentum $q_1$ enters $T_1$ in the vertex $v_1$ and $q_2$ (which has to be $-q_1$ due to momentum conservation) enters $T_2$ in $v_3$. Hence, the corresponding monomial is $q_1^2\alpha_1\alpha_3\alpha_4$. An example of a forest that does not contribute is $T_1=\{e_2,e_3\}$ and $T_2$ just the vertex $v_4$. The external momenta entering $T_1$ in $v_1$ and $v_3$ add up to $0$, whereas $T_2$ is not adjacent to any external edges at all. Hence $s(q_1,q_2,T_1,T_2) = 0$ in this case. Overall, 8 of the 10 forests contribute to yield the second Symanzik polynomial
		\al{
			\ssp = q^2\vssp & =  q^2\big( \alpha_2\alpha_5(\alpha_1+\alpha_3+\alpha_4)
							+\alpha_3\alpha_4(\alpha_1+\alpha_2+\alpha_5)+ \alpha_1\alpha_2\alpha_4+\alpha_1\alpha_3\alpha_5\big).
		}
		Expanding the polynomial one sees that 
		\al{
			\vssp = \alpha_1\alpha_2\alpha_4 + \alpha_1\alpha_2\alpha_5 + \alpha_1\alpha_3\alpha_4 + \alpha_1\alpha_3\alpha_5
				 + \alpha_2\alpha_3\alpha_4 + \alpha_2\alpha_3\alpha_5 + \alpha_2\alpha_4\alpha_5 + \alpha_3\alpha_4\alpha_5
		}
		is indeed $\kp[\Gamma^{\bullet}] = \kp[\con{G}{e_6}] = \kp[G]\big|_{\alpha_6=0}$, where $G$ is the wheel from example \ref{ex_kp1}.
	\end{example}

	\subsubsection{Bonds and cycles}
		
	A \textit{bond} $B\subset G$ is a minimal subgraph $G$ such that $G\setminus B$ has exactly two connected components.  A simple \textit{cycle} $C\subset G$ is a subgraph of $G$ that is 2-regular, i.e. all vertices have exactly two edges incident to it, and it has only one connected component. The sets of bonds and simple cycles of a graph $G$ are denoted $\cB_G$ and $\cC_G^{[1]}$.
		
	In \cite{Golz_2017_CyclePol} two polynomials based on these types of subgraphs were defined and it was shown that they can be used to express the Schwinger parametric integrand in quantum electrodynamics without derivatives. The basic bond polynomial and cycle polynomial are
	\al{
		\beta_G(\alpha,\xi) &\defeq \sum_{B \in \cB_G} \left(\sum_{e \in B} \ro_B(e) \xi_e  \right)^2 \alpha_B\kp[G\setminus B],\\[3mm]
		\chi_G(\alpha,\xi) &\defeq \sum_{C \in \cC_G^{[1]}} \left(\sum_{e \in C} \ro_C(e) \xi_e  \right)^2   \kp[G\dslash C],
	}
	where $\xi = (\xi_1, \dotsc, \xi_{|E_G|})$ are formal parameters assigned to each edge (later interpreted as auxiliary momenta, i.e. euclidean 4-vectors), and $\ro_B(e), \ro_C(e) \in \{ 0, \pm 1\}$ are the signs of the relative orientations of $e$ w.r.t. some arbitrarily chosen orientation of the bond or cycle. Note that this choice does not influence the sign of the polynomials, since these orientations only appear within the square. Below we will abbreviate products of such signs as $\ro_C(e_1)\ro_C(e_2) = \ro_C(e_1,e_2)$. 
		
	 Via \refeq{eq_kp_disj} -- the Kirchhoff polynomial definition for disconnected graphs -- these definitions extend to disconnected $G$ as well. For Feynman graphs the bond polynomial is closely related to the second Symanzik polynomial. In fact, $\ssp(\alpha, q)$ is simply the evaluation of $\beta_{\Gamma}(\alpha,\xi)$, where one sends $\xi_e\to q$ for each $e$ in some arbitrary path between the external vertices and all others to $0$ (and if there are $n>2$ external vertices, then one does this for $n-1$ pairs of external vertices to get the correct linear combinations $\xi_e\to \sum \pm q_i$).\\

	From these two polynomials we derive two families of polynomials that we will from now on mostly mean when speaking of cycle or bond polynomials:
	\al{
		\bp[G]{e_i}{e_j}(\alpha) &\defeq \frac{1}{2}\frac{\partial^2\beta_G}{\partial \xi_i\partial \xi_j} = \sum_{B \in \cB_G} \ro_B(e_i,e_j) \alpha_B \kp[G\setminus B]\\[3mm]
		\cp[G]{e_i}{e_j}(\alpha) &\defeq \frac{1}{2}\frac{\partial^2\chi_G}{\partial \xi_i\partial \xi_j} = \sum_{C \in \cC_G^{[1]}} \ro_{C}(e_i,e_j) \kp[G\dslash C]				\label{eq_defcycpol}
	}
	
	Cycle and bond polynomials inherit many useful properties from the Kirchhoff polynomial. They are clearly still linear in each $\alpha_e$ and homogenous of degree $h_1(G)+1$ (for $\beta_G$, $\bp[G]{e_i}{e_j}$) and $h_1(G)-1$ (for $\chi_G$, $\cp[G]{e_i}{e_j}$). They also satisfy the contraction-deletion relations and the following three useful identities (proposition 2.8 and lemmata 2.9, 2.10, 2.11 in \cite{Golz_2017_CyclePol}):
	\al{
		\cp[G]{e}{e} &= \kp[G\setminus e] = \frac{\partial}{\partial \alpha_e} \kp[G] \hspace{20mm} \text{ if $e$ is not a bridge.} \\[3mm]
		\bp[G]{e}{e} &= \alpha_e\kp[G\dslash e] = \alpha_e\left.\kp[G]\right|_{\alpha_e=0} \hspace{8mm} \text{ if $e$ is not a self-cycle.}\\[5mm]
		\bp[G]{e}{e'} &= -\alpha_e\alpha_{e'} \cp[G]{e}{e'} \hspace{24mm} \text{ if } e \neq e'.	\label{eq_lemma_bpcp}
	}
	We also need the polynomial
	\al{
		\CP{e}{\mu}(\alpha,\xi) 
		\defeq \frac{1}{2\alpha_e}\frac{\partial}{\partial \xi_{e,\mu}} \beta_{\Gamma}(\alpha, \xi)
		= \alpha_e^{-1} \sum_{e' \in E_{\Gamma}} \xi_{e'}^{\mu} \bp{e}{e'}.	\label{eq_CPdef}
	}
	More specifically, we need its evaluation $\xi \to q$ for the case of a single external momentum,
	\al{
		\CP{e}{\mu}(\alpha,q) = q^{\mu} x_{\Gamma}^e(\alpha),	\label{eq_CP_eval_def}
	}
	which factorises similarly to the second Symanzik polynomial in \refeq{eq_ssp_fac}.
					
	\subsubsection{Dodgson and spanning forests}	

		In \refeq{eq_detgmkp} we have seen that the Kirchhoff polynomial can be written as the determinant of the graph matrix $M$. Motivated by this one considers minors of the graph matrix, i.e. determinants
		\al{
			\kp[G]^{I,J} \defeq \det\big( M_I^J\big),
		}
		where the edge subsets $I,J \subset E_G$ with $|I|=|J|$ in the subscript and superscript denote deletion of all rows and columns indexed by edges in the respective set. In general one often uses a third index set $K$ for $\kp[G,K]^{I,J}$ and sets $\alpha_e=0$ for all $e\in K$, but here we always have $K=\emptyset$. Note that $\kp[G]^{I,J}$ is only well-defined up to an overall sign since a different ordering of the rows and columns in the graph matrix may change the sign of the determinant. This will be discussed further below, but for now we just fix one such ordering.

	The $\kp[G]^{I,J}$ are called Dodgson polynomials and appeared already in \cite{BlochEsnaultKreimer_2006_Motives}. They were first named and systematically studied by Francis Brown in \cite{Brown_2010_Periods}. In the following we discuss some notable properties.
	\paragraph{Passing to a minor.}
			For all $A, B \subset E_G$
			\al{
				\Psi_{G\setminus A\dslash B,K}^{I,J} = \Psi_{G,K\cup B}^{I\cup A,J\cup A},
			}
			which justifies our setting $K=\emptyset$.
	\paragraph{Determinant identities.} 
			Let $\adj(M)[I,J]$ be the restriction of the adjugate matrix of $M$ to rows and columns indexed by $I$ and $J$. Based on the Desnanot-Jacobi identity \cite{Dodgson_1866_CondDet}
			\al{
				\det(\adj(M)[I,J]) = \det(M)^{|I|-1}\det(M^I_J)\label{eq_DJ}
			}
			for determinants one finds identities of the type
			\al{
				\Psi_G^{\{i_1\},\{i_3\}}\Psi_G^{\{i_2\},\{i_4\}} - \Psi_G^{\{i_1\},\{i_4\}}\Psi_G^{\{i_2\},\{i_3\}} = \kp[G] \Psi_G^{\{i_1,i_2\},\{i_3,i_4\}}.\label{eq_DodgsonID}
			}
			This case ($|I|=2=|J|$) is also called \textit{Dodgson identity}\footnote{Somewhat confusingly, it is also occasionally called \textit{Lewis Carroll identity} and both names are sometimes used to refer to the determinant identity \refeq{eq_DJ} \cite{Bressoud}. Here we follow the conventions of \cite{Brown_2010_Periods}.} and its generalisations are the crucial tool that we will work with below.
	\paragraph{Combinatoric interpretation.} 
			In the case of $I \cap J = \emptyset = K$ the combinatoric interpretation for Dodgson polynomials given by Brown in \cite[Prop. 23]{Brown_2010_Periods} simplifies to
			\al{
				\Psi_G^{I,J} = \sum_{T\subset E_G\setminus(I\cup J)} \pm \prod_{e\notin T} \alpha_e\label{eq_DodgsonCombInt}
			}
			where the sum is over edge subsets $T$ that are simultaneously spanning trees of $\con{(G\setminus I)}{J}$ and $\con{(G\setminus J)}{I}$. A criterion for two monomials in this sum to have the same or opposite signs is given in \cite[Corollary 17]{BrownYeats_2011_SpanningForest}. Moreover, if $I$ and $J$ do intersect, then
			\al{
				\Psi_G^{I,J} = \Psi_{G\setminus (I\cap J)}^{I\setminus J,J\setminus I}
			}
			if $G\setminus (I\cap J)$ is still connected and zero otherwise. In particular, $\Psi_G^{\{e\},\{e\}} = \kp[G\setminus e]$.\\[3mm]

	Finally, the last well-known graph polynomial that we will need is the spanning forest polynomial \cite[Def. 9]{BrownYeats_2011_SpanningForest}
	\al{
		\Phi_G^P = \sum_{F = T_1\sqcup \dotsm \sqcup T_k \in \cT^{[k]}_P} \alpha_{E_G\setminus F},
	}
	where $P = P_1,\dotsc,P_k$ is a partition of vertices of $G$ and $\cT^{[k]}_P$ is the subset of spanning $k$-forests which have the vertices of $P_i$ contained in the tree $T_i$.
	
	Being a sum over spanning forests and denoted by the same letter it is no surprise to find that these polynomials are closely related to the second Symanzik polynomial. Consider the matrix expression for $\ssp$ from \refeq{eq_ssp_matrix}. The coefficients of a product $q_{v_1}\!\! \cdot q_{v_2}$ of external momenta are precisely the spanning forest polynomials $\ssp^{\{v_0\},\{v_1,v_2\}}$, such that
	\al{
		\ssp = \alpha_{E_{\Gamma}} \sum_{v_1,v_2 \in V_{\Gamma}'}\nquad q_{v_1}\!\! \cdot q_{v_2} (-1)^{v_1+v_2}\det( \tilde {L'}^{\{v_1\}}_{\{v_2\}} ) 
			= \sum_{v_1,v_2 \in V_{\Gamma}}\nquad q_{v_1}\!\! \cdot q_{v_2}\Phi_G^{\{v_0\},\{v_1,v_2\}}.			\label{eq_ssp_stmat}
	}
	Note that $\Phi_G^{\{v_0\},\{v_1,v_2\}}=0$ if either of the two vertices $v_1,v_2$ is equal to $v_0$. Hence, in this expression we can just sum over the entire vertex set and do not need to write $V_{\Gamma}'$.
	
	\subsection{Chord diagrams}
		
	Aside from Feynman graphs we need another very special kind of graph -- chord diagrams. They can be used to model the contraction and traces of Dirac matrices, which is why they appear in QED Feynman integrals. For proofs and a more in-depth discussion we refer to \cite{Golz_2017_Traces}.

	Classically, chord diagrams consist of a cycle on $2n$ vertices (the base) and $k\leq n$ additional edges that pairwise connect $2k$ of the vertices of that cycle (the chords), but here we need a slightly more general definition that allows for multiple base cycles.

	\begin{definition}
		A chord diagram is a graph that consists of $\ell\geq 1$ cycles with $2n_1, \dotsc, 2n_{\ell}$ vertices and $k \leq \sum n_i\bdefeq N$ further edges, such that each vertex is at most 3-valent. 
	\end{definition}

	\begin{figure}[h]
		\begin{center}
			\includegraphics[width=0.95\textwidth]{./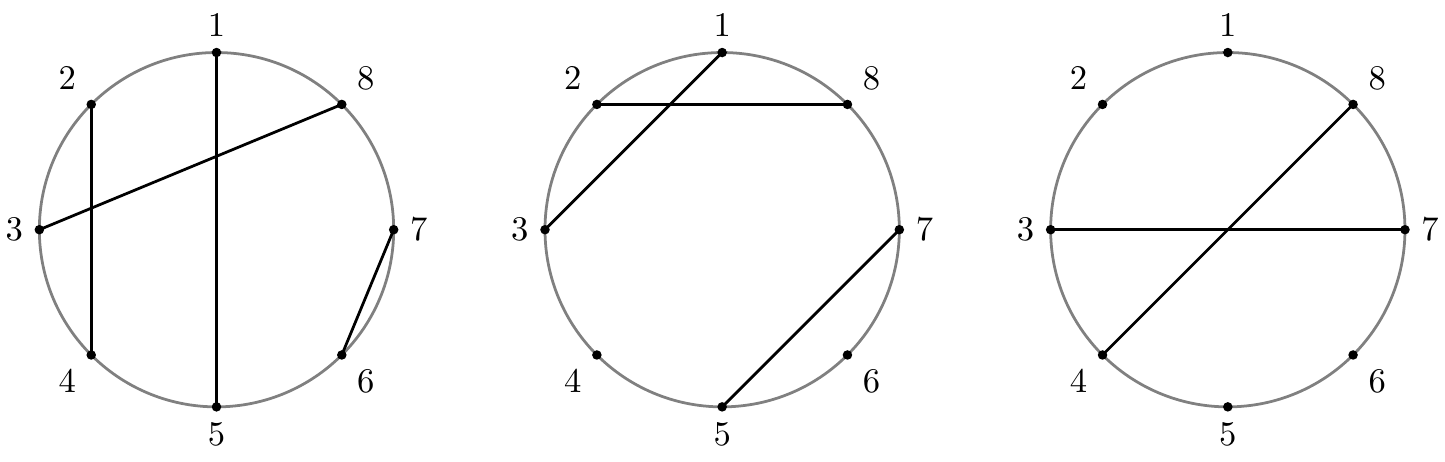}
			\caption{Three chord diagrams of order $n=4$ with one base cycle and four, three and two chords respectively.} 
			\label{img_chordexamples}
		\end{center}
	\end{figure}

	We denote with $\cD^n_k$ the set of all chord diagrams\footnote{Here we always mean labelled diagrams, i.e. two diagrams that are isomorphic as graphs but differ in the labelling of the vertices are viewed as different chord diagrams. In practice we will always either fix an arbitrary labelling $1,\dotsc, 2N$ that does not influence the result, or have a labelling fixed from context because the diagram is derived from a Feynman graph in a certain way.} with the respective number and size of base cycles and chords, determined by the $\ell$-tuple $n=(n_1,\dotsc, n_{\ell}) \in \NN_+^{\ell}$ and $1\leq k \leq N$. The set of 2-valent vertices of a chord diagram is denoted $V_D^{(2)}$ and we will often call them the ``free vertices'' of $D$.\\

	\subsubsection{Colours}\label{sec_colours}
	
	In addition to the distinction between base edges and chords we will need to introduce more properties to differentiate between certain types of edge subsets. This is achieved via colouring. For some finite set $K$ a map $\kappa: E_G\to K$ is called an edge $k$-colouring if for every vertex $v$ of $G$ all edges incident to it are assigned different colours, i.e. if $\kappa$ is injective in the neighbourhood of $v$. Chord diagrams $D$ admit an edge 3-coloring $\kappa: E_D \to \{0,1,2\}$ that assigns two alternating colours $1$ and $2$ to the edges of the base cycles and the third colour $0$ to all chords. There are $2^{\ell}$ possibilities of such a colouring corresponding to the exchange of colours $1$ and $2$ in some base cycles, so from now on we fix one such choice in all diagrams. In drawings we visualise the colours with different line types:
		\al[*]{
			0 \sim \dotted \hspace{2cm} 1 \sim \scalar \hspace{2cm} 2 \sim \dashed
		}
	
	Let $E_D^{i} = \kappa^{-1}(\{i\})$, $E_D^{ij} = \kappa^{-1}(\{i,j\})$ be the edge subsets consisting only of edges of the respective colour or colours. Each bicoloured edge subset can be decomposed into collections of cycles $\cC_D^{ij}$ and paths $\cP_D^{ij}$, where $\cP_D^{12} = \emptyset$, $|\cP_D^{01}| = |\cP_D^{02}|$, $|\cC_D^{12}| = \ell$, and we define $c_2(D) \defeq |\cC_D^{01}| + |\cC_D^{02}|$. The bicoloured paths between the 2-valent vertices of $D$ can be joined in their shared vertices to build tricoloured cycles, whose number we define to be $c_3(D)$. Often we are only interested in the total number of such coloured cycles, which we call $\tilde c(D) \defeq c_2(D) + c_3(D)$. Beware that this excludes the base cycles in $\cC_D^{12}$, which are counted separately by $\ell$.
	
	\begin{figure}[H] \begin{center}
		\includegraphics[width=0.45\textwidth]{./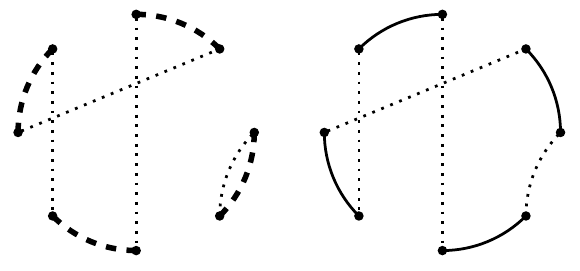}\hspace{10mm}
		\includegraphics[width=0.45\textwidth]{./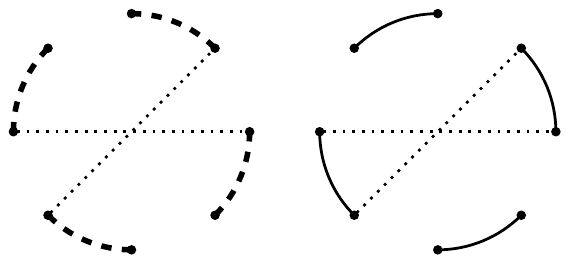}
		\caption{Colour decompositions of the left and right chord diagrams from \reffig{img_chordexamples}.}
		\label{img_dc}
	\end{center} \end{figure}
	
	\begin{example}\label{04b_ex_threecol}
		Let $D_1, D_2, D_3$ be the three chord diagrams, left to right, from \reffig{img_chordexamples}. For $D_1$ and $D_3$ the bicoloured subsets are depicted in \reffig{img_dc}. Since there are no 2-valent vertices in $D_1$ all bicoloured components are cycles. Simply counting them in the drawing one finds
		\al[*]{
			c_2(D_1) = 3 \qquad c_3(D_1) = 0.
		}
		$D_3$, on the other hand, has four free vertices. We see that there is still a bicoloured cycle on the very r.h.s. of \reffig{img_dc}. The other bicoloured subsets are paths, four of them, which combine into a single tricoloured cycle, such that
		\al[*]{
			c_2(D_3) = 1 \qquad c_3(D_3)=1.
		}
		We leave it as an exercise for the reader to draw the bicoloured subsets of $D_2$, count the cycles, and confirm that also
		\al[*]{
			c_2(D_2) = 1 \qquad c_3(D_2)=1.
		}
	\end{example}
	
	Contracting each path in $\cP_D^{0i}$ to a single edge of colour $i$ maps the tricoloured cycles to a chord diagram $D_0 \in \cD_0^{n_0}$, for some suitable $n_0$ with $\sum n_{0,i} \leq N$, without any chords. Its base cycles correspond to the tricoloured cycles and its vertices are the 2-valent vertices of $D$. We call this projection map $\pi_0$.
	
	\begin{figure}[H]
		\begin{center}
			\includegraphics[width=0.7\textwidth]{./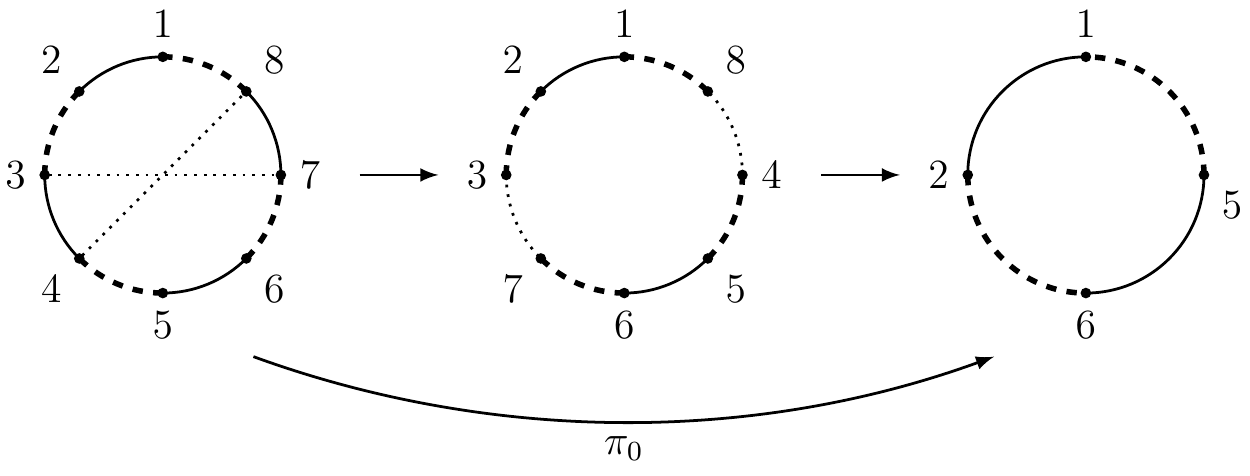}
			\caption{Visualisation of the projection map $\pi_0$ for the case of diagram $D_3$ from \reffig{img_chordexamples}. } 
			\label{img_proj}
		\end{center}
	\end{figure}
	
	The final notion that we need to introduce is the signum $\sgn(u,v)$ of two free vertices $u,v \in V_D^{(2)}$ within a chord diagram. It is $-1$ if they are not part of the same tricoloured cycle of $D$, and $0$ or $1$ if there are an even or odd number of paths between them. Alternatively, in terms of $\pi_0(D)$, $\sgn(u,v)=-1$ if $u$ and $v$ are in different base cycles and is $0$ or $1$ if they are in the same base cycle and are separated by an even or odd number of base edges. In \cite[Prop. 3.5]{Golz_2017_Traces} it was worked out how the numbers $c_2$ and $c_3$ change when a chord is added to a chord diagram $D_0\in D^n_k$ with $k<N$. Focussing only on the total number $\tilde c$ it reduces to
	\al{
		\tilde c(D) = \tilde c(D_0) + \sgn(u,v),
	}
	where $D = ( V_{D_0}, E_{D_0}^0 \cup \{u,v\}, E_{D_0}^1, E_{D_0}^2)$.
	
	\subsubsection{Chord diagrams and Feynman graphs} \label{sec_cd_fg}
		
		In this section we establish the connection between Feynman graphs (and integrals) and chord diagrams. For concreteness we focus on the case of photon propagator  graphs with a single fermion cycle, in Feynman gauge, and we ignore subdivergences.\\
	
		The Feynman rules for a fermion cycle yield a trace
		\al{
			\tr( \gamma_{\mu_1}\dotsm\gamma_{\mu_{4h_1}} )
		}
		where the matrices $\gamma_{\mu_i}$ correspond to fermion edges ($i$ odd) and vertices ($i$ even), and $h_1\equiv h_1(\Gamma)$ is the graph's first Betti number. Since we are in Feynman gauge every other matrix is contracted with metric tensors $g_{\mu_i\mu_j}$, corresponding to the photon propagators (including the external photon, see sec. \ref{subsec_contract}). The trace can be visualised as a chord diagram $D_{\Gamma} \in \cD^{2h_1}_{h_1}$ in which each vertex is labelled by one matrix and contraction via metric tensors is represented by chords.
		
		We are now interested in sums over chord diagrams that result from all possible additions of further chords to $D_{\Gamma}$. Because the chords fixed in place are always the same, we can consider smaller diagrams instead, namely diagrams built on the projection $D_{\Gamma}^0=\pi_0(D_{\Gamma})$. Even if $D_{\Gamma}$ has only one base cycle the projection may contain multiple base cycles. Hence, let $n\in\NN_+^{\ell_0}$ with $\sum n_i = N = h_1$ be some suitable tuple representing the base cycle structure of $D_{\Gamma}^0$ after the projection, such that $D_{\Gamma}^0 \in \cD^n_0$. Then we denote with $\cD^k_{\Gamma} \simeq \cD^n_{h_1-k}$ the set of all chord diagrams that contain the base cycles of $D_{\Gamma}^0\in \cD^n_0$ as a subgraph together with $h_1-k$ chords and have their vertices labelled by fermion edges of the underlying Feynman graph. Each such diagram corresponds to a trace of $4h_1$ Dirac matrices contracted with $2h_1-k$ metric tensors\footnote{To be precise, these diagrams of course correspond to some product of traces with a total of $2h_1$ Dirac matrices. However, the fixed chords in $D_{\Gamma}$ do not influence $\tilde c$, so we can just take the factor $(-2)^{h_1}$ due to these $h_1$ chords and otherwise work with the smaller diagrams, even though we actually compute the contraction of a larger product of matrices. See \cite{Golz_2017_Traces} for details.},
		\al{
			\cD^k_{\Gamma} \ni D \sim
							\bigg(\prod_{ (u,v)\in E_D^0 } g_{\mu_u\mu_v} \bigg)
							\bigg(\prod_{ (u',v')\in E_{D_{\Gamma}}^0 } g_{\mu_{u'}\mu_{v'}} \bigg) 
							\tr( \gamma^{\mu_1}\dotsm\gamma^{\mu_{4h_1}} ).
		}
		For $k=0$ this trace is simply $(-2)^{1+2h_1+\tilde c(D)}$ \cite[Theorem 3.9]{Golz_2017_Traces}. Similar results also hold if there are uncontracted matrices left, and for contractions between products of multiple traces. Since there is only a single external momentum $q$ all matrices not contracted with metric tensors are contracted to $\slashed q = q_{\rho}\gamma^{\rho}$, and with $\slashed q\slashed q = q^2$ one finds that one always just has an integer multiple of a power of $q^2$. Applying this to the parametric integrand for QED Feynman integrals from  \cite{Golz_2017_CyclePol}, which we will do in more detail in section \ref{sec_application}, yields sums of the form
		\al{
			I_{\Gamma}^{(k)} \propto \sum_{D\in \cD_{\Gamma}^k} (-2)^{\tilde c(D)} \bigg(\prod_{ (u,v)\in E_D^0 } \cp{u}{v}\bigg) \prod_{ w\in V_D^{(2)} } x_{\Gamma}^w, \label{eq_sumdetail}
		}
		which we will be able to rewrite with the two main theorems of this article.

\section{Dodgson polynomials revisited}

	In order to prove the polynomial identities of section \ref{sec_Pol_id} we will need a variant of the Dodgson polynomials that is in some sense a reinterpretation (in section \ref{sec_dp_01}) but also a generalisation (in section \ref{sec_dp_02}). This then allows us to define what we call partition polynomials in section \ref{sec_defpp}.

	\subsection{Dodgson cycle polynomials}\label{sec_dp_01}
	
	The relation $\cp[G]{e}{e} = \kp[G\setminus e] = \kp[G]^{\{e\},\{e\}}$ suggests a possible connection between cycle and Dodgson polynomials, and indeed we find
	\begin{proposition}\label{05a_prop_eqpol}
		\al{
			\cp[G]{i}{j} = \pm\kp[G]^{\{i\},\{j\}}
		}
		for all $i,j\in E_G$.
	\end{proposition}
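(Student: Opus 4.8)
The plan is to establish the identity by showing that $\cp[G]{i}{j}$ and $\kp[G]^{\{i\},\{j\}}$ have the same combinatorial expansion as sums of monomials $\prod_{e\notin T}\alpha_e$, up to an overall sign that depends only on the pair $(i,j)$ and the fixed choice of row/column ordering in the graph matrix. First I would handle the diagonal case $i=j$ separately: here the claim reduces to the already-recorded facts $\cp[G]{e}{e} = \kp[G\setminus e]$ (if $e$ is not a bridge; if $e$ is a bridge both sides vanish) and $\kp[G]^{\{e\},\{e\}} = \kp[G\setminus e]$ from the discussion of the combinatoric interpretation of Dodgson polynomials. So the substance is the off-diagonal case $i\neq j$.

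For $i\neq j$, I would use two facts already available in the excerpt. On the Dodgson side, \refeq{eq_DodgsonCombInt} tells us that $\kp[G]^{\{i\},\{j\}} = \sum_T \pm\prod_{e\notin T}\alpha_e$, where $T$ ranges over edge subsets that are simultaneously spanning trees of $\con{(G\setminus i)}{j}$ and $\con{(G\setminus j)}{i}$; equivalently, $T$ together with edge $i$ and $T$ together with edge $j$ are both spanning trees of $G$, so that adding both $i$ and $j$ to $T$ creates a unique cycle $C$ through both $i$ and $j$, and $T = E_G\setminus C$ restricted appropriately — more precisely $C\setminus\{i,j\}\subseteq T$ fails, rather $T$ is a spanning tree of $\con{G}{C}$ on the quotient. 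Matching this against \refeq{eq_defcycpol}, $\cp[G]{i}{j} = \sum_{C\in\cC_G^{[1]}} \ro_C(i,j)\,\kp[G\dslash C]$, and $\kp[G\dslash C] = \sum_{T'} \prod_{e\notin T'\cup C}\alpha_e$ where $T'$ runs over spanning trees of $G\dslash C$: the cycles $C$ that actually contribute are exactly those containing both $i$ and $j$ (since $\ro_C(i,j)=0$ otherwise), and for such a $C$ the monomials $\prod_{e\notin T'\cup C}\alpha_e$ are in bijection with the monomials of the Dodgson polynomial indexed by the same $C$. So the two sides agree monomial-by-monomial up to the signs.

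The main obstacle — and the real content — is the sign comparison. On the cycle-polynomial side the sign of the $C$-block is $\ro_C(i,j)$, the product of relative orientations of $i$ and $j$ along $C$; on the Dodgson side the sign of each monomial is governed by the ordering of rows and columns in $M^{\{i\},\{j\}}$ and can be read off from the criterion of \cite[Corollary 17]{BrownYeats_2011_SpanningForest}. I would argue that, for fixed $i,j$, the Dodgson sign is constant across all monomials coming from a single cycle $C$ (this is essentially what the spanning-forest sign criterion gives: two tree monomials differing by which cycle they complete can flip sign, but monomials completing the same $C$ do not), and then show that as $C$ varies the Dodgson sign changes exactly by $\ro_C(i,j)$ relative to some reference, matching the cycle polynomial up to a single global sign $\pm$ determined by the chosen graph-matrix ordering. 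An efficient alternative, which I would mention as a shortcut, is to invoke \refeq{eq_lemma_bpcp}, $\bp[G]{i}{j} = -\alpha_i\alpha_j\cp[G]{i}{j}$ for $i\neq j$, together with a Dodgson-identity manipulation: using the $2\times2$ Dodgson identity \refeq{eq_DodgsonID} and the diagonal cases $\kp[G]^{\{i\},\{i\}}$, $\kp[G]^{\{j\},\{j\}}$ one can pin down $\kp[G]^{\{i\},\{j\}}\kp[G]^{\{j\},\{i\}}$ in terms of $\kp[G]\kp[G]^{\{i,j\},\{i,j\}}$ and compare with the analogous product identity for cycle/bond polynomials proved in \cite{Golz_2017_CyclePol}, forcing $\cp[G]{i}{j} = \pm\kp[G]^{\{i\},\{j\}}$ since both are polynomials with the same squared value and the same (matching) factorizations. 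Either route reduces the proposition to bookkeeping of signs, which is exactly why the statement is only claimed up to $\pm$.
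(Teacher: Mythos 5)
Your main argument follows the paper's own proof essentially verbatim: dispose of $i=j$ via $\cp[G]{e}{e}=\kp[G\setminus e]=\kp[G]^{\{e\},\{e\}}$, then for $i\neq j$ match monomials through the combinatorial interpretation \refeq{eq_DodgsonCombInt} by pairing each Dodgson spanning tree with the unique simple cycle through $i$ and $j$ that it completes (so that the $C$-blocks reproduce $\kp[G\dslash C]$), and fix the relative signs with \cite[Corollary 17]{BrownYeats_2011_SpanningForest}, which is precisely the paper's route (there phrased via the contracted-$K_4$ criterion for when two cycles carry opposite $\ro_C(i,j)$). The alternative shortcut you sketch through the Dodgson identity is not needed and is the weakest part of the proposal, since it presupposes a cycle-polynomial analogue of \refeq{eq_DodgsonID} that is not among the quoted identities, but as a side remark it does not affect the correctness of your main argument.
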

	\begin{proof}
		For $i=j$ the proof is done and for $i\neq j$ we use the combinatoric interpretation from \refeq{eq_DodgsonCombInt},
		\al{
			\kp[G]^{\{i\},\{j\}} = \sum_{T\subset E_G\setminus\{i,j\}} \pm \prod_{e\notin T} \alpha_e.
		}
		A sum over spanning trees can be decomposed into a double sum over paths $P\subset G\setminus i$ and spanning trees of the corresponding graph $\con{(G\setminus i)}{P}$ where all paths are between endpoints $\partial_+(i)$ and $\partial_-(i)$ and contain the edge $j$. Then adding $i$ to each path completes it into a simple cycle $C_P = P\cup\{i\} \in \cC_G^{[1]}$ that contains both $i$ and $j$, and the corresponding monomials of $\cp[G]{i}{j}$ and $\kp[G]^{\{i\},\{j\}}$ indeed agree, at least up to sign. The signs $\ro_{C_1}(i,j)$, $\ro_{C_2}(i,j)$ of two partial polynomials $\kp[\con{G}{C_1}]$ and $\kp[\con{G}{C_2}]$ in $\cp[G]{i}{j}$ differ if and only if $C_1\cup C_2$ is -- up to contraction of longer paths to single edges -- isomorphic to $K_4$: 
	 \begin{center}
	 	\includegraphics[width=0.8\textwidth]{./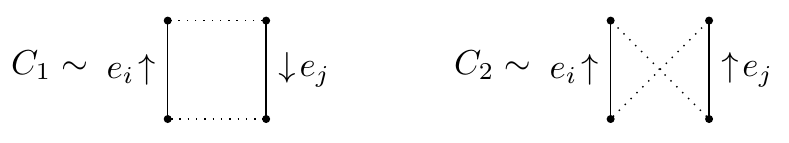}
	 \end{center}
	 Comparing this with the discussion of signs in Dodgson polynomials in section 2 of \cite{BrownYeats_2011_SpanningForest}, one finds that the endpoints of $i$ are precisely the transposed vertices given in \cite[corollary 17]{BrownYeats_2011_SpanningForest} as a criterion for opposite signs. Therefore all partial polynomials have the correct relative signs and only the overall sign ambiguity of Dodgson polynomials remains, concluding the proof.
	\end{proof}

	It should be noted that the sign ambiguity of the Dodgson polynomials is of course not entirely absent from the cycle polynomials -- the choice one has to make is simply moved from the order of rows and columns in a matrix to the orientations of edges in $G$. Since we always considered our graphs together with some such fixed choice from the very beginning it does not appear in the combinatorial definition of the cycle polynomials. Moreover, in the context of Feynman integrals we can even have a physical motivation for certain orientations, e.g. aligning all fermion edge orientations with fermion flow. 
	
	We can use this to fix the choice of the graph matrix such that the signs of $\cp[G]{i}{j}$ and $\kp[G]^{\{i\},\{j\}}$ agree. Furthermore, the interpretation of cycle polynomials as a fixed-sign version of Dodgson polynomials also suggests the definition of a higher order cycle polynomial via the Dodgson identity \refeq{eq_DodgsonID}.

	\begin{definition}\label{02c_def_dodgsoncyclepol}
		Let $G$ be a connected graph and $\cp[G]{i}{j}$ for all $i, j\in E_G$ the cycle polynomial as defined in \refeq{eq_defcycpol}. Then define an alphabet $\sA = \{ \sa_i \ | \ i \in E_G \}$ in which each letter is associated to an edge of $G$ and consider two words $\su,\sv$ over this alphabet with $|\su|=k=|\sv|$. The Dodgson cycle polynomial is then defined as $\cp[G]{\sa_i}{\sa_j} \defeq \cp[G]{i}{j}$ if $k=1$ and
		\al{
			\cp[G]{\su}{\sv} \defeq \kp[G]^{1-k} \sum_{\sigma \in S_k} \sgn(\sigma) \prod_{i=1}^{k} \cp[G]{\su_i}{\sigma_i(\sv)}, \label{eq_DefExpansion}
		}
		where $\su_i$, $\sigma_i(\sv)$ denote the $i$-th letter of $\su$ and (the permutation of) $\sv$, for $2\leq k\leq h_1(G)$.
	\end{definition}
	
	In this we simply recursively define $\cp[G]{\su}{\sv}$ for words of length $k$ by repeatedly using the Dodgson identity \refeq{eq_DodgsonID} or its generalisations derived from \refeq{eq_DJ}. For $k=2,3$ one has
	\al{
		\cp[G]{\sa_1\sa_2}{\sa_3\sa_4} &= \kp[G]^{-1} \big( \cp[G]{\sa_1}{\sa_3}\cp[G]{\sa_2}{\sa_4} - \cp[G]{\sa_1}{\sa_4}\cp[G]{\sa_2}{\sa_3}\big),\\[5mm]\nonumber
		\cp[G]{\sa_1\sa_2\sa_3}{\sa_4\sa_5\sa_6} &= \kp[G]^{-2} \big( \cp[G]{\sa_1}{\sa_4}\cp[G]{\sa_2}{\sa_5}\cp[G]{\sa_3}{\sa_6}
														- \cp[G]{\sa_1}{\sa_4}\cp[G]{\sa_2}{\sa_6}\cp[G]{\sa_3}{\sa_5}\\\nonumber
										&\qquad\qquad		+\cp[G]{\sa_1}{\sa_5}\cp[G]{\sa_2}{\sa_4}\cp[G]{\sa_3}{\sa_6}
														- \cp[G]{\sa_1}{\sa_5}\cp[G]{\sa_2}{\sa_6}\cp[G]{\sa_3}{\sa_4}\\
										&\qquad\qquad		+\cp[G]{\sa_1}{\sa_6}\cp[G]{\sa_2}{\sa_4}\cp[G]{\sa_3}{\sa_5}
														- \cp[G]{\sa_1}{\sa_6}\cp[G]{\sa_2}{\sa_5}\cp[G]{\sa_3}{\sa_4} \big).
	}
	Note that this also permits an expansion (essentially the cofactor expansion of the determinant) that yields, e.g. for $k=3$	,
	\al{		
		\cp[G]{\sa_1\sa_2\sa_3}{\sa_4\sa_5\sa_6}	 = \kp[G]^{-1} \Big( \cp[G]{\sa_1}{\sa_4}\cp[G]{\sa_2\sa_3}{\sa_5\sa_6}
														- \cp[G]{\sa_1}{\sa_5}\cp[G]{\sa_2\sa_3}{\sa_4\sa_6}
														+ \cp[G]{\sa_1}{\sa_6}\cp[G]{\sa_2\sa_3}{\sa_4\sa_5} \Big).
	}
	which will be very useful later on.\\

	Defining the polynomials like this imposes an ordering on the indices instead of using unordered sets $I$, $J$. This yields a symmetry $\cp[G]{\su}{\sv} = \sgn(\sigma)\cp[G]{\su}{\sigma(\sv)}$ for all permutations of letters in the words, which we will be able to exploit for our purposes below. Moreover, note that $\cp[G]{\su}{\sv} = 0$ if one of the words contains a repeated letter and
	\al{
		\cp[G]{\su}{\sv} = (-1)^{l+m}\cp[G\setminus e]{\su_1\dotsm\hat\su_l\dotsm\su_k}{\sv_1\dotsm\hat\sv_m\dotsm\sv_k}
	}
	if $\su_l = \sv_m = \sa_e$.

	\begin{remark}
		The relation between Dodgson polynomials and \textit{``sums over subgraphs of $G$ containing cycles which satisfy certain properties''} was already observed by Brown when he originally defined Dodgson polynomials in \cite[Remark 24]{Brown_2010_Periods}, but not further pursued. At some point it might be worthwhile to study what exactly these certain properties should be for higher order polynomials, so one can give a direct combinatorial definition analogous to \refeq{eq_defcycpol}, but for this article the recursive definition above shall suffice.	
	\end{remark}
	
	\subsection{Vertex-indexed Dodgson polynomials}\label{sec_dp_02}
	
	We just modified the Dodgson polynomial in a way that allows us to control their signs by relating them to another polynomial also indexed by edge subsets of the underlying graph. However, the graph matrix is an $E_G+V_G-1$ square matrix that also has rows corresponding to vertices of the graph. It is therefore quite natural to extend the definition of the Dodgson polynomials to include deletion of rows and columns labelled by vertices. Since the determinant identity \refeq{eq_DJ} holds generally, irrespective of which columns or rows are deleted, the polynomials given by such minors still satisfy the Dodgson identity. Moreover, we have already seen the fixed-sign versions of these types of Dodgson polynomials that we can use analogously to the cycle polynomials in the previous section. Remember the spanning forest polynomials used to rewrite the second Symanzik polynomial in \refeq{eq_ssp_stmat}. Using again the block matrix identity from \refeq{eq_blockID} we can write
	\al{
		\Phi_G^{\{v_0\},\{v_1,v_2\}} = (-1)^{v_1+v_2} \alpha_{E_{\Gamma}} \det( \tilde {L'}^{\{v_1\}}_{\{v_2\}} ) = (-1)^{v_1+v_2}\det( M(G)^{\{v_1\}}_{\{v_2\}} ).
	}
	The ambiguous sign of the determinant is precisely cancelled by the factor $(-1)^{v_1+v_2}$ such that $\Phi_G^{\{v_0\},\{v_1,v_2\}}$ is indeed a fixed-sign version of the Dodgson polynomial $\kp[G]^{\{v_1\},\{v_2\}}$. Hence, we reuse our previous notation to define
	\al{
		\cp[G]{\sa_{v_1}}{\sa_{v_2}} \defeq \Phi_G^{\{v_0\},\{v_1,v_2\}},
	}
	now with words (over an extended alphabet that includes vertices) indexing it, as in the previous case of cycle Dodgson polynomials. In this notation the Dodgson identity again takes the form
	\al{
		\cp[G]{\sa_1}{\sa_2} \cp[G]{\sa_3}{\sa_4} - \cp[G]{\sa_1}{\sa_3}\cp[G]{\sa_2}{\sa_4} = \kp \cp[G]{\sa_1\sa_4}{\sa_2\sa_3},	\label{eq_DI_rewrite}
	}
	and generalisations are analogous to \refeq{eq_DefExpansion}. Note that, where edge indices lower the degree of the polynomial, such that $\deg( \cp[G]{\sw_1}{\sw_2} ) = h_1(G)-|\sw_i|$, if the letters of both $\sw_i$ correspond to edges, the vertex indices do the opposite. For single letters, $\deg(\cp[G]{\sa_i}{\sa_j}) = \deg( \Phi_G^{\{v_0\},\{i,j\}} ) = h_1(G)+1$, such that the polynomial with two-letter words on the r.h.s. has to have degree $h_1+2$. Another property we get by courtesy of the spanning forest polynomial is that
	\al{
		\cp[G]{\sa_{v}}{\sa_{v}} = \kp[G|_{v=v_0}].
	}
	In other words, equal indices correspond to identification of that vertex with $v_0$ in the graph, analogous to the edge-indexed case $\cp[G]{\sa_{e}}{\sa_{e}} = \kp[G\setminus e]$, which indicated deletion of an edge.\\
	
	In $\Phi_G^{\{v_0\},\{v_1,v_2\}}$ the vertex $v_0$ whose row and column are initially deleted from the graph matrix is explicit. We will see below that it is actually useful to consider Dodgson polynomials coming from different such choices. Hence, from now we will use the subscript $\cp[G,v_0]{\su}{\sv}$ to indicate it, whenever the choice actually matters. Note that this is different from the subscript $K$ in the usual Dodgson polynomial $\kp[G,K]^{I,J}$, which indicates contracted edges and is always empty for us.

	\subsection{Partition polynomials}\label{sec_defpp}

	With these new variants of the Dodgson polynomials we can define another new polynomial that bridges the gap between Feynman graphs and the chord diagrams associated to them. For this purpose, we briefly return to the case of Dodgson polynomials only indexed by edge subsets of the underlying graph, not vertices.\\

	Let $\Gamma$ be a suitable Feynman graph such that $D_{\Gamma}^0 \equiv D\in \cD^n_0$ with $n\in \NN^{\ell}$ and label all its vertices with the graph's fermion edges, i.e. letters from an alphabet $\sA=\{ \sa_i \ | i \in E_{\Gamma}^{(f)} \}$. Consider all pairs of monomial words $(\su, \sv)$ of length $|\su| = N = |\sv|$ over this alphabet such that $\su\sv$ contains each letter exactly once. Then the symmetries
	\al{
		\cp{\su}{\sv} = \cp{\sv}{\su} \quad\text{ and }\quad \cp{\su}{\sv} = \sgn(\sigma) \cp{\su}{\sigma(\sv)}  \qquad  \forall \sigma\in S_N,
	}
	induce an equivalence relation on these words via
	\al{
		(\su,\sv) \sim (\su',\sv') \iff \cp{\su}{\sv} = \pm\cp{\su'}{\sv'},
	}
	or equivalently
	\al{
		(\su,\sv) \sim (\su',\sv') \iff \exists\ \sigma, \sigma' \in S_N \text{ s.t. } \su'=\sigma(\su),\ \sv'=\sigma'(\sv).
	}
	Let $\sP$ denote the corresponding set of equivalence classes of pairs $(\su,\sv)$ that satisfy the above mentioned properties. For the two coloured subsets of base edges $E_D^1$ and $E_D^2$ define the corresponding subsets $\sP_i\subset \sP$ by imposing an additional constraint: For all edges $(u,v) \in E_D^i$ we demand that the two corresponding letters do not appear in the same word, i.e. $\sa_u \in \su$ and $\sa_v \in \sv$ or vice versa. The full set of equivalence classes is then the union $\sP = \sP_1 \cup \sP_2$. Moreover, in most cases the $\sP_i$ intersect only in exactly one element, which, assuming the vertices of $D$ are labelled consecutively within each base cycle, is the class of pairs that contain all letters labelled with odd numbers in one word and those labelled with even numbers in the other. The only exception occurs if $D$ has one or more base cycles of size $1$. Then there is a base edge of either colour between the same two vertices, leading to some redundancy. In particular, $\sP_1 = \sP_2$ if $n=(1,\dotsc,1)$. 
	
	Finally, we need to fix one distinguished representative of each class with respect to which we consider permutations. Assuming some arbitrary ordering of $i$-coloured base edges $(u_1,v_1), \dotsc$, $(u_N,v_N) \in E_D^i$ each equivalence class contains exactly one element that we notate $(\su_{\id},\sv_{\id})$ such that $\sa_{u_j}$ and $\sa_{v_j}$ are the $j$-th letters of $\su_{\id}$ and $\sv_{\id}$, or vice versa. For any other ordering of base edges the designated element would be related to $(\su_{\id}, \sv_{\id})$ by the same permutation in both words, such that the choice of ordering on $E_D^i$ does not matter.\\

	For all $(\su,\sv)\in \sP$ and partitions of $i$-coloured base edges $\cE = (E_1,\dotsc,E_{|\cE|}) \in \cP(E_D^i)$ define a map $\lambda_{\cE}$ as follows. Let
	\al{
		V_j \defeq \bigcup_{(u,v) \in E_j} \{u,v\} \subseteq V_D \label{eq_PartVert}
	}
	be the set of vertices in the part $E_j$ and consider the restriction
	\al{
		(\su_j,\sv_j) = (\su, \sv)|_{ \sa_k = 1 \ \forall k \in V_D\setminus V_j}
	} 
	of $(\su,\sv)$ to the alphabet corresponding to these vertices. In each $(\su_j,\sv_j)$ all letters not associated to this part of the partition are removed but, critically, the order of the remaining letters is preserved. Then
	\al{
		\lambda_{\cE}(\su,\sv) \defeq \begin{cases}
							\{ (\su_1,\sv_1), \dotsc, (\su_{|\cE|},\sv_{|\cE|}) \}	& \text{ if } |\su_j| = |\sv_j| \text{ for all } 1\leq j \leq |\cE|,\\
							\hspace{5em} \emptyset		& \text{ else.}
						\end{cases}	\label{eq_deflambda}
	}
	The concatenations $\su_1\dotsm\su_{|\cE|}$ and $\sv_1\dotsm\sv_{|\cE|}$ are then permutations of $\su$ and $\sv$ (which are themselves permutations of the words $\su_{\id}, \sv_{\id}$ of their equivalence class) and we define
	\al{
		\sgn_{\cE}(\su,\sv) \defeq \begin{cases} \hspace{2.5em} 0 & \text{ if } \lambda_{\cE}(\su,\sv) = \emptyset, \\
									\sgn(\sigma)\sgn(\sigma') & \text{ else,}
							\end{cases}
							\label{eq_defsgn}
	}
	where $\sigma, \sigma' \in S_N$ are the permutations with $\sigma(\su_{\id}) = \su_1\dotsm\su_{|\cE|}$ and $\sigma'(\sv_{\id}) = \sv_1\dotsm\sv_{|\cE|}$. With this we are now ready to insert these types of words into certain combinations of Dodgson polynomials, which we will call partition polynomials.

\begin{definition}\label{def_PartPol}
	Let $\Gamma$ be a QED Feynman graph with the associated chord diagram $D_{\Gamma}$ such that $D\equiv \pi_0(D_{\Gamma}) = D_{\Gamma}^0 \in \cD^n_0$ with $n\in \NN^{\ell}$, $N=\sum_i n_i = h_1(\Gamma)$. Then we define the partition polynomial of $\Gamma$ to be
	\al{
		Z_{\Gamma}^0(\alpha) &\defeq \sum_{\cE \in \cP(E_D^1)} (-\kp)^{N-|\cE|} (|\cE|+1)! \sum_{ (\su,\sv)\in \sP_2 } \sgn_{\cE}(\su,\sv) \nquad
		\prod_{ (\su',\sv')\in\lambda_{\cE}(\su,\sv)}\nqquad\cp{\su'}{\sv'},
		\label{eq_PartPol}
	}
	where $\cP(E_D^1)$ is the set of all partitions of $1$-coloured base edges of $D$. Moreover, for $1\leq l \leq N$ let
	\al{
		Z_{\Gamma}^0\big|_l \defeq \sum_{\substack{\cE \in \cP(E_D^1)\\ |\cE| = l}} \sum_{ (\su,\sv)\in \sP_2 }\sgn_{\cE}(\su,\sv) \nquad \prod_{ (\su',\sv')\in\lambda_{\cE}(\su,\sv) }\nqquad\cp{\su'}{\sv'}
	}
	such that
	\al{
		Z_{\Gamma}^0 = \sum_{l=1}^N (-\kp)^{N-l} (l+1)!\ Z_{\Gamma}^0\big|_l.\label{eq_ppshort}
	}
\end{definition}
	Note that using partitions $\cP(E_D^2)$ in the first and words $(\su,\sv)\in \sP_1$ in the second sum yields the exact same polynomial. This symmetry is not quite obvious from this definition but will become so in the proof of theorem \ref{theo_main} below.	However, the separate polynomials $Z_{\Gamma}^0\big|_l$ do differ considerably depending on whether one sums over $\cP(E_D^1)$ and $\sP_2$ or $\cP(E_D^2)$ and $\sP_1$. Hence, when discussing these polynomials specifically, one should make clear which one is chosen. We reiterate that the sum $Z_{\Gamma}^0$ is independent of this choice, which only reflects two different possible decompositions.\\
	
	Based on this definition we can introduce a similar polynomial that incorporates vertex-indexing in Dodgson polynomials. For the purposes of this article it suffices to stick to a very specific vertex indexing, but it should certainly be possible to extend this to include any type of Dodgson polynomial.

\begin{definition}\label{def_PartPol_1}
	Let everything be as in def. \ref{def_PartPol}. Additionally, let $\Gamma$ be a Feynman graph with only two non-zero external momenta and $x,y \in V_{\Gamma}^{ext}$ the corresponding external vertices. Let $\sy$ be the additional letter representing $y$ and assume that the deleted column and row of the graph matrix corresponds to $x$, i.e. all Dodgson polynomials are $\cp{\su}{\sv} \equiv \cp[\Gamma,x]{\su}{\sv}$. Define
	\al{\nonumber
		Z_{\Gamma}^1\big|_l &\defeq 
			\sum_{\substack{\cE \in \cP(E_D^1)\\ |\cE| = l}} \sum_{ (\su,\sv)\in \sP_2 } \sgn_{\cE}(\su,\sv) \nquad
			\prod_{ (\su',\sv')\in\lambda_{\cE}(\su,\sv)}\nqquad\cp{\su'}{\sv'}\nquad \sum_{ (\su',\sv')\in\lambda_{\cE}(\su,\sv) } 
			\bigg( \frac{\cp{\su'\sy}{\sv'\sy}}{\cp{\su'}{\sv'}} - \frac{\vssp}{\kp} \bigg)\\[3mm]
			&= -l\frac{\vssp}{\kp}Z_{\Gamma}^0\big|_l + \sum_{\substack{\cE \in \cP(E_D^1)\\ |\cE| = l}} \sum_{ (\su,\sv)\in \sP_2 } \sgn_{\cE}(\su,\sv) \nquad
			\prod_{ (\su',\sv')\in\lambda_{\cE}(\su,\sv)}\nqquad\cp{\su'}{\sv'}\nquad \sum_{ (\su',\sv')\in\lambda_{\cE}(\su,\sv) } \frac{\cp{\su'\sy}{\sv'\sy}}{\cp{\su'}{\sv'}}.
			\label{eq_ppol1_aux}
	}
	Then we define the first order partition polynomial of $\Gamma$ to be
	\al{
		Z_{\Gamma}^1(\alpha) &\defeq \frac{1}{2} \sum_{l=1}^N (-\kp)^{N-l+1} (l+1)!\ Z_{\Gamma}^1\big|_l.
			\label{eq_PartPol_1}
	}
\end{definition}
	
	Note the additional factors of $1/2$ and $-\kp$, in contrast to \refeq{eq_PartPol} above. Together with the observation that $\vssp=\cp{\sy}{\sy}$ and $\kp = \cp{\emptyset}{\emptyset}$ in the first line of \refeq{eq_ppol1_aux} this suggests a straightforward generalisation
	\al{
		Z_{\Gamma}^k(\alpha) &\defeq \frac{1}{2^k} \sum_{l=1}^N (-\kp)^{N-l+k} (l+1)!\ Z_{\Gamma}^k\big|_l.
	}
	$Z_{\Gamma}^k\big|_l$ should contain something like a sum over all choices of $k$ word pairs in $\lambda_{\cE}(\su,\sv)$ to which the letter $\sy$ is added. Then the factor $1$ in $Z_{\Gamma}^0\big|_l$ corresponds to a sum over the unique choice of no element at all and the sum in $Z_{\Gamma}^1\big|_l$ is the sum over choices of exactly one word pair. If this is in fact a correct (i.e. useful) generalised definition shall be studied in future work. For now we will concentrate on the cases of order $0$ and $1$.


\section{Polynomial identities}\label{sec_Pol_id}

	The statement of our two main theorems is now that the two partition polynomials $Z_{\Gamma}^0$ and $Z_{\Gamma}^1$ are in fact equal to the sums of chord diagrams, with products of cycle polynomials in each summand, that appear in the parametric integrand of QED.

%
%
%
\subsection{The first summation theorem}
	
\begin{theorem}\label{theo_main}
	\al{
		Z_{\Gamma}^0 = \frac{1}{2} \sum_{D\in \cD_{\Gamma}^0} (-2)^{\tilde c(D)} \nquad\prod_{(u,v)\in E_D^0} \cp{\sa_u}{\sa_v}.
		 \label{eq_maintheo}
	}
\end{theorem}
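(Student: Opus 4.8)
The plan is to prove the identity by induction on the number of chords one is allowed to add, building up the right-hand side chord-diagram sum from the empty-chord diagram $D\in\cD^n_0$ one chord at a time, and matching this process against the structure of $Z_{\Gamma}^0$ as a sum over partitions $\cE\in\cP(E_D^1)$. The key observation to exploit is the relation $\tilde c(D') = \tilde c(D_0) + \sgn(u,v)$ from section \ref{sec_colours}: adding a chord $\{u,v\}$ either merges two tricoloured cycles (when $\sgn(u,v)=-1$, contributing a factor that must reconstruct a Kirchhoff polynomial times a lower partition polynomial) or splits one (when $\sgn(u,v)\in\{0,1\}$, refining the partition). So the first step is to set up the bijection between, on the one hand, chord diagrams $D'\in\cD^0_\Gamma$ together with the combinatorial data of their tricoloured cycle decomposition, and on the other hand, pairs consisting of a partition $\cE$ of $E_D^1$ and a word pair $(\su,\sv)\in\sP_2$ appearing in $\lambda_{\cE}$. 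The factor $(|\cE|+1)!$ in \refeq{eq_PartPol} should account for the overcounting: distinct chord diagrams whose tricoloured cycle structure induces the same partition, related by permuting which cycle is which and by rotating chords within a cycle.

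Second, I would prove the single-chord step as a lemma: expressing $\cp{\sa_u}{\sa_v}$ (a single edge-indexed cycle polynomial) in terms of the data it contributes, using \refeq{eq_defcycpol} and the Dodgson-identity-based recursive definition \refeq{eq_DefExpansion}, together with lemma \refeq{eq_lemma_bpcp} and the contraction-deletion relations. The crucial algebraic input is the cofactor-type expansion of $\cp[G]{\su}{\sv}$ displayed just after Definition \ref{02c_def_dodgsoncyclepol} (the $k=3$ example generalises), which lets one peel off one letter from $\su$ against each possible letter of $\sv$, weighted by $\kp^{-1}$ and a sign. Matching the signs here against $\sgn_{\cE}(\su,\sv)$ from \refeq{eq_defsgn} — which is a product $\sgn(\sigma)\sgn(\sigma')$ of the two reordering permutations — is where the bulk of the bookkeeping lives: one must check that the sign a chord contributes in the trace combinatorics of \cite{Golz_2017_Traces} coincides, monomial by monomial, with the sign produced by the Dodgson expansion, just as Proposition \ref{05a_prop_eqpol} matched signs for a single cycle polynomial via \cite[Corollary 17]{BrownYeats_2011_SpanningForest}.

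Third, I would assemble the induction: assume the identity holds for all graphs/configurations with fewer loops (or fewer unfixed chords), add one chord in all possible ways to the diagrams on the right-hand side, split the resulting sum according to whether the new chord merges or splits tricoloured cycles, and recognise the two pieces as (i) $-\kp$ times a partition polynomial with one fewer part and (ii) a partition polynomial with the same or refined partition structure, thereby reproducing the telescoping form \refeq{eq_ppshort}. The factors $(-2)^{\tilde c(D)}$ on the right must be tracked through this process: each merge multiplies by $(-2)^{-1}$ relative to the naive count and each split by $(-2)^{+1}$, which is exactly compensated by the explicit $\tfrac12$ prefactor in \refeq{eq_maintheo} together with the $(-\kp)^{N-l}$ and $(l+1)!$ weights once one correctly counts how many chord diagrams map to a given $(\cE,(\su,\sv))$.

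The main obstacle I expect is the sign bookkeeping in the second and third steps: reconciling the three different sign conventions in play — the relative-orientation signs $\ro_C$ inside cycle polynomials, the permutation signs $\sgn(\sigma)\sgn(\sigma')$ in $\sgn_{\cE}$, and the chord-diagram sign $(-2)^{\tilde c(D)}$ arising from Dirac traces — and showing they conspire correctly on every monomial. A secondary difficulty is handling the degenerate case of base cycles of size $1$, where $\sP_1=\sP_2$ and the partitions $\cP(E_D^1)$ and $\cP(E_D^2)$ coincide, which is presumably why the theorem is stated with the symmetric $\tfrac12\sum$ and why the remark after Definition \ref{def_PartPol} flags that the symmetry between the two decompositions "will become obvious in the proof"; I would treat that symmetry as a consequence of the bijection rather than assume it, and let it double-count the $n=(1,\dots,1)$ contributions in a controlled way.
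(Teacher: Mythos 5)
Your intermediate machinery points in the right direction: peeling one letter off a long Dodgson cycle polynomial via the cofactor expansion and reading each term as the addition of a chord to a smaller diagram is exactly how the paper's lemmas \ref{lemma_fullpart} and \ref{lemma_part_2} work, converting the word-pair sum for a fixed partition $\cE$ into a sum over the restricted diagram set $\cD|_{\cE}^0$ weighted by $(-2)^{c_2^2(D)-1}$. But the core of your plan has a genuine gap: the correspondence you posit between chord diagrams (with their coloured-cycle data) and pairs $(\cE,(\su,\sv))$ is not a bijection, and the factor $(|\cE|+1)!$ is not an overcounting factor for ``permuting which cycle is which and rotating chords within a cycle.'' A fixed complete diagram $D\in\cD_{\Gamma}^0$ has its $1$-coloured cycles inducing a partition of $E_D^1$, and $D$ lies in $\cD|_{\cE}^0$ for \emph{every} coarsening $\cE$ of that induced partition; the number of such $\cE$ with $l$ parts is the Stirling number $S(c_2^1(D),l)$, which depends on the diagram and is not a relabelling symmetry. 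Also note that for complete diagrams there are no free vertices, so $\tilde c(D)=c_2^1(D)+c_2^2(D)$ and no tricoloured cycles occur; the merge/split analysis via $\sgn(u,v)$ that you make central is not where the factorial weight originates.

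The missing idea is the identity $\sum_{l=1}^{k} S(k,l)(-1)^l(l+1)! = (-2)^k$ (lemma \ref{lemma_maintheo}, proved in the paper via the polylogarithm identity of proposition \ref{prop_polylog}). The paper first applies lemma \ref{lemma_part_2}, then exchanges the sum over partitions with the sum over diagrams; for each fixed $D$ the weights $(-1)^{|\cE|+1}(|\cE|+1)!$ accumulate into precisely this Stirling-weighted sum, which collapses to $(-2)^{c_2^1(D)}$ and combines with $(-2)^{c_2^2(D)}$ to give $(-2)^{\tilde c(D)}$. So the compensation you assert in your third step (``exactly compensated by the $\tfrac12$ prefactor together with the $(-\kp)^{N-l}$ and $(l+1)!$ weights'') only happens after summing over all partition sizes compatible with a fixed diagram, not chord by chord, and your induction cannot close without this identity. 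A secondary gap is that for partitions whose parts are not consecutive along a base cycle the word-pair sum does not factorise; the paper needs a separate counting argument (via Vandermonde's identity) and a partial-factorisation reduction inside lemma \ref{lemma_part_2}, which your plan does not anticipate.
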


	\noindent In order to prove this we first need some auxiliary results. First we attempt to study the summation by essentially working backwards and looking at sums $\sum \cp{\su_{\id}}{\sv_{\id}}$ for $(\su_{\id},\sv_{\id})\in \sP_2$, which appear in the partition polynomial for the single part partition $\cE = \{E_D^1\}$.

\begin{lemma}\label{lemma_fullpart}
	Let $\sP_j$ be as above and $c_2^j(D) \defeq |\cC_D^{0j}|$ the number of two-coloured cycles consisting of chords and $j$-coloured base edges (such that $c_2(D) = c_2^1(D)+c_2^2(D)$). Then
	\al{
		\sum_{ (\su_{\id},\sv_{\id})\in \sP_j } \cp{\su_{\id}}{\sv_{\id}} 
			= (-\kp)^{1-N}\sum_{D\in\cD_{\Gamma}^0} (-2)^{c_2^j(D)-1} \nquad\prod_{(u,v)\in E_D^0} \cp{\sa_u}{\sa_v},\label{eq_lemma_fullpart}
	}
\end{lemma}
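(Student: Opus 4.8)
The plan is to unfold the recursive definition (\ref{eq_DefExpansion}) of each $\cp{\su_{\id}}{\sv_{\id}}$ into a signed sum of products of single-edge cycle polynomials, and then to read off the bookkeeping data as chord diagrams in $\cD_{\Gamma}^0$.

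First I would make $\sP_j$ explicit. The $j$-coloured base edges $E_D^j=\{f_1,\dots,f_N\}$ are the alternate edges of the base cycles of $D$ and hence form a perfect matching of its $2N$ vertices; up to the symmetry $\cp{\su}{\sv}=\cp{\sv}{\su}$, an element of $\sP_j$ amounts to a choice $\epsilon\in\{0,1\}^N$ of one endpoint $c_m(\epsilon)$ of each $f_m$ to lie in $\su$, the other endpoint $d_m(\epsilon)$ then lying in $\sv$. Thus $\sum_{(\su_{\id},\sv_{\id})\in\sP_j}\cp{\su_{\id}}{\sv_{\id}}=\tfrac12\sum_{\epsilon\in\{0,1\}^N}\cp{\su_{\id}^{\epsilon}}{\sv_{\id}^{\epsilon}}$. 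Applying (\ref{eq_DefExpansion}) with $k=N$ to each term, and noting that distinct $f_m$ share no vertex (so that $c_m(\epsilon)\neq d_{\sigma(m)}(\epsilon)$ always and every factor below genuinely pairs two distinct vertices of $D$), rewrites the left-hand side of (\ref{eq_lemma_fullpart}) as $\tfrac12\,\kp^{1-N}\sum_{\epsilon}\sum_{\sigma\in S_N}\sgn(\sigma)\prod_{m=1}^{N}\cp{\sa_{c_m(\epsilon)}}{\sa_{d_{\sigma(m)}(\epsilon)}}$.

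The crucial point is that a pair $(\epsilon,\sigma)$ encodes a perfect matching of the $2N$ vertices, namely $\{c_m(\epsilon),d_{\sigma(m)}(\epsilon)\}$ for $m=1,\dots,N$; adjoining these as chords to the base cycles of $D$ produces a diagram $D'\in\cD_{\Gamma}^0$, and by symmetry of the single-edge cycle polynomials the product above equals $\prod_{(u,v)\in E_{D'}^0}\cp{\sa_u}{\sa_v}$, which depends only on $D'$. I would therefore reorganise the double sum as $\sum_{D'\in\cD_{\Gamma}^0}\big(\sum_{(\epsilon,\sigma)\mapsto D'}\sgn(\sigma)\big)\prod_{(u,v)\in E_{D'}^0}\cp{\sa_u}{\sa_v}$ and evaluate each fibre contribution. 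A choice $\epsilon$ lies over a fixed $D'$ exactly when its set of chosen endpoints is simultaneously a transversal of $E_D^j$ and of the chord matching of $D'$, i.e.\ a transversal of their union; but that union is precisely the disjoint union of the $c_2^j(D')$ two-coloured cycles $\cC_{D'}^{0j}$, and a transversal of an even cycle is one of its two alternating vertex classes, so there are exactly $2^{c_2^j(D')}$ admissible $\epsilon$, each of which determines $\sigma$ uniquely (follow $c_m(\epsilon)$ along its chord to some $d_{m'}$ and set $\sigma(m)=m'$).

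It then remains to compute the sign. One checks that $\sigma$ decomposes into one cyclic factor per two-coloured cycle in $\cC_{D'}^{0j}$, of length equal to half that cycle's edge count, whence $\sgn(\sigma)=(-1)^{N-c_2^j(D')}$; moreover this value is independent of the admissible $\epsilon$, since flipping $\epsilon$ along a single two-coloured cycle merely reverses the corresponding cyclic factor of $\sigma$. Hence $\sum_{(\epsilon,\sigma)\mapsto D'}\sgn(\sigma)=(-1)^{N-c_2^j(D')}2^{c_2^j(D')}$, and inserting this together with the prefactor $\tfrac12\kp^{1-N}$ and simplifying via $\tfrac12(-1)^{N-c_2^j}2^{c_2^j}\kp^{1-N}=(-\kp)^{1-N}(-2)^{c_2^j-1}$ yields (\ref{eq_lemma_fullpart}). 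The main obstacle is exactly this middle step: setting up the fibration $(\epsilon,\sigma)\mapsto D'$, identifying its fibres with the transversals of the two-coloured cycles, and, above all, pinning down $\sgn(\sigma)$ uniformly over each fibre. Some extra care is needed for base cycles of size $1$ (where a chord may run parallel to a base edge and $\sP_1=\sP_2$), but such a bigon in $\cC_{D'}^{0j}$ simply contributes a fixed point to $\sigma$ and is counted as one two-coloured cycle, consistently with the formula.
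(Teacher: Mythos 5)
Your proof is correct, but it takes a genuinely different route from the paper's. The paper argues by induction on $N$: it applies the cofactor-type expansion of $\cp{\su_{\id}}{\sv_{\id}}$ in the first letter (its eq. (\ref{eq_Dodgson_exp})), interprets each resulting term as the corresponding word-pair sum for a diagram with one chord added, and closes the induction by tracking the factor $-(-2)$ (chord parallel to a $j$-coloured base edge, creating a two-coloured cycle) against $-1$ (any other chord). You instead unfold the definition (\ref{eq_DefExpansion}) with $k=N$ into a sum over pairs $(\epsilon,\sigma)$ of endpoint choices and permutations, fibre that sum over complete diagrams $D'\in\cD_{\Gamma}^0$, identify each fibre with the transversals of the union of the two perfect matchings $E_D^j$ and $E_{D'}^0$ (hence of size $2^{c_2^j(D')}$, with $\sigma$ determined by $\epsilon$), and read off $\sgn(\sigma)=(-1)^{N-c_2^j(D')}$ from the cycle type of $\sigma$, one cyclic factor per two-coloured cycle. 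I checked the key steps: the two-to-one correspondence between $\{0,1\}^N$ and $\sP_j$ (which uses the symmetry $\cp{\su}{\sv}=\cp{\sv}{\su}$ asserted in the paper and matches the paper's own count via the set $\cT$ with $t_1=0$), the fibre count, the sign computation together with its independence of the admissible $\epsilon$ (flipping a transversal along one two-coloured cycle inverts the corresponding cyclic factor), the bigon case for base cycles of size one, and the final bookkeeping $\tfrac12(-1)^{N-c_2^j}2^{c_2^j}\kp^{1-N}=(-\kp)^{1-N}(-2)^{c_2^j-1}$; all are sound. What the paper's induction buys is reusability: the same one-letter reduction is the engine of lemma \ref{lemma_part_2} (restricted partitions, where full factorisation fails) and of lemma \ref{lemma_z1_lemma}, whereas your global matching argument would need to be adapted there, since the sums restricted by $\lambda_{\cE}$ and $\sgn_{\cE}$ no longer expand into all of $S_N$. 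What your argument buys is a self-contained, non-inductive proof of this particular lemma, with an explicit combinatorial meaning for both the power of $2$ (one transversal choice per two-coloured cycle) and the overall sign (the cycle type of $\sigma$), avoiding base cases and the paper's somewhat informal inductive bookkeeping about chords that merely split, twist or merge base cycles.
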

\begin{proof}
	Quick computations show that the claim holds for all $n$ with $N=\sum n_i = 1, 2$, and even $N=3$ is only mildly tedious, as shown below in example \ref{ex_lem1}. We now reduce the l.h.s. of \refeq{eq_lemma_fullpart} to a sum over expressions corresponding to $N-1$, in order to prove by induction.
	
	Consider a word pair $( \sx_{11}\dotsm\sx_{1N}, \sx_{21}\dotsm\sx_{2N} )$ with all $\sx_{ij}\in \sA$. Assuming this word is a representative $(\su_{\id},\sv_{\id})\in \sP_j$, each pair $(\sx_{1k},\sx_{2k})$ of $k$-th letters corresponds to a base edge of $E_{D_0}^j$, for a chord diagram $D_0\in \cD_0^n$. With \refeq{eq_DefExpansion} its Dodgson polynomial can be written as
	\al{\nonumber
		\kp&\cp{\sx_{11}\dotsm\sx_{1N}}{\sx_{21}\dotsm\sx_{2N}}\\\nonumber
				&\quad= \sum_{k=1}^{N} (-1)^{1+k} \cp{\sx_{11}}{\sx_{2k}}
				\cp{\sx_{12}\dotsm\sx_{1N}}{\sx_{21}\dotsm\hat\sx_{2k}\dotsm\sx_{2N}}\\[3mm]
				&\quad= \cp{\sx_{11}}{\sx_{21}} \cp{\sx_{12}\dotsm\sx_{1N}}{\sx_{22}\dotsm\sx_{2N}}
				-\sum_{k=2}^{N} \cp{\sx_{11}}{\sx_{2k}}
				\cp{\sx_{1k}\sx_{12}\dotsm\hat\sx_{1k}\dotsm\sx_{1N}}{\sx_{21}\dotsm\hat\sx_{2k}\dotsm\sx_{2N}}.\label{eq_Dodgson_exp}
	}
	Moving the letter $\sx_{1k}$ in the last line guarantees that the letter pairs $(\sx_{1l},\sx_{2l})$, with $l\neq 1,k$, are still paired up in the expansion. In fact, the word pairs
	\al{
		(\sx_{1k}\sx_{12}\dotsm\hat\sx_{1k}\dotsm\sx_{1N}, \sx_{21}\dotsm\hat\sx_{2k}\dotsm\sx_{2N})
	}
	are the representatives $(\su_{\id}', \sv_{\id}')$ of an equivalence class of word pairs associated to the diagram $\pi_0(D)$, where $D$ is $D_0$ together with the chord corresponding to the letter pair $(\sx_{11},\sx_{2k})$. The sum over all equivalence classes in $\sP_j$ can be realised by summing word pairs of the form
	\al{
		( \sx_{(1+t_1)1}\dotsm\sx_{(1+t_N)N},\ \sx_{(2-t_1)1}\dotsm\sx_{(2-t_N)N} )
	}
	over all $N$-tuples in $\cT=\{ t\in \{0,1\}^N \ | \ t_1=0\}$. One finds
	\al{\nonumber
		\kp&\sum_{t\in\cT} \cp{\sx_{(1+t_1)1}\dotsm\sx_{(1+t_N)N}}{\sx_{(2-t_1)1}\dotsm\sx_{(2-t_N)N}}\\\nonumber
				&\quad= \cp{\sx_{11}}{\sx_{21}} \sum_{t\in\cT} \cp{\sx_{(1+t_2)2}\dotsm\sx_{(1+t_N)N}}{\sx_{(2-t_2)2}\dotsm\sx_{(2-t_N)N}}\\
				&\qquad -\sum_{k=2}^{N}\sum_{t\in\cT} \cp{\sx_{11}}{\sx_{(2-t_k)k}}
		\cp{\sx_{(1+t_k)k}\sx_{(1+t_2)2}\dotsm\hat\sx_{(1+t_k)k}\dotsm\sx_{(1+t_N)N}}{\sx_{21}\dotsm\hat\sx_{(2-t_k)k}\dotsm\sx_{(2-t_N)N}}.
		\label{eq_lemma_expansion}
	}
	Now we want to translate this back to vertices of a chord diagram. Let $u,v\in V_{D_0}$ such that $\sx_{11}=\sa_u$, $\sx_{21}=\sa_v$ and $(u,v) \in E_{D_0}^j$. Note that, by definition of $\sP_j$, such $u,v$ always exist. Then \refeq{eq_lemma_expansion} becomes
	\al{
		\kp\nquad\sum_{ (\su_{\id},\sv_{\id})\in \sP_j } \cp{\su_{\id}}{\sv_{\id}}
			&= 2\cp{\sa_u}{\sa_v} \nquad\sum_{ (\su_{\id}', \sv_{\id}') \in \sP_j^{u,v}}\nquad\cp{\su_{\id}'}{\sv_{\id}'} 
				-\sum_{\substack{ w\in V_{D_0}\\ w\neq u,v }} \cp{\sa_u}{\sa_w} 
					\nquad\sum_{ (\su_{\id}', \sv_{\id}') \in \sP_j^{u,w}}\nquad\cp{\su_{\id}'}{\sv_{\id}'},\label{eq_prop_exp}
	}
	where $\sP_j^{u,v}$ and $\sP_j^{u,w}$ are the classes of word pairs after addition of the chords $(u,v)$ or $(u,w)$ respectively. Replacing these sums with the corresponding r.h.s. of \refeq{eq_lemma_fullpart} finishes the proof, where the factor of $-(-2)$ in the first term corresponds to the addition of the cycle that consists of the $j$-coloured base edge $(u,v)$ and the chord between those same vertices. All other chords $(u,w)$ added to $D_0$ do not add two-coloured cycles but only split, twist or merge base cycles when projected out with $\pi_0$.
\end{proof}

\begin{example}\label{ex_lem1}
	Consider as an example $N=3$ with a single base cycle. Label vertices consecutively from $1$ to $6$ and choose $j$ to be the colour of $(1,2)$. Then the sum over word pairs in $\sP_j$ on the l.h.s. of \refeq{eq_lemma_fullpart} is
	\al{
		\cp{\sa_1\sa_3\sa_5}{\sa_2\sa_4\sa_6} + \cp{\sa_1\sa_4\sa_5}{\sa_2\sa_3\sa_6} + \cp{\sa_1\sa_3\sa_6}{\sa_2\sa_4\sa_5} + \cp{\sa_1\sa_4\sa_6}{\sa_2\sa_3\sa_5}.
	}
	Expanding each term as defined in \refeq{eq_DefExpansion} yields $\kp^{-2}$ times 24 terms, 15 of which are distinct, such that one finds
	\al[*]{
		4&\cp{\sa_1}{\sa_2}\cp{\sa_3}{\sa_4}\cp{\sa_5}{\sa_6} - 2\cp{\sa_1}{\sa_2}\cp{\sa_3}{\sa_5}\cp{\sa_4}{\sa_6} - 2\cp{\sa_1}{\sa_2}\cp{\sa_3}{\sa_6}\cp{\sa_4}{\sa_5}\\[2mm]
		-2&\cp{\sa_1}{\sa_3}\cp{\sa_2}{\sa_4}\cp{\sa_5}{\sa_6} + \cp{\sa_1}{\sa_3}\cp{\sa_2}{\sa_5}\cp{\sa_4}{\sa_6} + \cp{\sa_1}{\sa_3}\cp{\sa_2}{\sa_6}\cp{\sa_4}{\sa_5}\\[2mm]
		-2&\cp{\sa_1}{\sa_4}\cp{\sa_2}{\sa_3}\cp{\sa_5}{\sa_6} + \cp{\sa_1}{\sa_4}\cp{\sa_2}{\sa_5}\cp{\sa_3}{\sa_6} + \cp{\sa_1}{\sa_4}\cp{\sa_2}{\sa_6}\cp{\sa_3}{\sa_5}\\[2mm]
		 +&\cp{\sa_1}{\sa_5}\cp{\sa_2}{\sa_3}\cp{\sa_4}{\sa_6} + \cp{\sa_1}{\sa_5}\cp{\sa_2}{\sa_4}\cp{\sa_3}{\sa_6} - 2\cp{\sa_1}{\sa_5}\cp{\sa_2}{\sa_6}\cp{\sa_3}{\sa_4}\\[2mm]
		 +&\cp{\sa_1}{\sa_6}\cp{\sa_2}{\sa_3}\cp{\sa_4}{\sa_5} + \cp{\sa_1}{\sa_6}\cp{\sa_2}{\sa_4}\cp{\sa_3}{\sa_5} - 2\cp{\sa_1}{\sa_6}\cp{\sa_2}{\sa_5}\cp{\sa_3}{\sa_4}.
	}
	Now one can simply check each summand by counting the cycles of the corresponding chord diagram, while keeping in mind that \textit{only} the bicoloured cycles with chords and $j$-coloured base edges are counted. For example, in the first term each factor corresponds to a chord $(1,2)$, $(3,4)$, $(5,6)$, each spanning exactly one of the $j$-coloured base edges. Hence, there are three such cycles and $(-2)^{c_2^j(D)-1} = 4$.
\end{example}

	The obvious next questions is now: Can we find such an identity for all partitions? Indeed, we can.

\begin{lemma}\label{lemma_part_2}
	Let $\cE \in \cP(E_{D_0}^1)$ be any partition of $1$-coloured base edges of a diagram $D_0\in \cD_0^n$ and $\sP_2$ the corresponding word pairs as above. Then
	\al{
		\sum_{ (\su,\sv)\in \sP_2 }\nquad \sgn_{\cE}(\su,\sv) \nquad \prod_{ (\su',\sv')\in\lambda_{\cE}(\su,\sv) } \nqquad\cp{\su'}{\sv'}
		= (-1)^{1-|\cE|}(-\kp)^{|\cE|-N}\nquad\sum_{D\in\cD|_{\cE}^0} (-2)^{c_2^2(D)-1} \nquad\prod_{(u,v)\in E_D^0} \nquad\cp{\sa_u}{\sa_v},		
	}
	where $\cD|_{\cE}^0\subset \cD_{\Gamma}^0 \simeq \cD_N^n$ is the subset of complete chord diagrams with base cycles given by $n$ (and vertices labelled by edges of $\Gamma$) that is restricted by demanding that all chords of a diagram can only connect vertices that lie within the same part of $\cE$.
\end{lemma}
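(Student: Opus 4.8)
The plan is to reduce Lemma~\ref{lemma_part_2} to Lemma~\ref{lemma_fullpart} by an induction on the number of parts $|\cE|$. Lemma~\ref{lemma_fullpart} is precisely the case $|\cE|=1$ (the single-part partition $\cE=\{E_{D_0}^1\}$), where $\lambda_{\cE}$ returns the single pair $(\su_{\id},\sv_{\id})$, $\sgn_{\cE}\equiv 1$, and $\sP_2$ ranges over all equivalence classes; so the base of the induction is already done. For the inductive step I would fix a partition $\cE=(E_1,\dots,E_{|\cE|})$ of $E_{D_0}^1$ and isolate one part, say $E_{|\cE|}$. The key observation is that the defining map $\lambda_{\cE}$ factorises: a word pair $(\su,\sv)\in\sP_2$ that survives (i.e.\ $\lambda_{\cE}(\su,\sv)\neq\emptyset$) restricts independently on the vertex sets $V_1,\dots,V_{|\cE|}$ of \refeq{eq_PartVert}, and the product $\prod_{(\su',\sv')\in\lambda_{\cE}(\su,\sv)}\cp{\su'}{\sv'}$ is the product of the Dodgson polynomials of the pieces. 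Correspondingly, the sign $\sgn_{\cE}(\su,\sv)$ is the product of the two shuffle-signs $\sgn(\sigma)\sgn(\sigma')$ that reorder $\su_{\id},\sv_{\id}$ into the block form $\su_1\dots\su_{|\cE|}$, $\sv_1\dots\sv_{|\cE|}$.

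Concretely, I would split the outer sum over $(\su,\sv)\in\sP_2$ as a sum over the restriction to $V_1\cup\dots\cup V_{|\cE|-1}$ (which is governed by the coarser partition $\cE'=(E_1,\dots,E_{|\cE|-1})$ of the $1$-coloured base edges on those vertices) times a sum over the restriction to $V_{|\cE|}$ (which is the full single-part case on the sub-chord-diagram spanned by $V_{|\cE|}$). The shuffle sign is multiplicative under this splitting up to a fixed reordering sign that depends only on $\cE$, $D_0$ and the chosen representatives, not on the summation variable; this is exactly the bookkeeping that produces the combinatorial prefactor $(-1)^{1-|\cE|}$. Applying the inductive hypothesis to the $\cE'$-part and Lemma~\ref{lemma_fullpart} (in its ``passing to a minor'' form, since the $V_{|\cE|}$-restriction really lives on the graph with the other edges' $\alpha$'s unaffected) to the $E_{|\cE|}$-part, the product of the two right-hand sides is a double sum over a diagram $D'$ with chords confined to $V_1\cup\dots\cup V_{|\cE|-1}$ within parts of $\cE'$, and a diagram $D''$ with chords confined to $V_{|\cE|}$. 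Reassembling $D=D'\sqcup D''$ gives exactly a generic element of $\cD|_{\cE}^0$, and one checks $c_2^2(D)=c_2^2(D')+c_2^2(D'')$ (the $2$-coloured cycles of colour $2$ are counted locally within each part of the $1$-coloured partition, so they do not interact across parts) and $\prod_{(u,v)\in E_D^0}\cp{\sa_u}{\sa_v}=(\prod_{D'})(\prod_{D''})$; collecting the powers of $(-\kp)$ and $(-2)$ and the sign $(-1)^{1-|\cE|}$ then matches the claimed right-hand side.

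The main obstacle I anticipate is the careful tracking of signs: one must verify that the shuffle sign $\sgn_{\cE}(\su,\sv)$ really does factor as (sign for $\cE'$-restriction)$\times$(sign for $E_{|\cE|}$-restriction)$\times$(a constant), and that this constant combines with the $(-1)$ from Lemma~\ref{lemma_fullpart}'s $(-2)^{c_2^j(D)-1}$ normalisation to give precisely $(-1)^{1-|\cE|}$ rather than something off by a sign. A clean way to handle this is to work with the designated representatives $(\su_{\id},\sv_{\id})$ throughout and note that since a shuffle of disjoint blocks has sign equal to the product of the within-block signs (because inter-block transpositions come in pairs once the blocks are arranged in a fixed order), the cross-term sign is trivial; the only genuine factor comes from the $|\cE|-1$ reuses of the $\kp^{1-N}$-normalised expansion, each contributing one factor of $-1$ relative to the naively concatenated product, which accounts for $(-1)^{1-|\cE|}$ and simultaneously for the shift $(-\kp)^{1-N}\to(-\kp)^{|\cE|-N}$. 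The degenerate case where $D_0$ has base cycles of size~$1$ (so $\sP_1=\sP_2$ may collapse, cf.\ the discussion before Definition~\ref{def_PartPol}) should be checked not to cause double-counting, but since the statement only ever sums over $\sP_2$ and over partitions of $E_{D_0}^1$, the size-$1$ base cycles contribute a fixed trivial factor on both sides and cause no trouble.
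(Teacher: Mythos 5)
There is a genuine gap at the heart of your argument: the claimed factorisation of the sum over surviving word pairs into independent sums per part of $\cE$ is false in general. The constraint that every pair in $\lambda_{\cE}(\su,\sv)$ must have words of equal length couples the parts: swapping a letter pair $(\sx_{1r},\sx_{2r})$ with $r$ and $r-1$ in different parts forces a compensating swap at another boundary index, possibly in a \emph{different} part. Only when each part consists of consecutive base edges (after contracting the others) do these boundary constraints decouple, exactly half of the word pairs survive, and the sum splits as a product over parts --- this is the easy case, and it is the only case your induction covers. For interleaved parts the surviving word pairs form a genuinely coupled set: in the extreme case of two parts with no consecutive elements their number is $\tfrac12\binom{N}{N/2}$ (counted via Vandermonde's identity), which is not a product of per-part counts, and the sum does not factorise at all. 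Example \ref{ex_lemma_part_2} with $\cE=\{\{(8,1),(4,5)\},\{(2,3),(6,7)\}\}$ exhibits this explicitly: three surviving terms, not $2\times 2$. The paper's proof handles precisely this situation by a different induction --- expanding one Dodgson factor by cofactors (as in the expansion used for Lemma \ref{lemma_fullpart}) to reduce $N$ to $N-1$, picking up a factor $-\kp$, and showing that the resulting terms reorganise into expressions governed by smaller partitions of mixed consecutive/non-consecutive type, for which the reduction can be iterated. That reduction is the real content of the lemma and is absent from your proposal.

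A secondary issue: even in the factorising case your cycle-count bookkeeping is off. The $2$-coloured cycles are not simply additive under the splitting; cutting the contracted $2$-coloured base edge in each smaller diagram and regluing loses one cycle, so the correct relation is $c_2^2(D_I)-1+c_2^2(D_J)-1=c_2^2(D_{IJ})-1$, which is what makes the normalisation $(-2)^{c_2^2(D)-1}$ multiplicative. With your plain additivity $c_2^2(D)=c_2^2(D')+c_2^2(D'')$ the product of the two right-hand sides would come out off by a factor of $-2$. The sign analysis via shuffles of disjoint blocks is plausible in the consecutive case, but since the non-factorising partitions are where the actual work lies, the proposal as it stands does not prove the lemma.
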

\begin{proof}
	Consider again the word pair $( \sx_{11}\dotsm\sx_{1N}, \sx_{21}\dotsm\sx_{2N} )$. The letter pairs $(\sx_{1i}, \sx_{2i})$ correspond to $2$-coloured base edges, so the $1$-coloured base edges correspond to pairs $(\sx_{1(i+1)}, \sx_{2i})$ for $i \neq n_1, n_1+n_2, \dotsc, N$ as well as $(\sx_{11}, \sx_{2n_1})$, $(\sx_{1(n_1+1)}, \sx_{2(n_1+n_2)})$ etc. due to cyclicity in each base cycle. With this we can represent the partitions of $E_{D_0}^1$ by partitions of $\{1,\dots, N\}$.\\
	
	Assume at first that there is a single base cycle with $n_1=N$ and the partition has two parts, $I = \{i_1,\dotsc, i_{l_1}\} \subset \{1,\dots, N\}$ and $J = \{j_1,\dotsc, j_{l_2}\} \subset \{1,\dots, N\}$ with $j_{l_2} = N$. The extension to the general case is quite straightforward and discussed further below. Now look again at word pairs of the form
	\al[*]{
		( \sx_{(1+t_1)1}\dotsm\sx_{(1+t_N)N},\ \sx_{(2-t_1)1}\dotsm\sx_{(2-t_N)N} )
	}
	summed over all $N$-tuples in $\cT=\{ t\in \{0,1\}^N \ | \ t_1=0\}$. The map $\lambda_{\cE}$ restricts which tuples are permitted in the sum and describes how the remaining word pairs have to be split up. The only word pair that always yields a nonempty set under $\lambda_{\cE}$ is that of $t=(0,\dotsc,0)$ where one finds
	\al{\nonumber
		\lambda_{\cE}( \sx_{11}\dotsm&\sx_{1N}, \sx_{21}\dotsm\sx_{2N} )\\
			&= \{ ( \sx_{1(i_1+1)}\dotsm\sx_{1(i_{l_1}+1)}, \sx_{2i_1}\dotsm\sx_{2i_{l_1}} ), ( \sx_{1(j_1+1)}\dotsm\sx_{1(j_{l_2}+1)}, \sx_{2j_1}\dotsm\sx_{2j_{l_2}} )\label{eq_lemma2_lambda}
	}
	with the cyclic identification $\sx_{1(N+1)} = \sx_{11}$ understood. By construction both words in each pair have the same length, $l_1$ and $l_2$ respectively. Moreover, we can note that regardless of the specific partition the same permutation applied to both words of the concatenated pair
	\al[*]{
		( \sx_{1(i_1+1)}\dotsm\sx_{1(i_{l_1}+1)} \sx_{1(j_1+1)}\dotsm\sx_{1(j_{l_2}+1)} , \sx_{2i_1}\dotsm\sx_{2i_{l_1}}\sx_{2j_1}\dotsm\sx_{2j_{l_2}} )
	}
	returns $( \sx_{12}\dotsm\sx_{1N}\sx_{11}, \sx_{21}\dotsm\sx_{2N} )$, so here $\sgn_{\cE}(( \sx_{11}\dotsm\sx_{1N}, \sx_{21}\dotsm\sx_{2N} ) = (-1)^{N-1}$. 
	
	Next we need to study what happens for different word pairs, i.e. if the letter pairs $(\sx_{1r}, \sx_{2r})$ are exchanged for all $r$ in another subset $R \subset \{2,\dotsc N\}$. If $r$ and $r-1$ are both in $I$ or both in $J$ then the swap of $\sx_{1r}$ and $\sx_{2r}$ results in word pairs that still have equal length words since $\sx_{1r}$ is contained in the same word pair as $\sx_{2r}$. If $r$ and $r-1$ are not in the same part then we find that exchange of any single letter pair $(\sx_{1r}, \sx_{2r})$ will lead to words of different lengths in each pair such that the term does not contribute. Hence, each exchange of a letter pair $(\sx_{1r}, \sx_{2r})$ with $r\in I$ and $r-1\notin I$ will require another exchange of $(\sx_{1s}, \sx_{2s})$ with a suitable $s\in J$ to compensate and return word pairs with non-vanishing contribution. Here we need to start distinguishing between different types of partitions.\\

	First, let $I$ and $J$ be sets of consecutive numbers (counting $N$ and $1$ as such). Then there are only two $r\in\{1,\dotsc, N\}$ such that $r$ and $r-1$ are in different parts. Since only word pairs in which either both or neither are exchanged contribute one finds that exactly half of all word pairs in $\sP_2$ yield non-empty sets of pairs under $\lambda_{\cE}$. Then the sum
	\al[*]{
		\sum_{ (\su,\sv)\in \sP_2 }\nquad \sgn_{\cE}(\su,\sv) \nquad \prod_{ (\su',\sv')\in\lambda_{\cE}(\su,\sv) } \nqquad\cp{\su'}{\sv'}
	}
	contains $2^{N-2}$ terms that decompose into two factors with $2^{l_1-1}$ and $2^{l_2-1}$ terms corresponding to the two parts. Permutations with signum $(-1)^{l_1-1}$ and $(-1)^{l_2-1}$ can be used (analogous to the discussion of the sign above) to align the original letter pairs (corresponding to $2$-coloured base edges) in each word pair. Then each such factor can be rewritten with lemma \ref{lemma_fullpart}, where one interprets it as arising from a certain smaller chord diagram base cycle. That cycle, say for the part $I$, results from contraction of the path that consists all $1$-coloured base edges represented by the integers in $J$ as well as the $2$-coloured base edges in between these (consecutive) $1$-coloured edges to a single $2$-coloured base edge. Any pair of diagrams built on these smaller base cycles corresponds to a larger diagram with the original base cycle that one finds by simply cutting the contracted $2$-coloured base edge in each diagram and gluing them together. The number of $2$-coloured cycles is almost additive but the cutting removes one cycle in each diagram and restores only one when gluing them together. Hence, $c_2^2(D_I)-1 + c_2^2(D_J)-1 = c_2^2(D_{IJ})-1$.
	
	This straightforwardly extends to partitions with any number of parts, as long as each consists of consecutive base edges, and one finds
	\al{\nonumber
		\sum_{ (\su,\sv)\in \sP_2 }\nquad \sgn_{\cE}(\su,\sv) & \nquad \prod_{ (\su',\sv')\in\lambda_{\cE}(\su,\sv) } \nqquad\cp{\su'}{\sv'}\\\nonumber
		&= (-1)^{1-N} \prod_{i=1}^{|\cE|} \Big( (-1)^{l_i-1} (-\kp)^{1-l_i} \sum_{D_i\in\cD_{\Gamma,i}^0} (-2)^{c_2^2(D_i)-1} \nquad\prod_{(u,v)\in E_{D_i}^0} \cp{\sa_u}{\sa_v} \Big)\\[3mm]
		&= (-1)^{1-|\cE|}(-\kp)^{|\cE|-N}\nquad\sum_{D\in\cD|_{\cE}^0} (-2)^{c_2^2(D)-1} \nquad\prod_{(u,v)\in E_D^0} \nquad\cp{\sa_u}{\sa_v}
	}
	where $l_1,\dotsc, l_{|\cE|}$ with $l_1+\dotsc+\l_{|\cE|} = N$ are the cardinalities of each part. This even extends further to partitions like $\{ \{1\}, \{3\}, \{2,4\} \}$ where the part $\{2,4\}$ does not contain consecutive base edges initially but $2$ and $4$ become consecutive after factoring out the terms (contracting the base edges) corresponding to $1$ and $3$.\\
	
	Next we look at the exact opposite case, i.e. we assume that $I$ and $J$ do not contain any consecutive numbers at all. Note that then both parts need to have the same cardinality $|I|=|J|=N/2$ and $N$ has to be even. The contributing word pairs can be found by considering all possible choices of $k\leq N/2-1$ index swaps out of the set that contains $1$ (which is kept fixed) together with all possible choices of the same number of indices from the other set (in which all $N/2$ elements are permitted). The number of such exchanges can be counted with Vandermonde's identity to be
	\al{
		\sum_{k=0}^{N/2-1} \binom{N/2-1}{k}\binom{N/2}{k} = \binom{N-1}{N/2-1} = \frac{1}{2}\binom{N}{N/2}.
	} 
	The sum containing these terms does not factorise, but we can reduce it to a sum of expressions corresponding to $N-1$, allowing for proof by induction. Choose one of the two parts and expand the corresponding factor of each summand analogously to \refeq{eq_Dodgson_exp}. By construction the first term on the r.h.s. cannot exist in these expansions, since $\sx_{11}$ and $\sx_{21}$ belong to different word pairs. The sum contains fewer terms but the principle is the same: Suitable permutation within the remaining word pair allows us to interpret it as associated to a diagram that in turn resulted from addition of a chord corresponding to the removed letter pair. Hence, we only pick up an overall factor of $-\kp$ and can collect coefficients of each Dodgson polynomial $\cp{\sx_{1r}}{\sx_{2s}}$. By simply counting how often a given letter pair is or is not involved in an exchange one finds that one can collect terms into groups of
	\al{
		\binom{N-2}{N/2-1} = \binom{N-2}{(N-2)/2}
	}
	which is exactly twice the number of possible exchanges we would have for $N-2$. The coefficient of each $\cp{\sx_{1r}}{\sx_{2s}}$ corresponds to the sum in \refeq{eq_lemma2_lambda} but for a smaller diagram with $N'=N-1$ and a corresponding smaller partition. For the small cases of $N=2$ and $N=4$ the reduction already yields factorising expressions (see example \ref{ex_lemma_part_2}). For larger $N$ that is generally not the case and since $N-1$ is odd it also cannot belong to the case we discussed here. Instead, what happens is a partial factorisation that allows us to collect the $2$, $6$, $20$, $\dotsc$ terms into $1$, $3$, $10$, $\dotsc$ pairs which correspond to a non-factorising partition with a total cardinality of $N-2$. The corresponding partition consists of one part in which all elements are still non-consecutive and one part that contains only exactly one pair of consecutive numbers. Then the reduction process goes through for any such partition with mixed consecutive and non-consecutive base edges, even for more than two parts. If there are $k$ pairs of consecutive base edges in one part then this simply yields $2^{k-1}$ terms which correspond to a subset of the possible word pairs resulting from some smaller diagram -- but it is not the full subset needed for the factorisation seen above.\\
	
	Finally, all of this goes through for any number of base cycles without much change. The only difference is in which base edges are viewed as consecutive. For example, for a diagram with two base cycles of size $n_1$ and $n_2$ with $n_1+n_2=N$ one has $(1,n_1)$ and $(n_1+1, N)$ as consecutive pairs, but neither $(1,N)$ nor $(n_1,n_1+1)$.
\end{proof}

\begin{example}\label{ex_lemma_part_2}
	Consider an empty chord diagram $D\in \cD^4_0$ on a single base cycle with $8$ vertices labelled $1$--$8$. Let $2$ be the colour of the base edges $(1,2), (3,4), (5,6), (7,8)$ and $1$ the colour of $(8,1), (2,3), (4,5), (6,7)$. The word pairs in $\sP_2$ are, up to possible permutations,
	\al[*]{
		&( \sa_1\sa_3\sa_5\sa_7, \sa_2\sa_4\sa_6\sa_8 )\quad
		&( \sa_1\sa_3\sa_5\sa_8, \sa_2\sa_4\sa_6\sa_7 )\quad
		&( \sa_1\sa_3\sa_6\sa_7, \sa_2\sa_4\sa_5\sa_8 )\quad
		&( \sa_1\sa_4\sa_5\sa_7, \sa_2\sa_3\sa_6\sa_8 )\; \\[2mm]
		&( \sa_1\sa_3\sa_6\sa_8, \sa_2\sa_4\sa_5\sa_7 )\quad
		&( \sa_1\sa_4\sa_5\sa_8, \sa_2\sa_3\sa_6\sa_7 )\quad
		&( \sa_1\sa_4\sa_6\sa_7, \sa_2\sa_3\sa_5\sa_8 )\quad
		&( \sa_1\sa_4\sa_6\sa_8, \sa_2\sa_3\sa_5\sa_7 ).
	}
	The partitions with one part are very similar to those in example \ref{ex_lem1} but a bit too large already to sensibly write them here in their fully expanded form. With two parts there are three types of partitions. Firstly, if $\cE = \{E_1, E_2\}$ with $|E_1| = 3$, $|E_2|=1$, then the factorisation is obvious. For example, for $\cE = \{ \{(8,1), (2,3), (4,5)\},  \{(6,7)\}$ one has
	\al[*]{
		- \cp{\sa_6}{\sa_7}\Big( 
							\cp{ \sa_1\sa_3\sa_5}{\sa_2\sa_4\sa_8} + 
							\cp{ \sa_1\sa_4\sa_5}{\sa_2\sa_3\sa_8} + 
							\cp{ \sa_1\sa_3\sa_8}{\sa_2\sa_4\sa_5} + 
							\cp{ \sa_1\sa_4\sa_8}{\sa_2\sa_3\sa_5}
						\Big).
	}
	Similarly, for a partition like $\cE = \{ \{(8,1), (2,3)\},  \{(4,5), (6,7)\}$ one also finds a factorisation since the four terms one gets are
	\al{
		\cp{ \sa_1\sa_3}{\sa_2\sa_8} \cp{ \sa_5\sa_7}{\sa_4\sa_6} + 
		\cp{ \sa_1\sa_3}{\sa_2\sa_8} \cp{ \sa_6\sa_7}{\sa_4\sa_5} + 
		\cp{ \sa_1\sa_8}{\sa_2\sa_3} \cp{ \sa_6\sa_7}{\sa_4\sa_5} +
		\cp{ \sa_1\sa_8}{\sa_2\sa_3} \cp{ \sa_5\sa_7}{\sa_4\sa_6}.
	}
	The non-factorising partition $\cE = \{ \{(8,1), (4,5)\},  \{(2,3), (6,7)\}$ yields three terms that we can still simply expand explicitly:
	\al{\nonumber
		\big(-\kp\big)^2&  \Big(
						\cp{ \sa_1\sa_5}{\sa_4\sa_8} \cp{ \sa_3\sa_7}{\sa_2\sa_6} +
					 	\cp{ \sa_1\sa_8}{\sa_4\sa_5} \cp{ \sa_3\sa_6}{\sa_2\sa_7} + 
						\cp{ \sa_1\sa_4}{\sa_5\sa_8} \cp{ \sa_6\sa_7}{\sa_2\sa_3}
					\Big)\\[2mm]\nonumber
		=& 	\Big( \cp{ \sa_1}{\sa_4}\cp{ \sa_5}{\sa_8} - \cp{ \sa_1}{\sa_8}\cp{ \sa_4}{\sa_5} \Big)\Big( \cp{ \sa_2}{\sa_3}\cp{ \sa_6}{\sa_7} - \cp{ \sa_2}{\sa_7}\cp{ \sa_3}{\sa_6} \Big)\\\nonumber
		& + 	\Big( \cp{ \sa_1}{\sa_4}\cp{ \sa_5}{\sa_8} - \cp{ \sa_1}{\sa_5}\cp{ \sa_4}{\sa_8} \Big)\Big( \cp{ \sa_2}{\sa_3}\cp{ \sa_6}{\sa_7} - \cp{ \sa_2}{\sa_6}\cp{ \sa_3}{\sa_7} \Big)\\\nonumber
		& +	\Big( \cp{ \sa_1}{\sa_5}\cp{ \sa_4}{\sa_8} - \cp{ \sa_1}{\sa_8}\cp{ \sa_4}{\sa_5} \Big)\Big( \cp{ \sa_2}{\sa_6}\cp{ \sa_3}{\sa_7} - \cp{ \sa_2}{\sa_7}\cp{ \sa_3}{\sa_6} \Big)\\[2mm]\nonumber
		=& (-1)^3(-2)^1 \Big( 	
					\cp{ \sa_1}{\sa_4}\cp{ \sa_2}{\sa_3}\cp{ \sa_5}{\sa_8}\cp{ \sa_6}{\sa_7} + 
					\cp{ \sa_1}{\sa_5}\cp{ \sa_2}{\sa_6}\cp{ \sa_3}{\sa_7}\cp{ \sa_4}{\sa_8}\\\nonumber
		& \hphantom{=(-1)^3(-2)^1\Big( \cp{ \sa_1}{\sa_4}\cp{ \sa_2}{\sa_3}\cp{ \sa_5}{\sa_8}\cp{ \sa_6}{\sa_7}} 	
				 +	\cp{ \sa_1}{\sa_8}\cp{ \sa_2}{\sa_7}\cp{ \sa_3}{\sa_6}\cp{ \sa_4}{\sa_5} \Big)\\\nonumber
		&+ (-1)^3(-2)^0 \Big( 	
					\cp{ \sa_1}{\sa_4}\cp{ \sa_2}{\sa_6}\cp{ \sa_3}{\sa_7}\cp{ \sa_5}{\sa_8} + 					
					\cp{ \sa_1}{\sa_4}\cp{ \sa_2}{\sa_7}\cp{ \sa_3}{\sa_6}\cp{ \sa_5}{\sa_8}\\\nonumber
		& \hphantom{(-1)^3(-2)^0 \Big(} 
		+			\cp{ \sa_1}{\sa_5}\cp{ \sa_2}{\sa_3}\cp{ \sa_4}{\sa_8}\cp{ \sa_6}{\sa_7} + 
					\cp{ \sa_1}{\sa_5}\cp{ \sa_2}{\sa_7}\cp{ \sa_3}{\sa_6}\cp{ \sa_4}{\sa_8}\\
		& \hphantom{(-1)^3(-2)^0 \Big(}
		+			\cp{ \sa_1}{\sa_8}\cp{ \sa_2}{\sa_3}\cp{ \sa_4}{\sa_5}\cp{ \sa_6}{\sa_7} +
					\cp{ \sa_1}{\sa_8}\cp{ \sa_2}{\sa_6}\cp{ \sa_3}{\sa_7}\cp{ \sa_4}{\sa_5}
					\Big).
	}
	With a quick drawing one can now check that the chord diagrams corresponding to these terms are as expected and that the number of cycles is indeed correct. Finally, expanding only one of the two polynomials in each summand leads to the reduction from the proof of lemma \ref{lemma_part_2}:
	\al[*]{
		&	\cp{ \sa_1}{\sa_4}\cp{ \sa_5}{\sa_8}\big( \cp{ \sa_3\sa_7}{\sa_2\sa_6} + \cp{ \sa_3\sa_6}{\sa_2\sa_7} \big)\\
		+&	\cp{ \sa_1}{\sa_5}\cp{ \sa_4}{\sa_8}\big( \cp{ \sa_2\sa_7}{\sa_6\sa_3} + \cp{ \sa_2\sa_3}{\sa_6\sa_7} \big)\\
		+&	\cp{ \sa_1}{\sa_8}\cp{ \sa_4}{\sa_5}\big( \cp{ \sa_2\sa_6}{\sa_7\sa_3} + \cp{ \sa_2\sa_3}{\sa_7\sa_6} \big).
	}
\end{example}

	The final ingredient for the proof of this chapter's main theorem is an identity allowing summation of Stirling numbers of the second kind $S(k,l)$. They count the ways to partition a set of $k$ elements into $l$ non-empty sets. To prove it we need a certain identity relating Stirling numbers and the classical polylogarithm. While the literature contains a number of well known identities that do so, they are all either similar but not obviously equivalent to the one we need, or appear without proof. Moreover, the commonly cited references (e.g. \cite{abramowitz+stegun, Stanley_EnumComb, knuth_1998}, among many others) all appear to cite each other or unavailable older literature, so it may actually be somewhat elucidating to derive everything we need ourselves.

\begin{proposition}\label{prop_polylog}
	Let
	\al{
		\Li_s(z) = \sum_{l=1}^{\infty}\frac{z^l}{l^s}\qquad |z|<1, \ s\in \ZZ
	}
	be the classical polylogarithm and $S(k,l)$ be the Stirling number of the second kind. Then
	\al{
		\Li_{-k+1}(z) = (-1)^k\sum_{l=1}^k S(k,l) \frac{(l-1)!}{(z-1)^l} 
	}
	for integers $k\geq 2$.
\end{proposition}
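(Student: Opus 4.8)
The plan is to reduce everything to two facts: the operator recursion $\Li_{s-1}=\bigl(z\tfrac{\rd}{\rd z}\bigr)\Li_s$, which holds on the disk $|z|<1$ by differentiating the defining power series term by term, and the defining recurrence $S(k,l)=l\,S(k-1,l)+S(k-1,l-1)$ of the Stirling numbers of the second kind. Starting from $\Li_0(z)=\sum_{m\ge1}z^m=\tfrac{z}{1-z}$ and iterating the operator recursion gives $\Li_{-k+1}(z)=\bigl(z\partial_z\bigr)^{k-1}\tfrac{z}{1-z}$ for every $k\ge1$, so it suffices to show that the claimed right-hand side, viewed as a function of $z$, equals this.

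Accordingly, write $f_k(z)\defeq(-1)^k\sum_{l=1}^k S(k,l)\,(l-1)!\,(z-1)^{-l}$. The central step is the recursion $f_k=\bigl(z\partial_z\bigr)f_{k-1}$ for all $k\ge2$. To see it, apply $z\partial_z$ to a summand $(z-1)^{-l}$ of $f_{k-1}$: the result is $-l\,z\,(z-1)^{-l-1}$, and writing $z=(z-1)+1$ splits this into $-l(z-1)^{-l}-l(z-1)^{-l-1}$, i.e. one contribution at pole order $l$ and one at pole order $l+1$. Shifting the index in the order-$(l+1)$ family and collecting the coefficient of $(z-1)^{-l}$, the bracket that appears is $l\,S(k-1,l)+S(k-1,l-1)$, which the Stirling recurrence turns into $S(k,l)$; the extra minus sign from $\partial_z$ turns $(-1)^{k-1}$ into $(-1)^k$, and the conventions $S(k-1,0)=0$, $S(k-1,k)=0$ make the summation range $l=1,\dots,k$ come out by itself. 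Hence $f_k=\bigl(z\partial_z\bigr)^{k-1}f_1$.

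To finish, compute the seed: $f_1(z)=-S(1,1)(z-1)^{-1}=\tfrac{1}{1-z}=\tfrac{z}{1-z}+1$. Therefore $f_k=\bigl(z\partial_z\bigr)^{k-1}\bigl(\tfrac{z}{1-z}+1\bigr)=\bigl(z\partial_z\bigr)^{k-1}\tfrac{z}{1-z}+\bigl(z\partial_z\bigr)^{k-1}(1)$, and for $k\ge2$ the constant term is annihilated by $z\partial_z$, leaving $f_k=\bigl(z\partial_z\bigr)^{k-1}\tfrac{z}{1-z}=\Li_{-k+1}$, as claimed. The restriction $k\ge2$ is exactly the price of this constant: for $k=1$ one has $\Li_0=\tfrac{z}{1-z}$ but $f_1=\tfrac{1}{1-z}$, differing by $1$.

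The conceptual point is merely to recognise that the proposed right-hand sides satisfy the same $z\partial_z$-recursion as the polylogarithms up to this constant seed; once that is in place the only thing requiring genuine care is the bookkeeping in the recursion step — keeping the overall sign correct and checking that the two pole-order families recombine with matching summation ranges. As a self-contained alternative I would present the generating-function proof as a remark: on one side $\sum_{n\ge0}\Li_{-n}(z)\tfrac{t^n}{n!}=\tfrac{ze^t}{1-ze^t}$, while on the other side inserting the Stirling generating function $\sum_{k\ge l}S(k,l)\tfrac{s^k}{k!}=\tfrac{(e^s-1)^l}{l!}$ (differentiated once in $s$, then evaluated at $s=-t$) and resumming the geometric series over $l$ gives $\tfrac{1}{1-ze^t}$; the two generating functions again differ by $1$, which accounts once more for the $k\ge2$ hypothesis.
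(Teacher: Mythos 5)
Your proposal is correct and follows essentially the same route as the paper: an induction on $k$ driven by the operator identity $\Li_{-k+1}=z\partial_z\Li_{-k+2}$ together with the Stirling recurrence $S(k,l)=lS(k-1,l)+S(k-1,l-1)$, where your split $z=(z-1)+1$ is exactly the paper's partial-fraction step and the recombination of the two pole-order families is its ``telescopic cancellation''. The only (minor, and rather tidy) difference is that you anchor at $k=1$ and let the constant discrepancy $f_1-\Li_0=1$ be annihilated by $z\partial_z$, which explains the hypothesis $k\geq 2$, whereas the paper verifies $k=2$ directly as the base case.
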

\begin{proof}
	For $k=2$ the r.h.s. is
	\al{
		\frac{1}{z-1} + \frac{1}{(z-1)^2} = \frac{z}{(1-z)^2} = z\partial_z\frac{1}{1-z}
				= z\partial_z\sum_{l=0}^{\infty} z^l
		 		= \sum_{l=1}^{\infty}l z^l = \Li_{-1}(z).
	}
	Now proceed by induction	
	\al{\nonumber
		\Li_{-k+1}(z) = z\partial_z \Li_{-k+2}(z) 
				&= (-1)^{k-1}\sum_{l=1}^{k-1} S(k-1,l) z\partial_z \frac{1}{(z-1)^l}(l-1)!\\
				&= (-1)^k\sum_{l=1}^{k-1} S(k-1,l) \frac{z}{(z-1)^{l+1}} l!,
	}
	and use partial fraction decomposition to find
	\al{
		S(k-1,l) \frac{z}{(z-1)^{l+1}}l!& = lS(k-1,l) \frac{(l-1)!}{(z-1)^l} + S(k-1,l)\frac{l!}{(z-1)^{l+1}}\label{eq_lem_auxA}
	}	
	Using the recurrence relation $S(k,l) = S(k-1,l-1) + l(S(k-1,l)$ the first term is further rewritten as
	\al{
		lS(k-1,l) \frac{(l-1)!}{(z-1)^l} = S(k,l)\frac{(l-1)!}{(z-1)^l} - S(k-1,l-1)\frac{(l-1)!}{(z-1)^l}\label{eq_lem_auxB}
	}
	In the sum one now has a telescopic cancellation involving the second terms of eqs. (\ref{eq_lem_auxA}) and (\ref{eq_lem_auxB}). The only remaining terms are
	\al[*]{
		\frac{S(k-1,0)}{z-1} = 0 \qquad\text{ and }\qquad S(k-1,k-1)\frac{(k-1)!}{(z-1)^k} = S(k,k)\frac{(k-1)!}{(z-1)^k},
	}
	as well as the first part of the r.h.s. of \refeq{eq_lem_auxB} summed up to $l=k-1$, such that overall
	\al{\nonumber
		\Li_{-k+1}(z) &= (-1)^k\sum_{l=1}^k S(k,l)\frac{(l-1)!}{(z-1)^l}.
	}
\end{proof}

\begin{lemma}\label{lemma_maintheo}
	Let $S(k,l)$ be the Stirling number of the second kind. Then
	\al{\nonumber
		\sum_{l=1}^{k} S( k, l ) (-1)^l(l+1)!  = (-2)^k\qquad \forall k \geq 1.
	}
\end{lemma}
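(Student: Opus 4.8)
The plan is to use the generating-function / polylogarithm identity just established in Proposition \ref{prop_polylog}. The left-hand side $\sum_{l=1}^k S(k,l)(-1)^l(l+1)!$ is exactly the kind of weighted sum of Stirling numbers that Proposition \ref{prop_polylog} was designed to evaluate, once we account for the extra factor $(l+1)!$ rather than $(l-1)!$. First I would rewrite $(l+1)! = (l+1)l\cdot(l-1)!$, so that the sum becomes $\sum_{l=1}^k S(k,l)(-1)^l (l+1)l\,(l-1)!$. Then I would recognise the factor $(l+1)l\,(-1)^l$ as coming from applying the operator $z\partial_z$ (or a small variant thereof) twice to the generating series $\sum_l S(k,l)(l-1)!\,(z-1)^{-l}$, evaluated at a convenient point.

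More concretely, Proposition \ref{prop_polylog} gives $\Li_{-k+1}(z) = (-1)^k \sum_{l=1}^k S(k,l)\frac{(l-1)!}{(z-1)^l}$. Substituting $z=2$ yields $\Li_{-k+1}(2) = (-1)^k\sum_{l=1}^k S(k,l)(l-1)!$, which is close but lacks the $(l+1)l(-1)^l$ weight. The key observation is that multiplying the $l$-th term by $(l+1)l$ corresponds to a differential operator in the variable $z-1$: note that $(z-1)\partial_z \big[(z-1)^{-l}\big] = -l(z-1)^{-l}$, and applying this twice gives $l^2(z-1)^{-l}$, while a shifted operator $\big((z-1)\partial_z - 1\big)(z-1)\partial_z$ applied to $(z-1)^{-l}$ gives $(-l)(-l-1) = l(l+1)$ times $(z-1)^{-l}$. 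Hence I would apply the operator $D := \big(-(z-1)\partial_z\big)\big(-(z-1)\partial_z + 1\big)$ — equivalently $(z-1)^2\partial_z^2 + 2(z-1)\partial_z$ rearranged — to both sides of the Proposition \ref{prop_polylog} identity. On the right-hand side this produces exactly $(-1)^k\sum_{l=1}^k S(k,l)\,l(l+1)\,(l-1)!\,(z-1)^{-l}$; at $z=2$ the powers of $(z-1)$ disappear and the sign $(-1)^l$ needs to be tracked (it enters because $D$ acting on $(z-1)^{-l}$ returns $l(l+1)(z-1)^{-l}$ with no extra sign, so I must instead evaluate at $z=0$ to get the alternating sign, where $(z-1)^{-l} = (-1)^l$). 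On the left-hand side I then need $D\,\Li_{-k+1}(z)$ evaluated at $z=0$, and since $z\partial_z\Li_s(z) = \Li_{s-1}(z)$, the operator $D$ turns $\Li_{-k+1}$ into a fixed combination of $\Li_{-k+1}, \Li_{-k}, \Li_{-k-1}$ (after re-expressing $(z-1)\partial_z$ in terms of $z\partial_z$), whose value at $z=0$ is $0$ for all these negative-index polylogarithms except that one must be careful: $\Li_s(0)=0$ for every $s$, so naively the left side is $0$, which is wrong. This signals that the correct evaluation point is $z=-1$ or $z=2$, not $z=0$, and that the alternating sign should be absorbed by substituting $z\mapsto$ something; I would instead directly use $\sum_l S(k,l)(-1)^l(l+1)! = -\,D'\Li_{-k+1}(z)\big|_{z=2}$ for the appropriate first-order operator $D'$ that generates $(-1)^l(l+1)l$, verified by noting $\Li_{-1}(-1)$-type evaluations. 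The cleanest route: set $z = 1/2$ so that $1/(z-1) = -2$ and $(1/(z-1))^l = (-2)^l$, giving directly $\sum_l S(k,l)(-1)^l(l-1)!\,2^l = (-1)^k\Li_{-k+1}(1/2)\cdot(\text{sign})$; then apply the weight-$(l+1)l$ operator in the variable $w = 1/(z-1)$ where $w\partial_w [w^l] = l w^l$, so $(w\partial_w)(w\partial_w+1)[w^l] = l(l+1)w^l$, and evaluate at $w=-2$.

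Carrying this out: let $f_k(w) := \sum_{l=1}^k S(k,l)(l-1)!\,w^l$; by the change of variable $w = 1/(z-1)$ in Proposition \ref{prop_polylog} we get $f_k(w) = (-1)^k \Li_{-k+1}\!\big(1 + 1/w\big)$. The target sum is $\sum_l S(k,l)(l+1)!\,(-1)^l = \big[(w\partial_w)(w\partial_w + 1) f_k(w)\big]_{w=-2}$, since $(w\partial_w)(w\partial_w+1)$ multiplies the coefficient of $w^l$ by $l(l+1)$ and $(l+1)l(l-1)! = (l+1)!$, and $w^l|_{w=-2} = (-2)^l$, but wait — that gives $(-2)^l$, not $(-1)^l\cdot(\text{number})$. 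To land on $(-1)^l(l+1)!$ I should instead evaluate at $w = -1$: then $w^l|_{w=-1} = (-1)^l$ and the sum equals $\big[(w\partial_w)(w\partial_w+1) f_k(w)\big]_{w=-1}$. At $w=-1$ we have $z = 1 + 1/w = 0$, and $\Li_s(0) = 0$ together with the fact that differentiation brings down more polylog factors still vanishing at $0$ would force the answer to be $0$ — contradiction. The resolution of this apparent contradiction is the real subtlety and the main obstacle: the series $f_k(w)$ is a polynomial in $w$ (finite sum!), so it does not actually agree with $(-1)^k\Li_{-k+1}(1+1/w)$ as functions — Proposition \ref{prop_polylog} is an identity of formal power series in $z$ near $z=0$ equivalently $w \to \infty$, and the polynomial $f_k$ is its partial-fraction / asymptotic expansion, so one cannot freely substitute $w = -1$. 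Instead I would substitute $w = -2$, i.e. $z = 1/2$, which is inside the disk of convergence $|z|<1$, obtaining $\sum_l S(k,l)(l-1)!(-2)^l = (-1)^k\Li_{-k+1}(1/2)$, then apply $(w\partial_w)(w\partial_w+1)$ and evaluate at $w=-2$: this yields $\sum_l S(k,l)(l+1)!(-2)^l$, not quite the claim. So the honest final move is different: I would expand $f_k(-1)$ not via the polylog but by noting that $f_k(w) = \sum_{l} S(k,l)(l-1)! w^l$ is the ordinary generating polynomial whose exponential generating counterpart is related to $\frac{(e^t-1)^l}{l}$ summed up, giving $f_k(w) = \int_0^\infty \big(\text{something}\big)$ — alternatively, and most simply, prove the lemma directly by induction on $k$ using the Stirling recurrence $S(k,l) = S(k-1,l-1) + lS(k-1,l)$: substitute it into the left side, split into two sums, reindex the first ($l \to l+1$), and combine with the second, matching against $(-2)^k = -2\cdot(-2)^{k-1}$; the cross terms should telescope exactly as in the proof of Proposition \ref{prop_polylog}. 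I expect the induction to be the cleaner path and the polylogarithm route, while morally the source of the identity, to require enough care with convergence and evaluation points that the Stirling recurrence is preferable for a short rigorous proof. The main obstacle either way is bookkeeping the factorial weights $(l+1)!$ through the reindexing so that the two pieces of the recurrence recombine into $-2$ times the $(k-1)$ case; I would verify the base case $k=1$ ($S(1,1)(-1)^1 2! = -2 = (-2)^1$) and then push the recurrence through, using a subsidiary identity $\sum_l S(k-1,l)(-1)^{l+1}(l+2)! + \sum_l lS(k-1,l)(-1)^l(l+1)! = (-2)\sum_l S(k-1,l)(-1)^l(l+1)!$ which, after writing $(l+2)! = (l+2)(l+1)!$ and combining, reduces to $\sum_l S(k-1,l)(-1)^l(l+1)!\big[-(l+2) + l\big] = -2\sum_l S(k-1,l)(-1)^l(l+1)!$, an identity that holds termwise.
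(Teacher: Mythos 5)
Your final argument is correct, but it is a genuinely different (and more elementary) route than the paper's. The paper proves the lemma by leaning on Proposition \ref{prop_polylog}: it sets $L(z)=\sum_l S(k,l)z^l(l+1)! = (-1)^k z\partial_z^2 z\,\Li_{-k+1}(1+1/z)$, computes $L(z)=\frac{(-1)^k}{(z+1)^2}\bigl(\Li_{-k-1}(1+1/z)-\Li_{-k}(1+1/z)\bigr)$, and then takes a careful limit $z\to-1$ in which the apparently divergent leading terms of the two polylogarithms cancel, leaving $(-2)^k$. Your long polylogarithm exploration circles around exactly this subtlety --- your diagnosis is slightly off (the partial-fraction identity is an identity of rational functions, so substituting $w=-1$ is legitimate; the real issue is that after applying the weight operator one faces a $0\times\infty$ form, singular prefactor against vanishing polylogarithms, which the paper resolves by the limit) --- but you correctly sense that naive evaluation fails and you abandon the route. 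Your closing induction is clean and complete: with $S(k,l)=S(k-1,l-1)+lS(k-1,l)$, reindexing the first sum and using $S(k-1,0)=S(k-1,k)=0$, the summand collects to $S(k-1,l)(-1)^l(l+1)!\,\bigl[-(l+2)+l\bigr]=-2\,S(k-1,l)(-1)^l(l+1)!$, so the sum satisfies $A_k=-2A_{k-1}$ with $A_1=-2$, giving $(-2)^k$. This buys a short, convergence-free proof that makes Proposition \ref{prop_polylog} unnecessary for this lemma, whereas the paper's route shows where the identity ``comes from'' analytically at the price of the delicate cancellation of divergences at $z=-1$.
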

\begin{proof}
	For $k=1$ the claim is checked directly. For $k\geq 2$ we use the identity derived for the polylogarithm in proposition \ref{prop_polylog} and note that a change of the argument allows us to write
	\al{
		(-1)^k\Li_{-k+1}\Big(1+\frac{1}{z}\Big) = \sum_{l=1}^k S(k,l)z^l(l-1)!
	}
	with $z < -1$. Now let 
	\al{\nonumber
		L(z) \defeq \sum_{l=1}^{k} S( k, l ) z^l(l+1)! 
			&= z\partial^2_z z\sum_{l=1}^{k} S( k, l ) z^l(l-1)!\\
			&= (-1)^k z\partial^2_z z \Li_{-k+1}\Big(1+\frac{1}{z}\Big).
	}
	Computing the derivative one finds
	\al{
		L(z) = \frac{(-1)^k}{(z+1)^2}\bigg( \Li_{-k-1}\Big(1+\frac{1}{z}\Big) - \Li_{-k}\Big(1+\frac{1}{z}\Big) \bigg).
	}
	Both polylogarithms start with terms linear in $(z+1)/z$, yielding divergences when evaluating at $z=-1$, but upon closer inspection we see that they precisely cancel each other. With $z<-1$ one has $|1+1/z|<1$ such that we are able to employ the classical sum representation of the polylogarithm, of which only the first two terms are of interest to us:
	\al{\nonumber
		L(z) &= \frac{(-1)^k}{(z+1)^2}\bigg( \sum_{t=1}^{\infty} t^{k+1}\left(\frac{z+1}{z}\right)^t
								-\sum_{t=1}^{\infty} t^k\left(\frac{z+1}{z}\right)^t \bigg)\\[3mm]\nonumber
			&= \frac{(-1)^k}{(z+1)^2}\bigg( \frac{z+1}{z} + 2^{k+1}\left(\frac{z+1}{z}\right)^2
								- \frac{z+1}{z} - 2^k\left(\frac{z+1}{z}\right)^2 
								+ \cO\left(  \left(\frac{z+1}{z}\right)^3 \right) \bigg)\\[3mm]
			&= (-2)^k \bigg( \frac{1}{z^2} + \frac{1}{(z+1)^2}\cO\left(\left(\frac{z+1}{z}\right)^3\right) \bigg).
	}
	Now we can safely take the limit $z \to -1$ to find
	\al{
		\sum_{l=1}^{k} S( k, l ) (-1)^l(l+1)! = L(-1) = (-2)^k.
	}
\end{proof}

	\paragraph{Proof of Theorem \ref{theo_main}.} 
	First, use lemma \ref{lemma_part_2} to rewrite the partition polynomial as
	\al{\nonumber
		Z_{\Gamma}^0 &= \sum_{\cE \in \cP(E_D^1)} (-\kp)^{N-|\cE|} (|\cE|+1)! \sum_{ (\su,\sv)\in \sP_2 } \sgn_{\cE}(\su,\sv) \nquad
			\prod_{ (\su',\sv')\in\lambda_{\cE}(\su,\sv) } \nqquad\cp{\su'}{\sv'}\\[3mm]\nonumber
			&= \sum_{\cE \in \cP(E_D^1)} (-1)^{|\cE|+1}(|\cE|+1)!\sum_{D\in\cD|_{\cE}^0} (-2)^{c_2^2(D)-1} \nquad\prod_{(u,v)\in E_D^0} \cp{\sa_u}{\sa_v}.
	}
	The sum already contains $c_2^2(D)$, the number of $2$-coloured cycles. Regarding cycles of the other colour we can make the following observation: In each diagram with $c_2^1(D)\leq N$ the $1$-coloured cycles can themselves be interpreted as a partition of $E_D^1$ in which each part is given by the base edges connected to each other by chords. The diagrams in $\cD|_{\cE}^0$ can only have chords connecting base edges within the same part of $\cE$, so each part in the partition given by the $1$-coloured cycles has to be a subset of a part of $\cE$. Counting the number of ways of partitioning the $c_2^1(D)$ cycles of a given diagram into partitions with $|\cE|$ parts (i.e. counting the number of partitions $\cE$ with a certain number of parts such that $\cD|_{\cE}^0$ contains the given diagram $D$) one finds precisely the Stirling numbers of the second kind $S(c_2^1(D), |\cE|)$. Using this, we can exchange summation over diagrams and partitions and find
	\al{\nonumber
		Z_{\Gamma}^0 &= \sum_{\cE \in \cP(E_D^1)} (-1)^{|\cE|+1}(|\cE|+1)!\sum_{D\in\cD|_{\cE}^0} (-2)^{c_2^2(D)-1}
							\nquad\prod_{(u,v)\in E_D^0} \cp{\sa_u}{\sa_v}\\[3mm]\nonumber
			&= \frac{1}{2}\sum_{D\in\cD^n_N} (-2)^{c_2^2(D)}\bigg( \prod_{(u,v)\in E_D^0} \cp{\sa_u}{\sa_v}\bigg) \sum_{l=1}^{c_2^1(D)} S( c_2^1(D), l ) (-1)^l(l+1)!.
	}
	Now lemma \ref{lemma_maintheo} is applied to evaluate the sum to $(-2)^{c_2^1(D)}$, which finishes the proof.
	\qed

%
%
%
\subsection{The second summation theorem}\label{sec_inc}

	Now that $Z_{\Gamma}^0$ is well understood we can proceed to the more complicated $Z_{\Gamma}^1$. Contrary to $Z_{\Gamma}^0$ they contain not only the cycle polynomials, but also $x_{\Gamma}^e$, which we had defined in \refeq{eq_CP_eval_def}. We begin by analysing these polynomials and in particular their products a bit further. Building on this we will then find that the summation theorem from the previous section can be generalised rather straightforwardly to the following result.

	\begin{theorem}\label{theo_main_2}
		\al{
			Z_{\Gamma}^1 = \frac{1}{2} \sum_{D\in \cD_{\Gamma}^1} (-2)^{\tilde c(D)} \bigg(\prod_{ (u,v)\in E_D^0 } \cp{u}{v}\bigg) \prod_{ w\in V_D^{(2)} } x_{\Gamma}^w.
			 \label{eq_maintheo_2}
		}
	\end{theorem}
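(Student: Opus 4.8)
The plan is to re-run the proof of Theorem~\ref{theo_main}, carrying along the extra external-vertex letter $\sy$ that distinguishes $Z_{\Gamma}^1$ from $Z_{\Gamma}^0$; the genuinely new input is an analysis of the polynomials $x_{\Gamma}^w$ and of their products, as announced before the theorem. First I would expand $\cp{\su'\sy}{\sv'\sy}$ along its $\sy$-entry using the cofactor form of the Dodgson cycle polynomial introduced after Definition~\ref{02c_def_dodgsoncyclepol} (equivalently, the Desnanot--Jacobi identity \refeq{eq_DJ} applied to the graph matrix with a vertex row and column kept). The ``diagonal'' term of that expansion is $\cp{\sy}{\sy}\cp{\su'}{\sv'}/\kp=\tfrac{\vssp}{\kp}\cp{\su'}{\sv'}$ --- precisely the term subtracted in the first line of \refeq{eq_ppol1_aux} --- while the remaining ``cross'' terms iterate into products of mixed edge--vertex Dodgson polynomials of the form $\cp{\sa_e}{\sy}$. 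Using the definition \refeq{eq_CPdef}--\refeq{eq_CP_eval_def} of $x_{\Gamma}^e$, the identity $\bp{e}{e'}=-\alpha_e\alpha_{e'}\cp{e}{e'}$, and the factorisation $\ssp=q^2\vssp$ of \refeq{eq_ssp_fac}, one should be able to identify these $\cp{\sa_e}{\sy}$, up to a monomial and a sign, with $x_{\Gamma}^e$. The upshot I want is a closed formula of the shape
\begin{align*}
	\cp{\su'\sy}{\sv'\sy}-\tfrac{\vssp}{\kp}\,\cp{\su'}{\sv'}\;=\;\sum_{\{w_1,w_2\}}\pm\, x_{\Gamma}^{w_1}x_{\Gamma}^{w_2}\,\cp{\su''}{\sv''},
\end{align*}
where the sum is over the ways to ``open'' one chord of the cycle encoded by $(\su',\sv')$ into a pair of $2$-valent vertices $w_1,w_2$ labelled by fermion edges of $\Gamma$, and $(\su'',\sv'')$ is the shortened word pair for the remaining chords.

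With this in hand I would establish the $\sy$-decorated analogues of Lemma~\ref{lemma_fullpart} and Lemma~\ref{lemma_part_2}. For the single-part partition, $\sum_{(\su_{\id},\sv_{\id})\in\sP_j}\big(\cp{\su_{\id}\sy}{\sv_{\id}\sy}-\tfrac{\vssp}{\kp}\cp{\su_{\id}}{\sv_{\id}}\big)$ should equal, up to a power of $-\kp$, the sum over $D\in\cD_{\Gamma}^1$ of $(-2)^{c_2^j(D)-1}\big(\prod_{(u,v)\in E_D^0}\cp{u}{v}\big)\prod_{w\in V_D^{(2)}}x_{\Gamma}^w$, and likewise for every partition $\cE$ the restricted sum should equal a sum over the subset of $\cD_{\Gamma}^1$ whose chords stay inside parts of $\cE$. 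The inductive proof is exactly the expand--move--recognise argument of Lemmata~\ref{lemma_fullpart} and \ref{lemma_part_2}: expand one Dodgson polynomial by \refeq{eq_DefExpansion}, move a letter so the remaining letter pairs stay paired, and read the result as coming from a smaller diagram with an added chord. The only change is that in one distinguished part one carries $\cp{\su'\sy}{\sv'\sy}-\tfrac{\vssp}{\kp}\cp{\su'}{\sv'}$ and applies the formula above, so that this part produces diagrams with one chord missing, weighted by the corresponding $x_{\Gamma}^{w_1}x_{\Gamma}^{w_2}$, while the other parts behave exactly as before; the cut-and-glue recombination of smaller diagrams is unchanged. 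The extra global factors $\tfrac12$ and $-\kp$ in Definition~\ref{def_PartPol_1} are then accounted for precisely by the degree shift caused by $\sy$ and the one suppressed chord.

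The endgame mirrors the proof of Theorem~\ref{theo_main}. Inserting the $\sy$-decorated Lemma~\ref{lemma_part_2} into $Z_{\Gamma}^1=\tfrac12\sum_{l}(-\kp)^{N-l+1}(l+1)!\,Z_{\Gamma}^1\big|_l$ and swapping the order of summation over diagrams and partitions, a given $D\in\cD_{\Gamma}^1$ is produced by exactly those partitions $\cE$ that coarsen the partition of $E_D^1$ induced by the $1$-coloured cycles of $D$ together with the distinguished block carrying the opened chord; counting these by a Stirling number and collapsing the $(l+1)!$-sum with Lemma~\ref{lemma_maintheo} --- the identity $\sum_l S(k,l)(-1)^l(l+1)!=(-2)^k$, itself deduced from Proposition~\ref{prop_polylog} --- turns the partition sum into a single power of $-2$. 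Since $\tilde c(D)=c_2^1(D)+c_2^2(D)+c_3(D)$ and the $x_{\Gamma}$-factors sit on vertices not lying on a $2$-coloured cycle, the powers of $-2$ reassemble into $(-2)^{\tilde c(D)}$, which is \refeq{eq_maintheo_2}. As in Theorem~\ref{theo_main}, the same answer comes out if one sums over $\cP(E_D^2)$ and $\sP_1$ instead, providing a consistency check.

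I expect the main obstacle to lie in the first step --- pinning down the mixed edge--vertex Dodgson expansion of $\cp{\su'\sy}{\sv'\sy}$ with the correct signs. The $x_{\Gamma}^{w}$-weighted terms must come out carrying exactly the relative powers of $-2$ dictated by the relation $\tilde c(D')=\tilde c(D)+\sgn(u,v)$ when a chord $(u,v)$ is added, so that the bookkeeping $(-2)^{\tilde c(D)}$ closes up; this is the vertex-indexed counterpart of the $K_4$-sign discussion in the proof of Proposition~\ref{05a_prop_eqpol} (and of Corollary~17 of \cite{BrownYeats_2011_SpanningForest}), and it is also where one must carefully track the distinguished block when counting partitions, since for $D\in\cD_{\Gamma}^1$ not all $1$-coloured base edges lie on a $1$-coloured cycle. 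Two minor nuisances remain: the degeneracy when $D$ has base cycles of size $1$, where $\sP_1=\sP_2$, handled exactly as around Definition~\ref{def_PartPol}, and the fact that the whole argument leans on the factorisation $\ssp=q^2\vssp$ assumed in the introduction.
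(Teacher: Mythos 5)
Your proposal follows essentially the same route as the paper: you identify the polynomials $x_{\Gamma}^w$ with vertex-indexed Dodgson polynomials $\cp[\Gamma,x]{\sa_e}{\sy}$ (the content of Lemma \ref{lemma_prodX}), expand $\cp{\su'\sy}{\sv'\sy}$ by cofactors so the diagonal term yields the subtracted $\tfrac{\vssp}{\kp}\cp{\su'}{\sv'}$ and the cross terms yield $x_{\Gamma}^{w_1}x_{\Gamma}^{w_2}$-weighted shorter words (the paper's \refeq{eq_doubleexp}), prove the $\sy$-decorated analogue of Lemmata \ref{lemma_fullpart} and \ref{lemma_part_2} (the paper's Lemma \ref{lemma_z1_lemma}), and close with the same exchange of summation and the Stirling-number identity of Lemma \ref{lemma_maintheo}. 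This matches the paper's proof in structure and in all essential steps, including the bookkeeping of the missing chord via the $N$-to-one correspondence between $\cD_{\Gamma}^1$ and $\cD_{\Gamma}^0$.
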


	\subsubsection{The polynomial $x_{\Gamma}^w$}
	
	\noindent The first step to prove this theorem is getting a better understanding of the polynomials $x_{\Gamma}^w$ and their products. We begin with some general observations about their connections to bond and spanning forest polynomials and then state the precise result that we will need in lemma \ref{lemma_prodX} below.\\
	
		Analogous to \refeq{eq_ssp_stmat} we can also write the bond polynomial as
		\al{ 
			\beta_G &= \sum_{v_1,v_2\in V_G} \vartheta_{v_1}\vartheta_{v_2} \Phi_G^{\{v_0\},\{v_1,v_2\}},
		}
		where the momenta $q_{v_i}$ are replaced with $\vartheta_{v_i} = \sum_e I_{ev_i} \xi_e$. With the definition of $\CP[G]{e}{\mu}$ as derivative of the bond polynomial w.r.t. $\xi_e^{\mu}$ (see \refeq{eq_CPdef}) one finds
		\al{
			\CP[G]{e}{\mu} 
				= \alpha_e^{-1} \sum_{v_1,v_2\in V_G} I_{ev_1} \vartheta_{v_2}^{\mu} \Phi_G^{\{v_0\},\{v_1,v_2\}}.
		}
		Then we move to the physical case, i.e. a Feynman graph $\Gamma$ in which we evaluate the formal parameters $\xi_e$ to physical momenta. For each edge there are only two vertices, namely $u_1,u_2$ with $\partial(e) = (u_1,u_2)$, such that $I_{eu_i} \neq 0$, and $I_{eu_1} = -I_{eu_2}$ for this pair. Hence, the polynomial reduces to
		\al{
			\CP{e}{\mu} = - \alpha_e^{-1} \sum_{v\in V_{\Gamma}^{ext}} q_v^{\mu} \big( \Phi_{\Gamma}^{\{v_0\},\{u_2,v\}} - \Phi_{\Gamma}^{\{v_0\},\{u_1,v\}} \big).
		}
		Accounting for cancellations between spanning forests (i.e. their corresponding monomials) that appear in both polynomials, the difference can be written as
		\al{
			\Phi_{\Gamma}^{\{v_0\},\{u_2,v\}} - \Phi_{\Gamma}^{\{v_0\},\{u_1,v\}} 
			= \Phi_{\Gamma}^{\{v_0,u_1\},\{u_2,v\}} - \Phi_{\Gamma}^{\{v_0,u_2\},\{u_1,v\}}. \label{eq_addsp}
		}
		If we now specialise to the case of only two external vertices $v_1,v_2$ (or at least only two with non-vanishing momenta), then this reduces further to
		\al{
			\CP{e}{\mu} = q^{\mu}\alpha_e^{-1} \big(  \Phi_{\Gamma}^{\{v_0\},\{u_1,v_1\}} + \Phi_{\Gamma}^{\{v_0\},\{u_2,v_2\}} - \Phi_{\Gamma}^{\{v_0\},\{u_2,v_1\}} - \Phi_{\Gamma}^{\{v_0\},\{u_1,v_2\}} \big)
		}
		In order to explain the overall sign we emphasise again that $e$ is directed from $\partial_-(e) = u_1$ to $\partial_+(e) = u_2$, and that we chose $q_{v_1} = q = -q_{v_2}$.\\
	
		By the same principle as \refeq{eq_addsp} we can explicitly remove terms that would cancel between these four summands:
		\al{\nonumber
			& \Phi_{\Gamma}^{\{v_0\},\{u_1,v_1\}} - \Phi_{\Gamma}^{\{v_0\},\{u_2,v_1\}} + \Phi_{\Gamma}^{\{v_0\},\{u_2,v_2\}}- \Phi_{\Gamma}^{\{v_0\},\{u_1,v_2\}}\\[3mm]\nonumber
			&\qquad= \Phi_{\Gamma}^{\{v_0,u_2\},\{u_1,v_1\}} - \Phi_{\Gamma}^{\{v_0,u_1\},\{u_2,v_1\}} + \Phi_{\Gamma}^{\{v_0,u_1\},\{u_2,v_2\}} - \Phi_{\Gamma}^{\{v_0,u_2\},\{u_1,v_2\}}\\[3mm]\nonumber
			&\qquad= \Phi_{\Gamma}^{\{v_0,u_2,v_2\},\{u_1,v_1\}} + \Phi_{\Gamma}^{\{v_0,u_1,v_1\},\{u_2,v_2\}} - \Phi_{\Gamma}^{\{v_0,u_2,v_1\},\{u_1,v_2\}}  - \Phi_{\Gamma}^{\{v_0,u_1,v_2\},\{u_2,v_1\}} \\[3mm]
			&\qquad= \Phi_{\Gamma}^{\{u_1,v_1\},\{u_2,v_2\}} - \Phi_{\Gamma}^{\{u_1,v_2\},\{u_2,v_1\}}. \label{eq_spshort}
		}
		This is now explicitly independent of the arbitrarily chosen vertex $v_0$. We can re-expand \refeq{eq_spshort} by including terms cancelled between the two to get
		\al{
			\Phi_{\Gamma}^{\{u_1,v_1\},\{u_2,v_2\}} - \Phi_{\Gamma}^{\{u_1,v_2\},\{u_2,v_1\}}
			= \Phi_{\Gamma}^{\{v_1\},\{u_2,v_2\}} - \Phi_{\Gamma}^{\{v_1\},\{u_1,v_2\}}.
		}
		This is now not only independent of the original arbitrary choice of $v_0$ but can actually be interpreted as Dodgson polynomials with respect to a graph matrix in which $v_1$ was removed:
		\al{
			\CP{e}{\mu} = q^{\mu}\alpha_e^{-1} \big( \cp[\Gamma,v_1]{\sa_{u_2}}{\sa_{v_2}} - \cp[\Gamma,v_1]{\sa_{u_1}}{\sa_{v_2}} \big) = q^{\mu}x_{\Gamma}^e.	\label{eq_newxpol}
		}
		
	\begin{lemma}\label{lemma_prodX}
		Let $\Gamma$ be a QED Feynman graph with only two non-zero external momenta $q_u = q = -q_v$ at vertices $u,v \in V_{\Gamma}$, and $\Gamma^{\bullet} = \Gamma|_{u=v}$. Let furthermore $e,f \in E_{\Gamma}$ be any two edges of $\Gamma$. Then
		\al{
			\alpha_e\alpha_f x_{\Gamma}^ex_{\Gamma}^f &=  \kp[\Gamma^{\bullet}] \bp{e}{f} - \kp \bp[\Gamma^{\bullet}]{e}{f}.
											\label{eq_theo}
		}
		Moreover, if $e\neq f$ this simplifies to
		\al{
				x_{\Gamma}^ex_{\Gamma}^f = -\kp[\Gamma^{\bullet}] \cp{e}{f} + \kp \cp[\Gamma^{\bullet}]{e}{f}, \label{eq_theo_rewrite}
		}
		which means that up to sign $x_{\Gamma}^e = \pm \cp[\Gamma,u]{ e }{ v }$ and the signs are such that
		\al{
			x_{\Gamma}^ex_{\Gamma}^f = - \cp[\Gamma,u]{ e }{ v }\cp[\Gamma,u]{ f }{ v }.	\label{eq_xx_dp_signs}
		}
	\end{lemma}
	\begin{proof}
		Let $a,b,c,d \in V_{\Gamma}$ be the not necessarily distinct endpoints of edges $e$ and $f$, with directions $\partial(e) = (a,b)$ and $\partial(f)=(c,d)$, and use letters $\sa \equiv \sa_a$, $\ssb \equiv \sa_b$, etc. With \refeq{eq_newxpol} the product is then
		\al{\nonumber
			\alpha_e\alpha_f x_{\Gamma}^ex_{\Gamma}^f &=
				\big(  \cp[\Gamma,u]{\ssb}{\sv} - \cp[\Gamma,u]{\sa}{\sv} \big) \big(  \cp[\Gamma,u]{\sd}{\sv} - \cp[\Gamma,u]{\ssc}{\sv} \big)\\[3mm]\nonumber
			& = \cp[\Gamma,u]{\ssb}{\sv}\cp[\Gamma,u]{\sd}{\sv} - \cp[\Gamma,u]{\sa}{\sv}\cp[\Gamma,u]{\sd}{\sv}  - \cp[\Gamma,u]{\ssb}{\sv}\cp[\Gamma,u]{\ssc}{\sv} 
												+ \cp[\Gamma,u]{\sa}{\sv}\cp[\Gamma,u]{\ssc}{\sv}\\[3mm]\nonumber
			& = \cp[\Gamma,u]{\sv}{\sv} \big( \cp[\Gamma,u]{\ssb}{\sd} - \cp[\Gamma,u]{\sa}{\sd} - \cp[\Gamma,u]{\ssb}{\ssc} + \cp[\Gamma,u]{\sa}{\ssc} \big)\\[1mm]
			&\quad - \kp\big( \cp[\Gamma,u]{\ssb\sv}{\sd\sv} - \cp[\Gamma,u]{\sa\sv}{\sd\sv} - \cp[\Gamma,u]{\ssb\sv}{\ssc\sv} + \cp[\Gamma,u]{\sa\sv}{\ssc\sv} \big).
		}
		The coefficient of $\cp[\Gamma,u]{\sv}{\sv}$ in the first summand is exactly the sum from \refeq{eq_spshort} with different labels, such that
		\al{\nonumber
			\cp[\Gamma,u]{\ssb}{\sd} - \cp[\Gamma,u]{\sa}{\sd} - \cp[\Gamma,u]{\ssb}{\ssc} + \cp[\Gamma,u]{\sa}{\ssc}
			&= \Phi_{\Gamma}^{\{u\},\{b,d\}} - \Phi_{\Gamma}^{\{u\},\{a,d\}} - \Phi_{\Gamma}^{\{u\},\{b,c\}} + \Phi_{\Gamma}^{\{u\},\{a,c\}}\\[3mm]
			&= \Phi_{\Gamma}^{\{a,c\},\{b,d\}} - \Phi_{\Gamma}^{\{b,c\},\{a,d\}}.\label{eq_dodgsonrem}
		}
		$\cp[\Gamma,u]{\sv}{\sv}$ itself is the Kirchhoff polynomial $\kp[\Gamma^{\bullet}] = \vssp$. The terms in the coefficient of $\kp$ can be interpreted as
		\al{
			\cp[\Gamma,u]{\sa\sv}{\sd\sv} = \cp[\Gamma^{\bullet},u]{\sa}{\sd},
		}
		such that they add up to
		\al{
			\Phi_{\Gamma^{\bullet}}^{\{a,c\},\{b,d\}} - \Phi_{\Gamma^{\bullet}}^{\{b,c\},\{a,d\}},
		}
		just like \refeq{eq_dodgsonrem}. After putting all of this together we have proved the first claim,
		\al{\nonumber
			\alpha_e\alpha_f x_{\Gamma}^ex_{\Gamma}^f
			&= \kp[\Gamma^{\bullet}] \big( \Phi_{\Gamma}^{\{\partial_-(e),\partial_-(f)\},\{\partial_+(e),\partial_+(f)\}} - \Phi_{\Gamma}^{\{\partial_+(e),\partial_-(f)\},\{\partial_-(e),\partial_+(f)\}} \big)\\[1mm]\nonumber
			&\  - \kp \big( \Phi_{\Gamma^{\bullet}}^{\{\partial_-(e),\partial_-(f)\},\{\partial_+(e),\partial_+(f)\}} - \Phi_{\Gamma^{\bullet}}^{\{\partial_+(e),\partial_-(f)\},\{\partial_-(e),\partial_+(f)\}}  \big)\\[3mm]
			&=  \kp[\Gamma^{\bullet}] \bp{e}{f} - \kp \bp[\Gamma^{\bullet}]{e}{f}.
		}
		For the second claim we simply remember \refeq{eq_lemma_bpcp},
		\al[*]{
			\bp{e}{f} = -\alpha_e\alpha_f \cp{e}{f}\qquad \text{ for all } e\neq f
		}
		and divide by $\alpha_e\alpha_f$ on both sides. For the final claim we return from the notation with $\Gamma^{\bullet}$ to Dodgson polynomials. Then we have
		\al{\nonumber
			x_{\Gamma}^ex_{\Gamma}^f 	&= -\kp[\Gamma^{\bullet}] \cp{e}{f} + \kp \cp[\Gamma^{\bullet}]{e}{f}\\\nonumber
						&\nquad \Longleftrightarrow\\
			\cp[\Gamma,u]{\sv}{\sv} \cp[\Gamma,u]{\sa_e}{\sa_f} + x_{\Gamma}^ex_{\Gamma}^f	&= \kp \cp[\Gamma,u]{\sa_e \sv}{\sa_f \sv}			
		}
		and the nature of the $x_{\Gamma}^e$ becomes obvious from a comparison with the Dodgson identity in \refeq{eq_DodgsonID} or \refeq{eq_DI_rewrite}.
	\end{proof}

	\subsubsection{The sum over $\cD_{\Gamma}^1$}
	
	Now that we know that the additional polynomials $x_{\Gamma}^e$ are also just Dodgson polynomials it seems reasonable to think that the ideas used for the previous summation can also be used here. We find that this is indeed the case, but there are some complications that we need to consider first.\\
	
	Note that $\cD_{\Gamma}^1$ has
	\al{
		|\cD_{\Gamma}^1| = \binom{2N}{2} (2N-3)!! = \frac{2N (2N-1)!}{2 (2N-2)!} (2N-3)!! = N (2N-1)!!
	}
	elements. They can be sorted into $(2N-1)!!$ groups of $N$ diagrams, each of which corresponds to a diagram $D\in \cD_{\Gamma}^0$ and all $N$ possible choices to remove one chord from it. Hence, a sum over $\cD_{\Gamma}^1$ can be split into a double sum over $\cD_{\Gamma}^0$ and chords of each diagram. The addition of the final chord always raises the total cycle number by one, by removing the tricoloured cycle to add one bicoloured cycle of each colour. With polynomials one has
	\al{\nonumber
		\sum_{D\in \cD_{\Gamma}^1} (-2)^{\tilde c(D)} & \Big( \prod_{(u,v)\in E_D^0}\nquad \cp{\sa_u}{\sa_v} \Big) \prod_{w\in V_D^{(2)}} x_{\Gamma}^w\\[2mm]\nonumber
			&= \sum_{D\in \cD_{\Gamma}^0} (-2)^{\tilde c(D)-1} \Big( \prod_{(u,v)\in E_D^0}\nquad \cp{\sa_u}{\sa_v} \Big) \sum_{(u,v)\in E_D^0} \frac{x_{\Gamma}^ux_{\Gamma}^v}{\cp{\sa_u}{\sa_v}}\\[2mm]
			&= -\sum_{D\in \cD_{\Gamma}^0} (-2)^{\tilde c(D)-1} \Big( \prod_{(u,v)\in E_D^0}\nquad \cp{\sa_u}{\sa_v} \Big) \sum_{(u,v)\in E_D^0} \frac{ \cp{\sa_u}{\sy}\cp{\sa_v}{\sy} }{\cp{\sa_u}{\sa_v}}.
			\label{eq_sumD0D1}
	}
	Here and for the rest of this section we still assume that $\Gamma$ has two external vertices, say $x,y\in V_{\Gamma}$, all Dodgson polynomials are with respect to the vertex $x$ with the incoming momentum $q_x= q$, i.e. $\cp{\sa_u}{\sa_v} \equiv \cp[\Gamma,x]{\sa_u}{\sa_v}$, and $\sy$ is the letter associated to the other vertex with the outgoing momentum.
	
	Define the set of diagrams $\cD|_{\cE}^1 \subset \cD_{\Gamma}^1$ restricted by a partition analogously to the previous case $\cD|_{\cE}^0$. Chords are only allowed between base edges belonging to the same part and the two free vertices are treated as if they had a chord between them. In other words, a diagram $D \in \cD_{\Gamma}^1$ is in $\cD|_{\cE}^1$ if and only if the corresponding diagram $D' \in \cD_{\Gamma}^0$ (resulting from addition of the missing chord) is in $\cD|_{\cE}^0$.
	
	The next lemma is the analogue of lemmata \ref{lemma_fullpart} and \ref{lemma_part_2}. Since the idea behind the proof is very similar we directly combine them into one. 
		
	\begin{lemma}\label{lemma_z1_lemma}
		Let $\cE \in \cP(E_D^1)$ be any partition of $1$-coloured base edges. Then
		\al{\nonumber
			\sum_{ (\su,\sv)\in \sP_2 }\nquad \sgn_{\cE}(\su,\sv) & \nquad \prod_{ (\su',\sv')\in\lambda_{\cE}(\su,\sv) } 
						\nqquad\cp{\su'}{\sv'}\Big( - |\cE| \kp[\Gamma^{\bullet}] + \kp \nqquad \sum_{ (\su',\sv')\in\lambda_{\cE}(\su,\sv) } \frac{\cp{\su'\sy}{\sv'\sy}}{\cp{\su'}{\sv'}} \Big)\\[3mm]
				& = (-1)^{1-|\cE|}(-\kp)^{|\cE|-N}\nquad\sum_{D\in\cD|_{\cE}^1} (-2)^{c_2^2(D)} \Big(\nquad\prod_{(u,v)\in E_D^0} \nquad\cp{\sa_u}{\sa_v}\Big)\!\! \prod_{w\in V_D^{(2)}}\nquad x_{\Gamma}^w,
		}
	\end{lemma}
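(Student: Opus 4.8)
The plan is to prove this by the same induction on $N$ that underlies lemmata \ref{lemma_fullpart} and \ref{lemma_part_2}, reducing the new bracket to the previous situation by means of lemma \ref{lemma_prodX}. First I would rewrite the bracket. Since $\lambda_{\cE}(\su,\sv)$ has exactly $|\cE|$ elements whenever it is nonempty, and since $\cp{\sy}{\sy}=\kp[\Gamma^{\bullet}]$ and $\cp{\su'\sy}{\sv'\sy}=\cp[\Gamma^{\bullet}]{\su'}{\sv'}$ (the latter being the identity used in the proof of lemma \ref{lemma_prodX}, where deleting the row and column labelled by $\sy$ amounts to identifying the external vertices), one can pull the product $\prod_{(\su',\sv')\in\lambda_{\cE}(\su,\sv)}\cp{\su'}{\sv'}$ inside the bracket and write the summand on the left-hand side, for each $(\su,\sv)\in\sP_2$, as $\sum_{(\su^*,\sv^*)\in\lambda_{\cE}(\su,\sv)}\big(\kp\,\cp{\su^*\sy}{\sv^*\sy}-\cp{\sy}{\sy}\,\cp{\su^*}{\sv^*}\big)\prod_{(\su',\sv')\neq(\su^*,\sv^*)}\cp{\su'}{\sv'}$, still weighted by $\sgn_{\cE}(\su,\sv)$.

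The next step is to expand the factor $\kp\,\cp{\su^*\sy}{\sv^*\sy}-\cp{\sy}{\sy}\,\cp{\su^*}{\sv^*}$ via the cofactor (Laplace) expansion displayed after Definition \ref{02c_def_dodgsoncyclepol}, applied to the letter $\sy$ in each of the two words. The unique term pairing $\sy$ with $\sy$ reproduces $\kp^{-1}\cp{\sy}{\sy}\cp{\su^*}{\sv^*}$ and therefore cancels the subtracted term; every surviving term pairs $\sy$ with an edge-letter in each word, so it carries exactly one factor $\cp{\sa_e}{\sy}$ (out of $\su^*$) and one factor $\cp{\sa_f}{\sy}$ (out of $\sv^*$), times a Dodgson polynomial of $\Gamma$ on the remaining $|\su^*|-1$ letters of each word, and an overall $\kp^{-1}$. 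By \refeq{eq_newxpol} and \refeq{eq_xx_dp_signs} the two extracted factors are $\pm x_{\Gamma}^{e}$ and $\pm x_{\Gamma}^{f}$; for a block $E^*$ consisting of a single $1$-coloured base edge this identity is precisely \refeq{eq_theo_rewrite} together with \refeq{eq_xx_dp_signs}, i.e.\ lemma \ref{lemma_prodX} itself, and in general it is its recursive upgrade. In particular, terms with no $\sy$-leg or with more than one pair of $\sy$-legs cancel out.

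With this in hand the proof reduces to a copy of the combined argument for lemmata \ref{lemma_fullpart} and \ref{lemma_part_2}. For the single-block partition $\cE=\{E_D^1\}$ it follows from lemma \ref{lemma_fullpart} once the bracket has been replaced as above, and the small cases (including $N=1$) are checked directly. For the inductive step I would fix the letter pair $(\sx_{11},\sx_{21})$ attached to a $1$-coloured base edge, expand the Dodgson polynomial of the chosen word pair as in \refeq{eq_Dodgson_exp}, and reinterpret the shorter word pairs as coming from a diagram with one further chord, exactly as in \refeq{eq_prop_exp}; the two $x_{\Gamma}$-legs produced above are inert under this step and become the contributions $\prod_{w\in V_D^{(2)}}x_{\Gamma}^{w}$ of the two free vertices, which forces the completed diagrams to lie in $\cD|_{\cE}^1$ (one chord short, free vertices exactly where the $\sy$'s landed). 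Signs, powers of $-\kp$, and the factors $(-2)$ attached to newly created $2$-coloured cycles are tracked verbatim as in lemma \ref{lemma_part_2}, including the same split between partitions into consecutive and into non-consecutive blocks of base edges and the Vandermonde count in the latter case.

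The main obstacle I expect is making this bookkeeping precise: one must show that the cofactor expansion of $\cp{\su^*\sy}{\sv^*\sy}$ distributes over the chords inside the block $E^*$ compatibly with the chord-diagram reindexing, so that each diagram of $\cD|_{\cE}^1$ arises exactly once and the extracted signs combine with $\sgn_{\cE}(\su,\sv)$ into the single global sign $(-1)^{1-|\cE|}$. Equivalently, one has to verify the cancellation that turns the ``per-block'' term $-|\cE|\kp[\Gamma^{\bullet}]$ into the correct contribution: after the reduction above, the pure $\kp[\Gamma^{\bullet}]$ pieces of the left-hand side (those not carrying $\sy$-legs) cancel against $-|\cE|\kp[\Gamma^{\bullet}]$, leaving a ``one-chord-opened'' analogue of lemma \ref{lemma_part_2}; on the right-hand side the same $\kp[\Gamma^{\bullet}]$ pieces produced by \refeq{eq_theo_rewrite} cancel the $-N\kp[\Gamma^{\bullet}]$ coming from the $N$ chords of the completed diagram via \refeq{eq_sumD0D1}, so that both sides collapse to that residual identity. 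Establishing this residual identity — the one-chord-opened version of lemma \ref{lemma_part_2} — by the same induction is where essentially all the remaining work lies.
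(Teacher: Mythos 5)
Your proposal follows essentially the same route as the paper: append the letter $\sy$, use the cofactor expansion so that the $\sy$--$\sy$ term produces $\cp{\sy}{\sy}=\kp[\Gamma^{\bullet}]$ and is cancelled by the $-|\cE|\kp[\Gamma^{\bullet}]$ term, identify the surviving factors $\cp{\sa_e}{\sy}\cp{\sa_f}{\sy}$ with $x_{\Gamma}^e x_{\Gamma}^f$ via lemma \ref{lemma_prodX}, and then rerun the arguments of lemmata \ref{lemma_fullpart} and \ref{lemma_part_2} with the free vertices marking where the $\sy$'s landed. The residual bookkeeping you flag as remaining work is handled at essentially the same level of brevity in the paper itself, so there is no substantive gap relative to the published argument.
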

	\begin{proof}
		Let $\sw_i = \sx_{i1}\dotsm \sx_{iN}$, $i=1,2$ be the two words from \refeq{eq_Dodgson_exp} in the proof of lemma \ref{lemma_fullpart}. Append the letter $\sy$ to the front of both words and consider again the expansion
		\al{\nonumber
			\kp^2 \cp{\sy\sw_1}{\sy\sw_2}
			&= \kp \cp{\sy}{\sy}\cp{\sw_1}{\sw_2} + \kp\sum_{i=1}^N (-1)^i \cp{\sx_{1i}}{\sy} \cp{\sy\sx_{11} \dotsm \hat\sx_{1i} \dotsm \sx_{1N}}{\sx_{21}\dotsm\sx_{2N}}\\[2mm]
			&= \kp[\Gamma^{\bullet}]\  \kp\cp{\sw_1}{\sw_2} - \sum_{i,j=1}^N (-1)^{i+j} \cp{\sx_{1i}}{\sy} \cp{\sx_{2j}}{\sy} \cp{\sx_{11} \dotsm \hat\sx_{1i} \dotsm \sx_{1N}}{ \sx_{21} \dotsm \hat\sx_{2j} \dotsm \sx_{2N} }.
			\label{eq_doubleexp}
		}
		The term $\kp\cp{\sw_1}{\sw_2}$ is precisely what was discussed in lemma \ref{lemma_fullpart} and
		\al[*]{
			(-1)^{i+j} \cp{\sx_{11} \dotsm \hat\sx_{1i} \dotsm \sx_{1N}}{ \sx_{21} \dotsm \hat\sx_{2j} \dotsm \sx_{2N} }
		}
		with $i=1$ was the coefficient of $\cp{\sx_{1i}}{\sx_{2j}}$ in its expansion. Hence, repeating the steps from that proof we immediately find the result for $|\cE|=1$:
		\al{\nonumber
			\sum_{ (\su_{\id},\sv_{\id})\in \sP_2 }\nquad \Big( \kp\cp{\su_{\id}\sy}{\sv_{\id}\sy} &- \kp[\Gamma^{\bullet}] \cp{\su_{\id}}{\sv_{\id}} \Big)\\
				& = (-\kp)^{1-N}\!\!\! \sum_{D\in\cD_{\Gamma}^1} (-2)^{c_2^2(D)} \Big(\!\! \prod_{(u,v)\in E_D^0}\nquad \cp{\sa_u}{\sa_v}\Big)\!\! \prod_{w\in V_D^{(2)}}\nquad x_{\Gamma}^w.
				\label{eq_z1_lemma_E1}
		}
		Replacing the Dodgson polynomials with $x_{\Gamma}^w$ (see \refeq{eq_xx_dp_signs}) flips the sign in front of the sum in \refeq{eq_doubleexp}. Since it is a double sum we get a factor of $2$. This, together with a $-1$ due to the factor $+\kp$ on the l.h.s. raises the power of $-2$ to $\tilde c(D)$. This can be interpreted as due to the additional tricoloured cycle that all diagrams $D\in \cD_{\Gamma}^1$ have.
			
		Now we can simply repeat the arguments of lemma \ref{lemma_part_2} to extend this to $|\cE|>1$ to finish the proof. Inclusion of the factor
		\al{
			\sum_{ (\su',\sv')\in\lambda_{\cE}(\su,\sv) } \frac{\cp{\su'\sy}{\sv'\sy}}{\cp{\su'}{\sv'}}
		}
		simply turns each summand into a sum of $|\cE|$ terms where one of the factors in each of them is replaced with the Dodgson polynomials with appended letters $\sy$. Expanding that factor as above yields the term $\cp{\sy}{\sy} = \kp[\Gamma^{\bullet}]$ in each summand, so we need $-|\cE|\kp[\Gamma^{\bullet}]$ to cancel it. The remaining terms can then be collected into groups of terms that either already factorise or can be reduced with the exact same arguments as in lemma \ref{lemma_part_2}.
	\end{proof}

	\subsubsection{An example}
		
	Before we move on to prove the main theorem we discuss an example to illustrate the previous lemma.\\
	
	Consider a sum over word pairs $(\su_{\id}, \sv_{\id}) \in \sP_j$ as before, but add in each Dodgson polynomial an additional letter $\sy$ representing a vertex. Due to this additional letter we constrain ourselves to an $N=2$ example, namely $n=(2)$. The word pairs are then $(\sa_1\sa_3, \sa_2\sa_4)$ and $(\sa_1\sa_4, \sa_2\sa_3)$, where we choose the colour $j$ to be that of the edges $(1,2)$ and $(3,4)$, and we expand the Dodgson polynomials as
	\al{\nonumber
			\kp^2\big(  \cp{\sa_1\sa_3\sy}{\sa_2\sa_4\sy} &+ \cp{\sa_1\sa_4\sy}{\sa_2\sa_3\sy} \big)\\[3mm]\nonumber
				=& 2 \cp{\sa_1}{\sa_2} \cp{\sa_3}{\sa_4} \cp{\sy}{\sy} - 2 \cp{\sa_1}{\sa_2} \cp{\sa_3}{\sy} \cp{\sa_4}{\sy} - 2 \cp{\sa_3}{\sa_4} \cp{\sa_1}{\sy} \cp{\sa_2}{\sy}\\\nonumber
				& - \cp{\sa_1}{\sa_3} \cp{\sa_2}{\sa_4} \cp{\sy}{\sy} + \cp{\sa_1}{\sa_3} \cp{\sa_2}{\sy} \cp{\sa_4}{\sy} + \cp{\sa_2}{\sa_4} \cp{\sa_1}{\sy} \cp{\sa_3}{\sy}\\
				& - \cp{\sa_1}{\sa_4} \cp{\sa_2}{\sa_3} \cp{\sy}{\sy} + \cp{\sa_1}{\sa_4} \cp{\sa_2}{\sy} \cp{\sa_3}{\sy} + \cp{\sa_2}{\sa_3} \cp{\sa_1}{\sy} \cp{\sa_4}{\sy}.
	}
	Note that there are $9$ distinct terms. Firstly, we have the $3=(2N-1)!!$ terms
	\al{
		\cp{\sy}{\sy} \big( 2 \cp{\sa_1}{\sa_2} \cp{\sa_3}{\sa_4} - \cp{\sa_1}{\sa_3} \cp{\sa_2}{\sa_4} - \cp{\sa_1}{\sa_4} \cp{\sa_2}{\sa_3} \big)
	}
	corresponding to diagrams $D\in \cD^2_2$ ($\simeq \cD_{\Gamma}^0$ for some suitable $\Gamma$). Dividing by $\kp$ one finds that this exactly agrees with the sum predicted in lemma \ref{lemma_fullpart} but with a factor $\cp{\sy}{\sy}$.
	
	The other $6=N(2N-1)!!$ terms are
	\al{\nonumber
			 - 2 \big( \cp{\sa_1}{\sa_2} \cp{\sa_3}{\sy} \cp{\sa_4}{\sy} + \cp{\sa_3}{\sa_4} \cp{\sa_1}{\sy} \cp{\sa_2}{\sy}\big)
				 & + \cp{\sa_1}{\sa_3} \cp{\sa_2}{\sy} \cp{\sa_4}{\sy} + \cp{\sa_2}{\sa_4} \cp{\sa_1}{\sy} \cp{\sa_3}{\sy}\\[2mm]
				 & + \cp{\sa_1}{\sa_4} \cp{\sa_2}{\sy} \cp{\sa_3}{\sy} + \cp{\sa_2}{\sa_3} \cp{\sa_1}{\sy} \cp{\sa_4}{\sy}.
	}
	and correspond to diagrams $D\in \cD^2_1$ with one missing chord. Alternatively we can write this as
	\al{\nonumber
			2 \big(\cp{\sa_1}{\sa_2} x_{\Gamma}^3x_{\Gamma}^4 + \cp{\sa_3}{\sa_4} x_{\Gamma}^1x_{\Gamma}^2 \big)
				 & - \cp{\sa_1}{\sa_3} x_{\Gamma}^2x_{\Gamma}^4 - \cp{\sa_2}{\sa_4} x_{\Gamma}^1x_{\Gamma}^3\\[2mm]
				 & - \cp{\sa_1}{\sa_4} x_{\Gamma}^2x_{\Gamma}^3 - \cp{\sa_2}{\sa_3} x_{\Gamma}^1x_{\Gamma}^4,
	}
	and we see that the factors are as predicted by lemma \ref{lemma_z1_lemma}, specifically the $|\cE|=1$ case in \refeq{eq_z1_lemma_E1}.\\

	We continue the example to a partition with two parts. Since we chose the colour of the word pairs to be that of $(1,2)$ and $(3,4)$, the partition needs to be of the other edges, i.e. $\cE = \{ \{(1,4)\}, \{(2,3)\} \}$. One finds
	\al{\nonumber
			-\kp \big( & \cp{\sa_1}{\sa_4}\cp{\sa_2\sy}{\sa_3\sy} + \cp{\sa_2}{\sa_3}\cp{\sa_1\sy}{\sa_4\sy} \big)\\[3mm]\nonumber
					 &= - 2 \cp{\sa_1}{\sa_4} \cp{\sa_2}{\sa_3} \cp{\sy}{\sy}
					 			+ \cp{\sa_3}{\sy} \cp{\sa_4}{\sy} + \cp{\sa_3}{\sa_4}  \cp{\sa_1}{\sy} \cp{\sa_2}{\sy},\\[3mm]
					& = - 2 \cp{\sa_1}{\sa_4} \cp{\sa_2}{\sa_3} \kp[\Gamma^{\bullet}]
								 - \cp{\sa_1}{\sa_4}  x_{\Gamma}^2 x_{\Gamma}^3 - \cp{\sa_2}{\sa_3}x_{\Gamma}^1 x_{\Gamma}^4.
	}
	The above results from the pairs $\lambda_{\cE}(\sa_1\sa_3,\sa_2\sa_4) = \{ (\sa_1,\sa_4), (\sa_3,\sa_2)\}$. Note that this also yields a sign $\sgn_{\cE}(\sa_1\sa_3,\sa_2\sa_4) =-1$ in front of $\kp$ on the l.h.s. since $\sa_4$ and $\sa_2$ are permuted when concatenating the two word pairs in $\lambda_{\cE}(\sa_1\sa_3,\sa_2\sa_4)$. The other word pair yields $\lambda_{\cE}(\sa_1\sa_4,\sa_2\sa_3) = \emptyset$, and thus no polynomial. We see that only two diagrams are in $\cD|_{\cE}^1$, since $(1,4)$ and $(2,3)$ are the only two possible chords that stay within one part of the partition $\cE = \{ \{(1,4)\}, \{(2,3)\} \}$. For the other term note that the $-2$ does not come from the number of cycles but from $|\cE|=2$ together with the signum.

	\paragraph{Proof of Theorem \ref{theo_main_2}.}
	The partition polynomial definitions \ref{def_PartPol} and \ref{def_PartPol_1} together with lemma \ref{lemma_z1_lemma} directly yield
	\al{\nonumber
		Z_{\Gamma}^1
			&=
			\frac{1}{2}\sum_{\cE \in \cP(E_D^1)} (-\kp)^{N-|\cE|} (|\cE|+1)! \\ 
			& \qquad \times \sum_{ (\su,\sv)\in \sP_2 } \sgn_{\cE}(\su,\sv) \nquad \prod_{ (\su',\sv')\in\lambda_{\cE}(\su,\sv)}\nqquad\cp{\su'}{\sv'}
			\bigg( |\cE|\vssp - \kp \nqquad \sum_{ (\su',\sv')\in\lambda_{\cE}(\su,\sv) } \frac{\cp{\su'\sy}{\sv'\sy}}{\cp{\su'}{\sv'}} \bigg)\\[3mm]
			& = \frac{1}{2}\sum_{\cE \in \cP(E_D^1)} (-1)^{|\cE|+1} (|\cE|+1)!  \sum_{D \in \cD|_{\cE}^1} (-2)^{c_2^2(D)} 
					\Big(\nquad\prod_{(u,v)\in E_D^0} \nquad\cp{u}{v}\Big)\!\! \prod_{w\in V_D^{(2)}}\nquad x_{\Gamma}^w.
	}
	Now we have almost the same situation as in theorem \ref{theo_main}, except for the summation over $\cD|_{\cE}^1$ instead of $\cD|_{\cE}^0$. We can again exploit the one-to-one correspondence between diagrams in $\cD|_{\cE}^0$ and subsets of $N$ diagrams in $\cD|_{\cE}^1$ to be able to use the same argument as before. This correspondence carries over to the restricted sets and we can split the sum over $\cD|_{\cE}^1$ into a sum over $\cD|_{\cE}^0$ and the chords of each diagram (see also \refeq{eq_sumD0D1}). We then have
	\al{\nonumber
		Z_{\Gamma}^1 
			& = \frac{1}{2} \sum_{\cE \in \cP(E_D^1)} (-1)^{|\cE|+1} (|\cE|+1)!  \sum_{D\in \cD|_{\cE}^0} (-2)^{c_2^2(D)-1} 
					\Big(\nquad\prod_{(u,v)\in E_D^0} \nquad\cp{u}{v}\Big) \sum_{(u,v)\in E_D^0} \frac{x_{\Gamma}^ux_{\Gamma}^v}{\cp{u}{v}}.
	}	
	Exchange of summations yields the same sum involving the Stirling numbers of the second kind, which allows us to find $c_2^1(D)$. We now have	
	\al{\nonumber
		Z_{\Gamma}^1
			&= \frac{1}{2} \sum_{D\in \cD_{\Gamma}^0} (-2)^{\tilde c(D)-1} \Big( \prod_{(u,v)\in E_D^0}\nquad \cp{u}{v} \Big) \sum_{(u,v)\in E_D^0}
												 \frac{x_{\Gamma}^ux_{\Gamma}^v}{\cp{u}{v}}\\[2mm]
			&= \frac{1}{2} \sum_{D\in \cD_{\Gamma}^1} (-2)^{\tilde c(D)} \bigg(\prod_{ (u,v)\in E_D^0 } \cp{u}{v}\bigg) \prod_{ w\in V_D^{(2)} } x_{\Gamma}^w.
	}
	where we take care to account for the reduced cycle number when translating back to a sum over $\cD_{\Gamma}^1$. This is exactly \refeq{eq_maintheo_2} and the proof is done.
	\qed

\section{Application to Feynman integrals}\label{sec_application}

	\subsection{Structure of the integrand}\label{subsec_structure}

	We now return to Feynman integrals and apply the theorems we just proved. In order to do so we first need to combine the results of \cite{Golz_2017_CyclePol} and \cite{Golz_2017_Traces}. Our starting point is 	the unrenormalised integral
	\al[*]{
		\phi_{\Gamma} = \int_{\RR_+^{|E_{\Gamma}|}} \rd\alpha_1\dotsm \rd\alpha_{E_{\Gamma}} \frac{\exp\Big( \frac{\ssp}{\kp} \Big)}{ \kp^{2+h_1(\Gamma)} } \sum_{k=0}^{h_1(\Gamma)} \frac{ I_{\Gamma}^{(k)} }{\kp^k}
	}
	from \refeq{eq_int_unren}. It is convenient to consider the $k$-th summand contracted with a metric tensor corresponding to the two external vertices $x,y\in V_{\Gamma}$. Then
	\al{
		g_{\mu_x\mu_y}I_{\Gamma}^{(k)} &= -\tr(\gamma^{\mu_1}\dotsm\gamma^{\mu_{2h_1}}) \bigg( \prod_{ (u,v)\in E_{D_{\Gamma}}^0 }\nquad g_{\mu_u\mu_v} \bigg)
							\sum_{D\in \cD_{\Gamma}^k} \bigg( \prod_{ (u,v)\in E_D^0 } \frac{g_{\mu_u\mu_v}}{2}\cp{u}{v} \bigg) \prod_{w\in V_D^{(2)}} q_{\mu_w} x_{\Gamma}^w
	}
	is just a rewriting of the integrand as worked out in \cite[eq. (72)]{Golz_2017_CyclePol}. The sum over fermion edge subsets and pairings is interpreted in terms of chord diagrams whose vertices are labelled by fermion edges and the additional metric tensor adds a chord such that we indeed have sums over $\cD_{\Gamma}^0$ in $I_{\Gamma}^{(0)}$ etc.
	
	As all throughout this article we stay in the special case of photon propagator graphs, Feynman gauge, and quenched QED, which becomes manifest in the terms above as follows:
	\begin{itemize}
		\item A propagator graph has only two external vertices with a single external momentum $q$, such that one has the factorised polynomials $q_{\mu_w} x_{\Gamma}^w$.
		\item Because it is a photon propagator there is one closed fermion cycle, which leads to the trace of Dirac matrices. Since we have quenched QED there is only exactly one such cycle and therefore no product of traces. 
		\item For a general gauge each $I_{\Gamma}^{(k)}$ itself contains another sum
			\al{
				I_{\Gamma}^{(k)} = \sum_{l=0}^{h_1-1} \Big( \frac{\varepsilon}{\kp}\Big)^l  I_{\Gamma}^{(k,l)}	\label{eq_gengauge}
			}
			where the gauge parameter $\varepsilon$ is such that Feynman gauge is $\varepsilon\to 0$. Each $I_{\Gamma}^{(k,l)}$ is structurally similar to $I_{\Gamma}^{(k,0)}$ but instead of a skeleton chord diagram $D_{\Gamma}$ with $h_1$ fixed chords it contains a sum over all possible such skeletons with $h_1-l$ fixed chords and then the usual chord diagram sums over all possible additions of further chords. Note that this in particular means that the sum over $k$ then also goes up to $2h_1-1$. Here we stick to Feynman gauge and identify $I_{\Gamma}^{(k)} \equiv I_{\Gamma}^{(k)}|_{\varepsilon=0} = I_{\Gamma}^{(k,0)}$ for $0\leq k \leq h_1$.
	\end{itemize}

	\subsubsection{Contraction}\label{subsec_contract}
	
	Next we apply the contraction theorem \cite[Theorem 3.9]{Golz_2017_Traces} to remove all Dirac matrices and metric tensors. We find
	\al{
		g_{\mu_x\mu_y}I_{\Gamma}^{(k)}
					&= 2^{h_1+1} (-q^2)^k \sum_{D\in \cD_{\Gamma}^k} (-2)^{\tilde c(D)} \bigg( \prod_{ (u,v)\in E_D^0 } \cp{u}{v} \bigg) \prod_{w\in V_D^{(2)}} x_{\Gamma}^w.
	}	
	The final integer factor is computed as follows. There are a total of $2h_1-k$ chords yielding $(-2)^{2h_1-k}$, but the $h_1-k$ non-fixed chords added to $D_{\Gamma}^0$ come with a factor $1/2$. Free vertices, corresponding to Dirac matrices contracted with a momentum instead of a metric tensor yield powers of $q^2$  and there is one more factor of $-2$, due to the one base cycle of $D_{\Gamma}$, whose sign is cancelled by the $-1$ from the Feynman rules for a fermion cycle. Altogether one finds
	\al{
		\underbrace{-(-2)^1}_{ 1 \text{ base cycle/trace}}
				\cdot \ \underbrace{ (-2)^{h_1} }_{ \substack{\text{fixed chords}\\ E_{D_{\Gamma}}^0 } }\ 
				\cdot \ \underbrace{ (-1)^{h_1-k} }_{ \substack{\text{other chords} \\ E_D^0 } } \ 
				\cdot \ \underbrace{ (q^2)^k }_{ \substack{ \text{free vertices} \\ V_D^{(2)} } } = 2^{h_1+1} (-q^2)^k.
	}
	
	For the actual integrand we are interested in $I_{\Gamma}^{(k)}$, not its contraction with $g_{\mu_x\mu_y}$, so we need to work out what the effect of this contraction is. To simplify notation, let $\mu$ and $\nu$, without subscript, denote the space time indices of the external vertices, previously written $\mu_x,\mu_y$. For $k=0$ there are no free vertices. In other words, the added chord between external vertices causes the contraction $g_{\mu\nu}g^{\mu\nu} = 4$ which is counteracted by a factor $2^{-2}$ for all $D\in \cD_{\Gamma}^0$. Hence,
	\al{
		I_{\Gamma}^{(0)} = g^{\mu\nu}2^{h_1-1} \sum_{D\in \cD_{\Gamma}^0} (-2)^{\tilde c(D)} \prod_{ (u,v)\in E_D^0 } \cp{u}{v} = g^{\mu\nu}2^{h_1} Z_{\Gamma}^0
	}
	with theorem \ref{theo_main}.\\
	
	For $k\geq 1$ the chord diagrams split into three disjoint subsets $\cD_{\Gamma,\bullet}^k \subset \cD_{\Gamma}^k$. The correction factor depends on the result of the Dirac matrix contraction without contraction of the external vertices. This, in turn, is characterised by $\sgn(x,y)$, the signum of the external vertices in the chord diagram (introduced in section \ref{sec_colours}):
	\al{
		\sgn(x,y)=
			\begin{cases}
				+1	 		\qquad\rightarrow \tr( \slashed q \slashed q q^{\mu}q^{\nu} )		& \sim q^2 g^{\mu\nu}\\[3mm]
				\hphantom{+} 0	\qquad\rightarrow \tr( \slashed q q^{\mu} \slashed q q^{\nu} )		&\sim 2q^{\mu}q^{\nu} - q^2g^{\mu\nu}\\[3mm] 
				-1	 		\qquad\rightarrow \tr( \slashed q q^{\mu})\tr( \slashed q q^{\nu} )	&\sim q^{\mu}q^{\nu}
			\end{cases}
	}
	Contracting the results on the r.h.s. with $g_{\mu\nu}$ one sees that the correction factor is a $-2$ with the exponent $1+\sgn(x,y)$ for all diagrams, including the $k=0$ case, in which only $+1$ occurs. We can define partial chord diagram sums
	\al{
		Z_{\Gamma,\bullet}^k \defeq \frac{1}{2}\sum_{D\in \cD_{\Gamma,\bullet}^k} (-2)^{\tilde c(D)} \bigg( \prod_{ (u,v)\in E_D^0 } \cp{u}{v} \bigg) \prod_{w\in V_D^{(2)}} x_{\Gamma}^w,
	}
	based on these subsets. Then $Z_{\Gamma,+}^0 = Z_{\Gamma}^0$ and $Z_{\Gamma,-}^0 = Z_{\Gamma,0}^0 = 0$, and for $k=1$ we have 
	\al{
		Z_{\Gamma}^1 = Z_{\Gamma,+}^1 + Z_{\Gamma,0}^1 + Z_{\Gamma,-}^1
	}
	with theorem \ref{theo_main_2}. For $k>1$ similar equalities should hold, assuming one defines the right $k$-th order partition polynomial, but, as we will see in the next section, for a superficially renormalised integral these two will suffice.
	
	With this notation the $k$-th summand now has become
	\al{\nonumber
		I_{\Gamma}^{(k)} &= 2^{h_1} (-q^2)^k \bigg( g^{\mu\nu} Z_{\Gamma,+}^k - 2\Big( 2\frac{q^{\mu}q^{\nu}}{q^2} - g^{\mu\nu} \Big) Z_{\Gamma,0}^k + 4\frac{q^{\mu}q^{\nu}}{q^2}Z_{\Gamma,-}^k \bigg)\\[3mm]
					&= 2^{h_1} (-q^2)^k \Big( g^{\mu\nu} \big( Z_{\Gamma,+}^k + 2Z_{\Gamma,0}^k \big)  + 4\frac{q^{\mu}q^{\nu}}{q^2}\big( Z_{\Gamma,-}^k - Z_{\Gamma,0}^k \big)  \Big),
	}
	and the full unrenormalised integrand is
	\al{
		I_{\Gamma} = 2^{h_1}\frac{e^{-\frac{\ssp}{\kp}}}{\kp^{h_1+2}} \Big(  g^{\mu\nu}Z_{\Gamma}^0 - \sum_{k=1}^{h_1} \frac{(-q^2)^{k-1}}{ \kp^k}
														\Big( q^2g^{\mu\nu} \big( Z_{\Gamma,+}^k + 2Z_{\Gamma,0}^k \big)  + 4q^{\mu}q^{\nu}\big( Z_{\Gamma,-}^k - Z_{\Gamma,0}^k \big)  \Big).
	}
	Now this may seem somewhat problematic -- we know the sum $Z_{\Gamma,+}^1 + Z_{\Gamma,0}^1 + Z_{\Gamma,-}^1$ (and can presumably generalise that knowledge to $k>1$). But what can we do about these combinations? As it turns out, we can exploit the transversality of the photon propagator to modify the integrand such that it only contains these types of sums, but first we want to renormalise it.

	\subsubsection{Renormalisation}
	
	We (superficially) renormalise this integrand in a BPHZ scheme following \cite{BrownKreimer_2013_AnglesScales}. Consider a generic integral of the same form as our Feynman integral, namely
	\al{
		\int_{\RR_+^n} \rd\alpha_1\dotsm\rd\alpha_n \sum_{k=-1}^m e^{-X}I_k, \label{eq_genint}
	}
	where $X$ and all $I_k$ are rational functions in $\alpha_i$ with overall degree (degree of numerator minus degree of denominator) $1$ and $k-n$ respectively.
	
	We can introduce an auxiliary variable $t$ by inserting $1= \int_0^{\infty} \delta(t-\sum_i \lambda_i \alpha_i) \rd t$, where each $\lambda_i\in \{0,1\}$ and at least one of them non-zero. Then scaling all Schwinger parameters by $\alpha_i \mapsto t\alpha_i$ turns \refeq{eq_genint} into	
	\al{
		\int_{\RR_+^n} \rd\alpha_1\dotsm\rd\alpha_n\ \delta(1-\sum_i \lambda_i\alpha_i) \sum_{k=-1}^m  I_k T_k \label{eq_delta}
	}	
	with
	\al{
		T_k =\int_0^{\infty}\rd t\  t^{k-1} e^{-tX} = X^{-k} \Gamma(k). \label{eq_gammafct}
	}
	The Gamma function has poles at negative integers and zero, corresponding here to quadratic and logarithmic divergences for $k=-1$ and $0$. They can be parametrised for further study by regularising the $t$-integration with an $\epsilon>0$:
	\al{	
		T_0 &\stackrel{\epsilon\to 0}{=} \int_{\epsilon}^{\infty} t^{-1} e^{-t X} \rd t	=  -\log \epsilon - \log X - \gamma_E + \mathcal O(\epsilon)\\[2mm]
		T_{-1} &\stackrel{\epsilon\to 0}{=} \int_{\epsilon}^{\infty} t^{-2} e^{-t X} \rd t = \frac{e^{-\epsilon X}}{\epsilon} -X \underbrace{\int_{\epsilon}^{\infty} t^{-1} e^{-tX} \rd t}_{=T_0} \label{eq_quaddiv}
	}
	We see that the divergent terms are isolated and a simple subtraction like
	\al{
		T_0 - T_0' = -\log \frac{X}{X'},\label{eq_simplesub}
	}
	is already enough to cancel a logarithmic divergence. The quadratic divergence requires first an on-shell subtraction to remove the term $\sim \epsilon^{-1}$, then the usual subtraction for the remaining logarithmic divergence.\\
	
	Note that, assuming convergence, the integral in \refeq{eq_delta} can equivalently be written projectively\footnote{ For a more thorough discussion of the bijection between $\RR_+^n$ and (a certain subset of) projective space induced by the introduction of the delta function see \cite[sec. 2.1.3]{Panzer_2015_PhD}. Of note in particular is the fact that it is completely independent of the choice of the parameters $\lambda_i$, which is sometimes called ``Cheng-Wu theorem''.}
	\al{
		\int_{\RR_+^n} \rd\alpha_1\dotsm\rd\alpha_n\ \delta(1-\sum_i \lambda_i\alpha_i) \sum_{k=-1}^m  I_k T_k
		 = \int_{\sigma_{\Gamma}} \Omega_{\Gamma} \sum_{k=-1}^m  I_k T_k,
	}
	where $\Omega_{\Gamma} = \sum_{i=1}^{n}(-1)^{i-1}\alpha_i\mathrm d\alpha_1\wedge\dotsm\wedge\widehat{\mathrm d\alpha_i}\wedge\dotsm\wedge\mathrm d\alpha_n$ and one integrates over the subset of real projective space in which all parameters are positive
	\al{
		\sigma_{\Gamma} = \{[\alpha_1: \dotso :\alpha_n]\ | \ \alpha_i > 0 \ \forall \ i=1,\dotsc, n \}.
	}
	For brevity we will use this notation from now on.\\

	We can now apply this to the integrand. Simply counting the degrees of the various homogenous polynomials that appear in numerator and denominator one finds that the $0$-th term is quadratically divergent, the next one logarithmically, and all others are convergent. Hence, the (superficially) renormalised integrand is
	\al{\nonumber
		I_{\Gamma}^R 	&= \underbrace{\log \frac{q^2}{\mu^2}}_{\bdefeq L}  
											\frac{2^{h_1}}{\kp^{h_1+3}} \Big(  q^2g^{\mu\nu} \vssp Z_{\Gamma}^0
													+  q^2g^{\mu\nu} \big( Z_{\Gamma,+}^1 + 2Z_{\Gamma,0}^1 \big)  + 4q^{\mu}q^{\nu}\big( Z_{\Gamma,-}^1 - Z_{\Gamma,0}^1 \big)  \Big)\\[3mm]
					&= \frac{2^{h_1} L}{\kp^{h_1+3}} \Big(  q^2g^{\mu\nu} \big( \vssp Z_{\Gamma}^0 + Z_{\Gamma,+}^1 + 2Z_{\Gamma,0}^1 \big)
						 						   - q^{\mu}q^{\nu}\big( 4Z_{\Gamma,0}^1 - 4Z_{\Gamma,-}^1 \big) \Big).
	}
	At this point we can now impose transversality on the integrand to simplify it. For the photon propagator transversality simply means that the amplitude, the sum of all relevant Feynman integrals, is proportional to $q^2g^{\mu\nu} - q^{\mu}q^{\nu}$. This is manifestly not true for individual Feynman integrals, let alone their integrands. However, since only their sum has physical meaning we can simplify redefine $I_{\Gamma}^R$ such that it already satisfies transversality. Whatever change this effects in the integral cancels when adding up all integrals. Here we get the condition
	\al{
		\vssp Z_{\Gamma}^0 + Z_{\Gamma,+}^1 + 2Z_{\Gamma,0}^1 \stackrel{!}{=} 4Z_{\Gamma,0}^1 - 4Z_{\Gamma,-}^1.	\label{eq_transvers_cond}
	}
	We could now naively just use either side of this in the integrand. However, we can also do better than that. Note that
	\al{
		\vssp Z_{\Gamma}^0 + Z_{\Gamma,+}^1 + 2Z_{\Gamma,0}^1
			= \big(\vssp Z_{\Gamma}^0 + \underbrace{Z_{\Gamma,+}^1 + Z_{\Gamma,0}^1 + Z_{\Gamma,-}^1}_{ =Z_{\Gamma}^1} \big) + \big( Z_{\Gamma,0}^1 - Z_{\Gamma,-}^1\big).
	}
	Now imposing the transversality condition \refeq{eq_transvers_cond} yields
	\al{
		\vssp Z_{\Gamma}^0  + Z_{\Gamma}^1 = 3\big( Z_{\Gamma,0}^1 - Z_{\Gamma,-}^1\big)
	}
	and the integrand becomes
	\al{\nonumber
		I_{\Gamma}^R 	&= \frac{2^{h_1} L}{\kp^{h_1+3}} \Big(  q^2g^{\mu\nu} \big( \vssp Z_{\Gamma}^0 + Z_{\Gamma,+}^1 + 2Z_{\Gamma,0}^1 \big)
						 						   - 4q^{\mu}q^{\nu}\big( Z_{\Gamma,0}^1 - Z_{\Gamma,-}^1 \big) \Big)\\[3mm]
					&= ( q^2g^{\mu\nu} - q^{\mu}q^{\nu} ) L \frac{ 2^{h_1+2} }{3} \frac{ \vssp Z_{\Gamma}^0  + Z_{\Gamma}^1 }{ \kp^{h_1+3} }.
	}
	We can also use the definitions of the partition polynomials to make the cancellations more obvious:
	\al{
		\frac{ \vssp Z_{\Gamma}^0  + Z_{\Gamma}^1 }{ \kp^{h_1+3} }
			 = \sum_{l=1}^{h_1} (-1)^{h_1-l} (l+1)! \bigg( \frac{\vssp Z_{\Gamma}^0\big|_l}{\kp^{l+3}} - \frac{1}{2}\frac{Z_{\Gamma}^1\big|_l}{\kp^{l+2}} \bigg)
	}

	%
	%
	%
	\subsection{Examples}
	
	\subsubsection{1-loop photon propagator}

	\begin{figure}[h]\begin{center}
		\includegraphics[scale=1.0]{./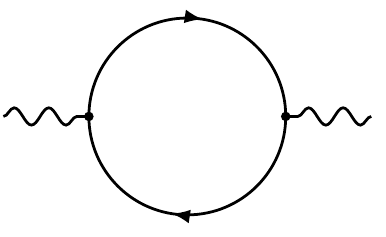}
		\caption[The 1-loop photon propagator]{The 1-loop photon propagator.}
		\label{06_img_1P}
	\end{center}\end{figure}
	
	The 1-loop case is the only primitive photon propagator and therefore the only example we can show in full without discussing subdivergences. The Kirchhoff polynomial is $\kp = \alpha_1+\alpha_2$ and $\vssp = \alpha_1\alpha_2$. The only possible cycle polynomial $\cp{1}{2}$ and $Z_{\Gamma}^0\big|_1$ are both just $1$ and $Z_{\Gamma}^1\big|_1 = -\vssp/\kp$.
	The integrand is therefore
	\al{
		\frac{ \vssp Z_{\Gamma}^0  + Z_{\Gamma}^1 }{ \kp^{h_1+3} }
			= (-1)^{1-1} (1+1)! \bigg( \frac{\vssp}{\kp^{4}} + \frac{1}{2}\frac{\vssp}{\kp^{4}} \bigg)
			= 3\frac{\vssp}{\kp^{4}}
			= 3\frac{ \alpha_1\alpha_2}{ (\alpha_1+\alpha_2)^4}
	}
	and the renormalised integral is
	\al{\nonumber
		\phi_{\Gamma}^R &= (q^2g^{\mu\nu} - q^{\mu}q^{\nu}) \frac{2^{h_1+2}}{3} L \int_{\sigma_{\Gamma}} \Omega_{\Gamma} 
											\frac{ \vssp Z_{\Gamma}^0 + Z_{\Gamma}^1 }{\kp^{h_1+3}}\\[3mm]\nonumber
					 &= 8L(q^2g^{\mu\nu} - q^{\mu}q^{\nu}) \int_{\sigma_{\Gamma}} \frac{ \alpha_1\alpha_2}{ (\alpha_1+\alpha_2)^{4}}\Omega_{\Gamma}\\
						&= \frac{4}{3}L (q^2g^{\mu\nu} - q^{\mu}q^{\nu}).
	}
	The factor $4/3$ is exactly the 1-loop coefficient of the QED beta function in the conventions of \cite{BroadhurstDelbourgoKreimer_1996_Unknotting, GKLS_1991_QED, Rosner_1966_QED}.

	\subsubsection{3-loop photon propagators}
	
	For Feynman graphs with more than one loop we can not compute the full integral without discussing subdivergences and including the corresponding terms of Zimmermann's forest formula for a fully renormalised integrand. However, we can show what the superficially renormalised part of the integrand looks like and especially emphasise the cancellations and reductions in size due to the two summation theorems.
			
	\begin{figure}[H]\begin{center}
		\begin{subfigure}{0.2\textwidth}\begin{center}
			\includegraphics[height=0.1\textheight]{./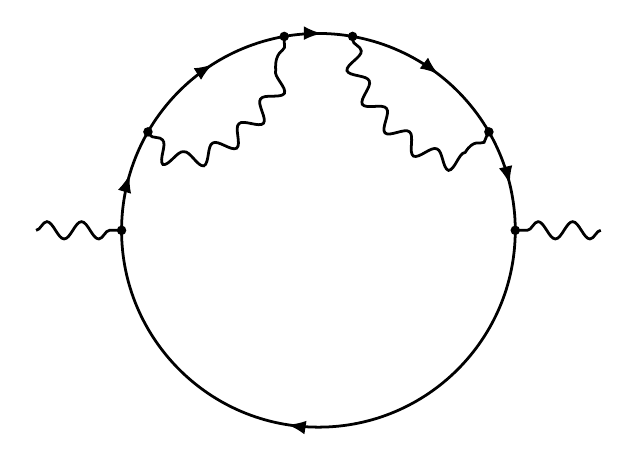}
			\caption{}
			\label{06a_img_3_1}
		\end{center}\end{subfigure}
		\begin{subfigure}{0.2\textwidth}\begin{center}
			\includegraphics[height=0.1\textheight]{./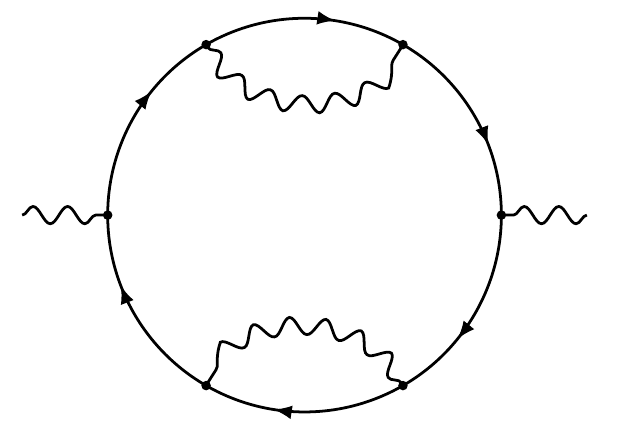}
			\caption{}
			\label{06a_img_3_2}
		\end{center}\end{subfigure}
		\begin{subfigure}{0.2\textwidth}\begin{center}
			\includegraphics[height=0.1\textheight]{./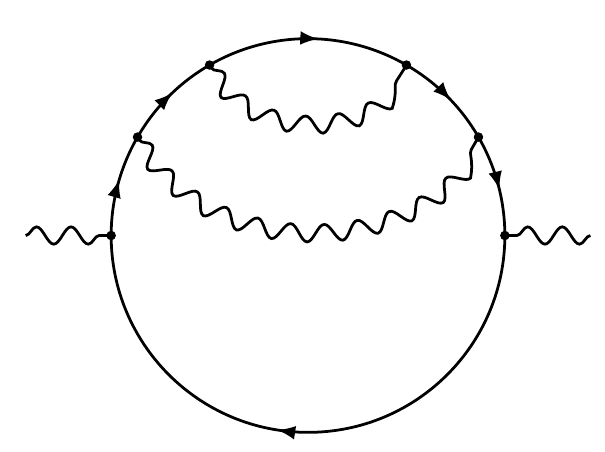}
			\caption{}
			\label{06a_img_3_3}
		\end{center}\end{subfigure}
		\begin{subfigure}{0.2\textwidth}\begin{center}
			\includegraphics[height=0.1\textheight]{./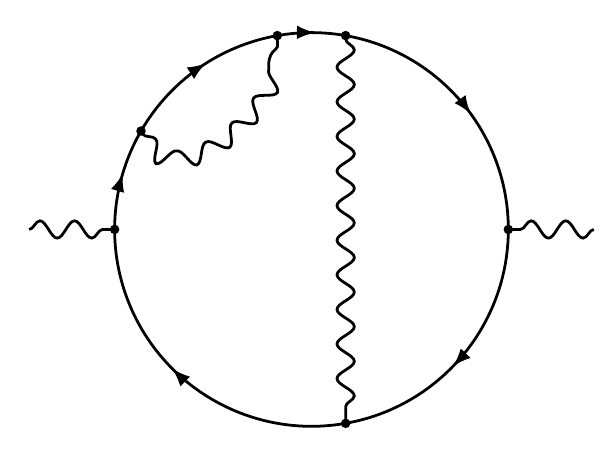}
			\caption{}
			\label{06a_img_3_4}
		\end{center}\end{subfigure}
		\begin{subfigure}{0.2\textwidth}\begin{center}
			\includegraphics[height=0.1\textheight]{./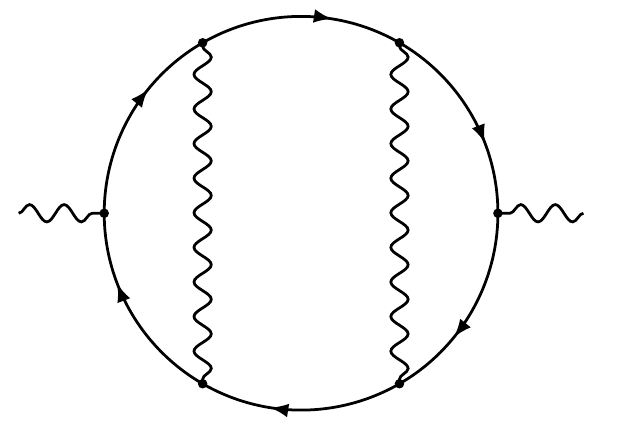}
			\caption{}
			\label{06a_img_3_5}
		\end{center}\end{subfigure}
		\begin{subfigure}{0.2\textwidth}\begin{center}
			\includegraphics[height=0.1\textheight]{./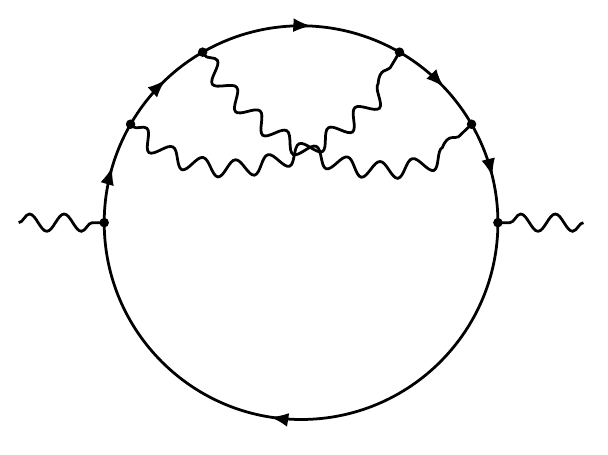}
			\caption{}
			\label{06a_img_3_6}
		\end{center}\end{subfigure}
		\begin{subfigure}{0.2\textwidth}\begin{center}
			\includegraphics[height=0.1\textheight]{./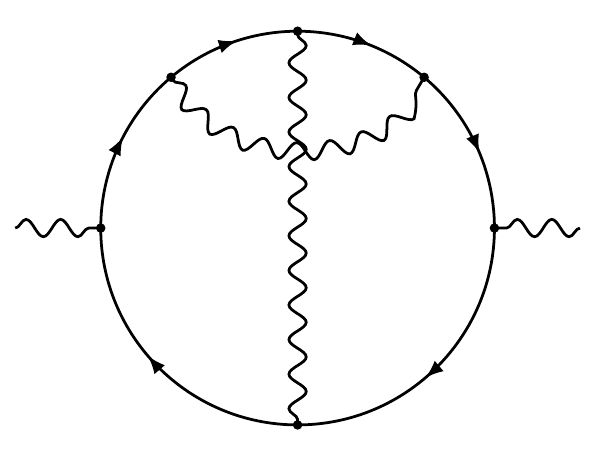}
			\caption{}
			\label{06a_img_3_7}
		\end{center}\end{subfigure}
		\begin{subfigure}{0.2\textwidth}\begin{center}
			\includegraphics[height=0.1\textheight]{./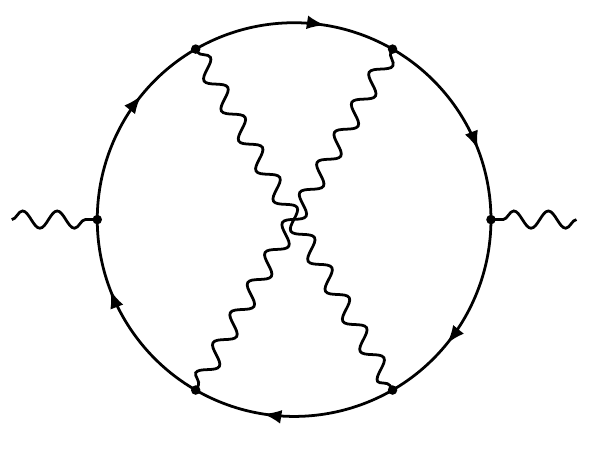}
			\caption{}
			\label{06a_img_3_8}
		\end{center}\end{subfigure}
		\caption{The 3-loop topologies with one fermion cycle.}
	\end{center}\end{figure}

	At two loops the examples are still rather simple so we go to three loops, where the integrals start to become much more involved. For example, $Z_{\Gamma}^0$ is now already a polynomial of degree $h_1(h_1-1) = 6$, compared to just $2$ at two loops, and the number of chord diagrams rises to $15$ (in Feynman gauge, and already hundreds in general gauge) such that the reduction to $h_1=3$ small summands in the partition polynomial now becomes significant. All examples were computed with Maple \footnote{Maple\texttrademark\ is a trademark of Waterloo Maple Inc. \cite{Maple}.}.
	
	We focus on the graph in \reffig{06a_img_3_8}.	
	\begin{figure}\begin{center}
		\includegraphics[width=0.9\textwidth]{./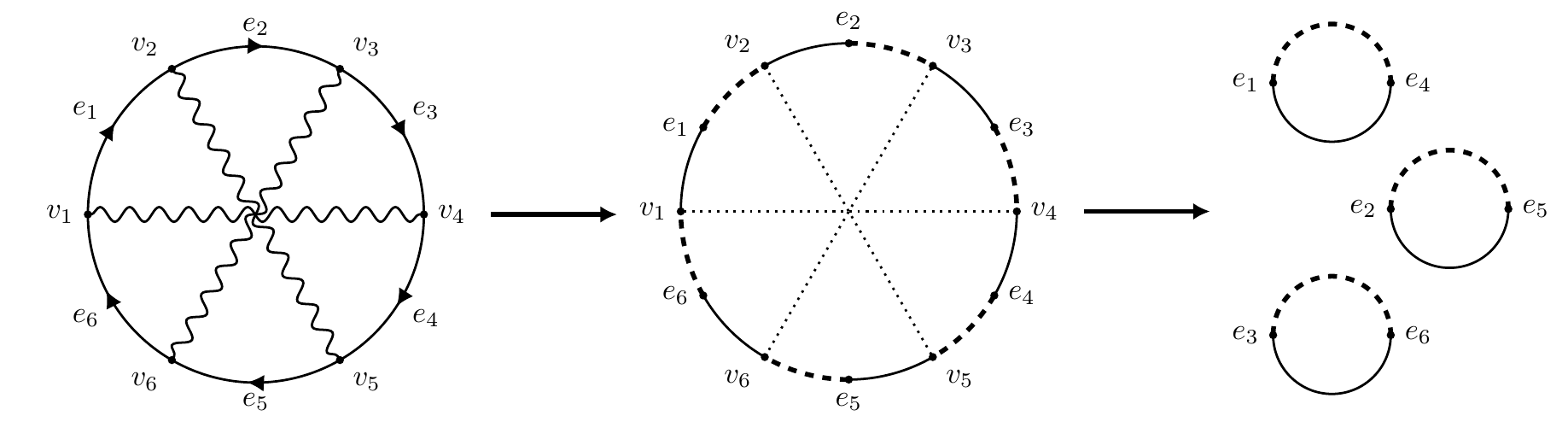}
		\caption{From left to right: The graph $\Gamma$ from \reffig{06a_img_3_8} with its external photon edge closed - the corresponding chord diagram $D_{\Gamma}$ with fixed chords corresponding to all photon edges - the projection $D_{\Gamma}^0 = \pi_0(D_{\Gamma})$.}
		\label{img_38_proj}
	\end{center}\end{figure}
	Label edges and vertices as in \reffig{img_38_proj} with $v_1,v_4$ being the external vertices and $e_7=(v_2,v_5)$ and $e_8 = (v_3,v_6)$ the two photon edges. The Kirchhoff and second Symanzik polynomial consist of $36$ and $45$ monomials, so we refrain from writing them out in full here. An example for a cycle polynomial is:
	\al{
		\cp{1}{6} = \alpha_2(\alpha_3 + \alpha_4 + \alpha_5 + \alpha_8) +  (\alpha_3+\alpha_4)(\alpha_5 + \alpha_7 + \alpha_8) + \alpha_7(\alpha_5+\alpha_8)
	}
	This one has so many terms since $e_1$ and $e_6$ share all their cycles, because they are incident to the same external (i.e. 2-valent) vertex of $\Gamma$. In other words, $\cp{1}{6} = \cp{1}{1} = \cp{6}{6}$. Others are simpler:
	\al{
		\cp{1}{3} = \cp{1}{4} = \cp{3}{6} = \cp{4}{6} = -\alpha_2\alpha_5 + \alpha_7\alpha_8
	}
	Here we have an example of monomials with different signs, which is due to the fact that the two corresponding cycles are twisted relative to each other (as discussed in the proof of proposition \ref{05a_prop_eqpol}). One cycle is the fermion cycle, the other crosses via both photon edges.

	\paragraph{The partition polynomial $Z_{\Gamma}^0$.}
	The word pairs we get from $D_{\Gamma}^0$ are
	\al[*]{
		 ( \sa_1\sa_2\sa_3, \sa_4\sa_5\sa_6 ), \quad ( \sa_1\sa_5\sa_3, \sa_4\sa_2\sa_6 ),\quad
		 ( \sa_1\sa_2\sa_6, \sa_4\sa_5\sa_3 ), \quad ( \sa_1\sa_5\sa_6, \sa_4\sa_2\sa_3 ).
	}
	For $Z_{\Gamma}^0\big|_1$ we have the single partition $\cE = \{ \{ (e_1, e_4), (e_2, e_5), (e_3, e_6) \} \}$, such that
	\al{\nonumber
		Z_{\Gamma}^0\big|_1 &= \cp{\sa_1\sa_2\sa_3}{\sa_4\sa_5\sa_6} + \cp{\sa_1\sa_2\sa_6}{\sa_4\sa_5\sa_3} 
							+ \cp{\sa_1\sa_5\sa_3}{\sa_4\sa_2\sa_6} + \cp{\sa_1\sa_5\sa_6}{\sa_4\sa_2\sa_3}\\[3mm]
						&= 1+ 0 + 0 + 1 = 2.
						\label{eq_Z01}
	}
	\begin{remark}
		Note that the two vanishing Dodgson polynomials are those that have the letter pairs $\sa_1/\sa_6$ and $\sa_3/\sa_4$ within the same word. While these are different letters we have seen above that their associated edges are equivalent as far as the cycle space of $\Gamma$ is concerned. This is reflected in the behaviour of the Dodgson polynomials, which vanish as if the letters were identical.
	\end{remark}

	For $Z_{\Gamma}^0\big|_2$ we have three partitions with two parts, consisting of one and two edges respectively. For $\cE_1 = \{ \{ (e_1, e_4) \}, \{ (e_2, e_5), (e_3, e_6) \} \}$ one has
	\al[*]{
		\lambda_{\cE_1}( \sa_1\sa_2\sa_3, \sa_4\sa_5\sa_6 ) &= \{ (\sa_1, \sa_4), (\sa_2\sa_3, \sa_5\sa_6) \},\\[2mm]
		\lambda_{\cE_1}( \sa_1\sa_2\sa_6, \sa_4\sa_5\sa_3 ) &= \{ (\sa_1, \sa_4), (\sa_2\sa_6, \sa_5\sa_3) \},\\[2mm]
		\lambda_{\cE_1}( \sa_1\sa_5\sa_3, \sa_4\sa_2\sa_6 ) &= \{ (\sa_1, \sa_4), (\sa_5\sa_3, \sa_2\sa_6) \},\\[2mm]
		\lambda_{\cE_1}( \sa_1\sa_5\sa_6, \sa_4\sa_2\sa_3 ) &= \{ (\sa_1, \sa_4), (\sa_5\sa_6, \sa_2\sa_3) \}.
	}
	All permutations give positive signs and the corresponding polynomial is
	\al{
		2\cp{\sa_1}{\sa_4} \big( \cp{\sa_2\sa_3}{\sa_5\sa_6} + \cp{\sa_2\sa_6}{\sa_5\sa_3} \big) = -2( -\alpha_2\alpha_5 + \alpha_7\alpha_8 )( \alpha_7 + \alpha_8 ),
	}
	which is also the polynomial one finds analogously for $\cE_3 = \{ \{ (e_1, e_4), (e_2, e_5) \},$ $\{ (e_3, e_6) \} \}$. For the third, $\cE_2 = \{ \{ (e_1, e_4), (e_3, e_6) \}, \{ (e_2, e_5) \} \}$, the words are
	\al[*]{
		\lambda_{\cE_2}( \sa_1\sa_2\sa_3, \sa_4\sa_5\sa_6 ) &= \{ (\sa_2, \sa_5), (\sa_1\sa_3, \sa_4\sa_6) \},\\[2mm]
		\lambda_{\cE_2}( \sa_1\sa_2\sa_6, \sa_4\sa_5\sa_3 ) &= \{ (\sa_2, \sa_5), (\sa_1\sa_6, \sa_4\sa_3) \},\\[2mm]
		\lambda_{\cE_2}( \sa_1\sa_5\sa_3, \sa_4\sa_2\sa_6 ) &= \{ (\sa_5, \sa_2), (\sa_1\sa_3, \sa_4\sa_6) \},\\[2mm]
		\lambda_{\cE_2}( \sa_1\sa_5\sa_6, \sa_4\sa_2\sa_3 ) &= \{ (\sa_5, \sa_2), (\sa_1\sa_6, \sa_4\sa_3) \}.
	}
	Again, the total sign is always positive but this time with $\sgn(\sigma)=-1=\sgn(\sigma')$, where e.g. $\sa_2\sa_1\sa_3 = \sigma(\sa_1\sa_2\sa_3)$ and $\sa_5\sa_4\sa_6 = \sigma'(\sa_4\sa_5\sa_6)$. The polynomial is
	\al{\nonumber
		2\cp{\sa_2}{\sa_5} & \big( \cp{\sa_1\sa_3}{\sa_4\sa_6} + \cp{\sa_1\sa_6}{\sa_4\sa_3} \big)\\
				& = 2( -(\alpha_1+\alpha_6)(\alpha_3+\alpha_4) + \alpha_7\alpha_8 )( \alpha_2 + \alpha_5 + \alpha_7 + \alpha_8 ).
	}
	The last polynomial is always of the same form. The only partition $\cE = \{ \{ (e_1, e_4)\},$ $\{ (e_2, e_5) \}, \{ (e_3, e_6) \} \}$ has each base edge in a separate part such that
	\al[*]{
		\lambda_{\cE}( \sa_1\sa_2\sa_3, \sa_4\sa_5\sa_6 ) &= \{ (\sa_1, \sa_4), (\sa_2, \sa_5), (\sa_3, \sa_6) \},\\[2mm]
		\lambda_{\cE}( \sa_1\sa_2\sa_6, \sa_4\sa_5\sa_3 ) &= \{ (\sa_1, \sa_4), (\sa_2, \sa_5), (\sa_6, \sa_3) \},\\[2mm]
		\lambda_{\cE}( \sa_1\sa_5\sa_3, \sa_4\sa_2\sa_6 ) &= \{ (\sa_1, \sa_4), (\sa_5, \sa_2), (\sa_3, \sa_6) \},\\[2mm]
		\lambda_{\cE}( \sa_1\sa_5\sa_6, \sa_4\sa_2\sa_3 ) &= \{ (\sa_1, \sa_4), (\sa_5, \sa_2), (\sa_6, \sa_3) \}.
	}
	and
	\al{\nonumber
		Z_{\Gamma}^0\big|_3 &= 4\cp{\sa_1}{\sa_4}\cp{\sa_2}{\sa_5}\cp{\sa_3}{\sa_6}\\
						&= ( -\alpha_2\alpha_5 + \alpha_7\alpha_8 )^2( -(\alpha_1+\alpha_6)(\alpha_3+\alpha_4) + \alpha_7\alpha_8 ).
	}
	\begin{remark}
		The fact that the $\lambda_{\cE}$ are all non-vanishing and often the same is due to the fact that $D_{\Gamma}^0$ has the base cycle structure $n=(1,1,1)$ (see \reffig{img_38_proj}). In this case the partitions $\cP(E_D^1)$ and words $\sP_2$ are in a sense maximally compatible, since each $1$-coloured base edge has a $2$-coloured base edge partner between the exact same vertices. In other words, $\cP(E_D^1) = \cP(E_D^2)$ and $\sP_1 = \sP_2$.
	\end{remark}
	
	The number of terms in the three polynomials $Z_{\Gamma}^0\big|_k$ is $1$, $22$ and $15$ respectively. For comparison, the full chord diagram sum consists of 437 monomials. Here we especially also see how hidden the factorisation of the Kirchhoff polynomials can be. The expression $a\kp^2Z_{\Gamma}^0\big|_1 + b\kp Z_{\Gamma}^0\big|_2 + cZ_{\Gamma}^0\big|_3$ should have $36^2\cdot 1 + 36\cdot 22 + 15 = 2103$ terms. But the same monomials may of course occur in different parts and add up or cancel to yield the 437 that are left in the sum, obscuring the pattern. See also table \ref{tab_z0} for the reduction observed for other graphs.

\begin{table}[h]
\centering
\begin{tabular}{|l|c|c|c|c|c|c|c|c|}
\hline
\ 					& (a)	& (b)	& (c)	& (d)	& (e)		& (f)		& (g)		& (h)		\vphantom{\Big|} \\[2mm]\hline
$\#Z_{\Gamma}^0$		& 9	& 9	& 44	& 84	& 348	& 231	& 448	& 437	\vphantom{\bigg|} \\[2mm]
$\#Z_{\Gamma}^0/\kp^6$	& 9	& 9	& 44	& 15	& 16 		& 72  	& 53  	& 38  	\vphantom{\bigg|} \\\hline
\end{tabular}
\caption{Number of terms in $Z_{\Gamma}^0$ compared to $Z_{\Gamma}^0/\kp^6$ after cancellations for all graphs from \reffig{06a_img_3_1} to \ref{06a_img_3_8}.}
\label{tab_z0}
\end{table}

\vspace{-5mm}
	\paragraph{The partition polynomial $Z_{\Gamma}^1$.}
	
	We do not need to repeat the discussion of partitions etc. but can simply sum over all possible ways to append a letter to the word pairs. For $Z_{\Gamma}^1\big|_1$ this means we take $Z_{\Gamma}^0\big|_1$ from \refeq{eq_Z01} and get
	\al{\nonumber
		Z_{\Gamma}^1\big|_1 + \frac{\vssp}{\kp}Z_{\Gamma}^0\big|_1 
							=  \cp{\sa_1\sa_2\sa_3\sy}{\sa_4\sa_5\sa_6\sy} + \cp{\sa_1\sa_2\sa_6\sy}{\sa_4\sa_5\sa_3\sy} 
							+ \cp{\sa_1\sa_5\sa_3\sy}{\sa_4\sa_2\sa_6\sy} + \cp{\sa_1\sa_5\sa_6\sy}{\sa_4\sa_2\sa_3\sy}.
	}
	This already has 92 terms, so explicitly giving it here in terms of Schwinger parameters would not be particularly enlightening. Similarly one finds e.g.
	\al{\nonumber
		Z_{\Gamma}^1\big|_3 &+ 3\frac{\vssp}{\kp}Z_{\Gamma}^0\big|_3\\
						&=  4\cp{\sa_1\sy}{\sa_4\sy}\cp{\sa_2}{\sa_5}\cp{\sa_3}{\sa_6}
						+ 4\cp{\sa_1}{\sa_4}\cp{\sa_2\sy}{\sa_5\sy}\cp{\sa_3}{\sa_6}
						+ 4\cp{\sa_1}{\sa_4}\cp{\sa_2}{\sa_5}\cp{\sa_3\sy}{\sa_6\sy},
	}
	which has 1551 terms. We see that these expressions are still quite large, but they nonetheless represent a massive reduction in size compared to the full integrand without cancellations. In  table \ref{tab_total} the superficially renormalised integrands with and without cancellations are compared and one finds a reduction by roughly one order of magnitude.
	
\begin{table}[h]
\centering
\begin{tabular}{|l|c|c|c|c|c|c|c|c|}
\hline
\ 										& (a)		& (b)		& (c)		& (d)		& (e)		& (f)		& (g)		& (h) 	\vphantom{\Big|} \\[2mm]\hline
$\#(\vssp Z_{\Gamma}^0+Z_{\Gamma}^1)$		& 528	& 681	& 1937	& 4698	& 17641	& 8210	& 22627	& 25575  	\vphantom{\bigg|} \\[2mm]
$\#(\vssp Z_{\Gamma}^0+Z_{\Gamma}^1)/\kp^6$	& 88		& 329	& 387	& 513	& 1106 	& 782  	& 1637  	& 2439  	\vphantom{\bigg|} \\\hline
\end{tabular}
\caption{Total number of terms in the superficially renormalised integrand with and without cancellations for all graphs from \reffig{06a_img_3_1} to \ref{06a_img_3_8}.}
\label{tab_total}
\end{table}

	\newpage
	\bibliographystyle{plainurl}
	\bibliography{bibliography}

\end{document}